\documentclass[11pt,a4paper]{article}
\usepackage[margin=24mm]{geometry}
\usepackage[T1]{fontenc}
\usepackage{lmodern,microtype}
\usepackage{amsmath,amssymb,amsthm,mathtools}
\usepackage{enumitem,booktabs,array,xcolor,needspace,hyperref}
\usepackage{tikz}
\hypersetup{colorlinks=true,linkcolor=blue!55!black,citecolor=blue!55!black,urlcolor=blue!55!black}
\hypersetup{
  pdftitle={Step Recursion: Rational Strides and Exact Inclusion Orders over the Reals},
  pdfauthor={Kirill Osipov},
  pdfsubject={Rational stride classification and bounded-intersection reductions to integer predicate equality},
  pdfkeywords={step recursion, rational strides, bounded recursion, grid codes, exact intersections, Grzegorczyk predicate classes, arithmetical hierarchy}
}
\newtheorem{theorem}{Theorem}[section]
\newtheorem{lemma}[theorem]{Lemma}
\newtheorem{proposition}[theorem]{Proposition}
\newtheorem{corollary}[theorem]{Corollary}
\newtheorem{remark}[theorem]{Remark}
\newtheorem{example}[theorem]{Example}
\newtheorem{definition}[theorem]{Definition}
\newcommand{\E}{\mathsf E}
\newcommand{\N}{\mathbb N_0}
\newcommand{\C}{\mathsf C}
\newcommand{\nz}{\operatorname{nz}}
\newcommand{\zz}{\operatorname{z}}
\newcommand{\NN}{\mathbb N}
\newcommand{\RR}{\mathbb R_{\ge0}}
\newcommand{\B}{\mathcal B}
\newcommand{\U}{\mathcal U}

\newcommand{\rh}[1]{\rho_{#1}}
\newcommand{\Dt}[1]{D_{#1}}

\DeclareMathOperator{\den}{den}
\DeclareMathOperator{\lcmop}{lcm}
\DeclareMathOperator{\gcdop}{gcd}
\title{Step Recursion: Rational Strides and\\Exact Inclusion Orders over the Reals}
\author{Kirill Osipov}
\begin{document}\maketitle
\begin{abstract}
We study finite-terminating, address-only step recursion on the nonnegative reals, from zero, unit successor and projections. For reduced positive rational strides $s=a/q$ and $t=b/v$, we prove $\mathcal C_s\subseteq\mathcal C_t$ if and only if $q$ divides $v$, for both the bounded and unbounded algebras $\mathcal C\in\{\mathcal B,\mathcal U\}$. The bounded compiler preserves earlier global majorants on every auxiliary input. Mixed rational stride sets are classified by least common denominators and supernatural denominators. Primitive-recursive translated-coordinate grid codes characterize the unbounded algebras and yield $\mathcal U_{1/q}\cap\mathcal U_{1/v}=\mathcal U_{1/\gcd(q,v)}$. For bounded intersections, we prove an exact reduction: all corresponding identities hold if and only if the single identity $\mathcal B_{1/2}\cap\mathcal B_{1/3}=\mathcal B_1$ holds, equivalently if the Boolean parts of the integer successor and addition algebras under scalar bounded recursion coincide. A Boolean reconstruction reduces general functions to predicates; a clipped fine-grid simulation with a frozen global resource bound identifies the union of predicate precision classes with the addition level. These reductions do not decide the resulting predicate equality, so literal bounded intersections remain unresolved. We also prove that unbounded-program equality is $\Pi^0_1$-complete and extensional bounded-algebra membership is $\Sigma^0_2$-complete. The model has no free real arithmetic or Zeno limits; semantic majorant validity is distinguished from syntax transformation.
\end{abstract}
\noindent\textbf{Keywords:} step recursion; real function algebras; rational translation; bounded recursion; primitive-recursive grid codes; supernatural numbers; exact intersections; program equivalence; arithmetical hierarchy.\\
\textbf{MSC 2020:} 03D20, 03D15, 68Q05, 39B12.

\section{Introduction}
Step recursion replaces the integer predecessor by a designated descent
map while allowing each transition to see only the predecessor
\emph{address}. Discrete versions distinguish schedules with similar
asymptotic depths \cite{Osipov2026}. Here the domain is the nonnegative
half-line and each descent is a truncated rational translation.

Our minimal basis consists only of zero, unit successor and projections.
Every program is locally a constant or a translated input coordinate
(Lemma~\ref{lem:affine}), and every bounded program has a global
one-coordinate bound (Lemma~\ref{lem:onecoord}); nevertheless, unbounded
recursion simulates all primitive-recursive integer programs
(Lemma~\ref{lem:pr-syntax}). This isolates the effect of the recursion
operator from that of free real arithmetic. The translation stride $s$ is a distance
on the half-line; the integer strides in the discrete companion papers
count iterates of a fixed step function. The denominator order proved
here concerns the former parameter. Definability of a predecessor
alone does not imply closure under recursion along it, and a bounded
simulation must supply earlier majorants on \emph{every} auxiliary input,
not merely on its intended simulation diagonal.

\paragraph{Antecedents and contribution.}
Sparse iteration and bounded recursion have a classical integer lineage:
the LOOP characterization of Meyer and Ritchie
\cite{MeyerRitchie1967}, restricted iteration and unary bases
\cite{GermanoMazzanti1988,Severin2008,Mazzanti2013,Mazzanti2005}, and
comparisons of primitive sets and restricted programs
\cite{Kfoury1980,IbarraLeininger1981,Matos2015}.
Our integer simulation adapts these methods to real grids; sparse
programming and affine invariants themselves are not novelty claims.

The real models based on differential recursion
\cite{Moore1996,Campagnolo2002,Costa2009}, effective representations,
approximation or lifting
\cite{BournezHainry2006,Brattka1996,CampagnoloOjakian2007,CampagnoloOjakian2008},
and real arithmetic \cite{BlumShubSmale1989} have different primitives.
Our exact grid descriptors are not approximations or real names.
Discontinuous tests are definable, so this is not a Type-2 model of total
computable real functions. Timed-automaton regions
\cite{AlurDill1994} provide a geometric analogy rather than the grid
characterization proved here.

Subrecursive equivalence, indexing and extensional recognition are also
established subjects
\cite{ConstableMuchnick1972,Kozen1980,Hoyrup2015,Matos2016}; our complexity
claims concern the precise index sets of this language. Supernatural
parametrization of rational subgroups is classical
\cite{ConnesConsani2016}. The successor and addition predicate levels
used below have the explicit scalar-recursion definitions of
Section~\ref{sec:bounded-meets}, matching Clote's exposition
\cite{Clote1999}.

The specific contribution is the exact rational denominator order in
both recursion regimes, including a bounded compiler preserving earlier
global majorants, the primitive-recursive grid characterization and
literal unbounded intersections, and the bounded-intersection equivalence
in Theorem~\ref{thm:equivalence}. Mixed normalization and supernatural
classification follow from the operator theorem and finite program
support.

\begin{theorem}[Main results; precise scope]\label{thm:main}
Let $s=a/q$, $t=b/v$ be positive rationals in lowest terms. Then, in the model of Definition~\ref{def:algebras},
\begin{equation}\label{eq:main}
\B_s\subseteq\B_t\iff q\mid v,\qquad
\U_s\subseteq\U_t\iff q\mid v.
\end{equation}
For every nonempty finite rational stride set $Q$, with $L(Q)=\lcmop\{\den(s):s\in Q\}$,
\begin{equation}\label{eq:mixmain}
\B_Q=\B_{1/L(Q)},\qquad \U_Q=\U_{1/L(Q)}.
\end{equation}
For any positive integers $q,v$, writing $d=\gcdop(q,v)$,
\begin{equation}\label{eq:meetmain}
\U_{1/q}\cap\U_{1/v}=\U_{1/d}.
\end{equation}
In addition, the following assertions, each over all positive finite arities,
are equivalent (Theorem~\ref{thm:equivalence}):
\begin{align*}
 &\forall q,v\geq1:\quad
   \B_{1/q}\cap\B_{1/v}=\B_{1/\gcdop(q,v)},\\
 &\B_{1/2}\cap\B_{1/3}=\B_1,\\
 &\E_{0,*}=\E_{1,*}.
\end{align*}
Here $\E_0=[0,S,\mathrm{proj};\mathrm{comp},\mathrm{BR}]$ and
$\E_1=[0,S,\mathrm{proj},+;\mathrm{comp},\mathrm{BR}]$ are integer
algebras with scalar bounded recursion, and the star denotes only their
Boolean-valued functions. The equivalence does not decide these assertions.

The single-stride classes in each regime form the positive-integer divisibility order. The bounded and unbounded regimes remain distinct for every nonempty rational stride set. These claims do not extend by assertion to nonlinear, irrational or nonterminating descents. For every rational stride, equality of unbounded programs is $\Pi^0_1$-complete, and the index set of unbounded programs whose functions belong to the bounded algebra is $\Sigma^0_2$-complete (Theorems~\ref{thm:equiv} and~\ref{thm:bmem}, and Corollary~\ref{cor:indexall}).
\end{theorem}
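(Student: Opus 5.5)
This theorem collects results proved in later sections, so the plan is to assemble it from those results. The organising tools are the structural lemmas already cited in the introduction. Lemma~\ref{lem:affine} says every program is piecewise a constant or a translated coordinate. Lemma~\ref{lem:onecoord} gives every bounded program a global one-coordinate bound. Lemma~\ref{lem:pr-syntax} gives the primitive-recursive simulation in the unbounded regime.

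\textbf{The divisibility order \eqref{eq:main}.} For the direction $q\mid v\Rightarrow\mathcal C_s\subseteq\mathcal C_t$, I would first show that the truncated descent $x\mapsto x\dot- a/q$ is definable in $\mathcal C_t$. Since $q\mid v$, the number $a/q$ is an integer multiple of $1/v$, so it is obtained by iterating the $1/v$-step. That step is itself available because $\gcd(b,v)=1$: a Bézout combination of strides $b/v$ and unit successors yields $1/v$. Definability of the predecessor is not enough, however. Recursion along stride $s$ must also be compiled into recursion along $t$. I would do this by recursing on a rescaled address and recovering the $s$-predecessor address by translation.

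In the bounded case the compiler must also supply, for every auxiliary input, an earlier majorant. I would obtain this from Lemma~\ref{lem:onecoord}: take the one-coordinate bound of the original program and translate it, which keeps it global. For the converse $q\nmid v$, I would use an invariant. By Lemma~\ref{lem:affine}, every function in $\mathcal C_t$ has breakpoints and translation offsets in $\tfrac1v\mathbb Z$. The stride $s$ itself, or a test distinguishing $x\ge a/q$, has a breakpoint at $a/q\notin\tfrac1v\mathbb Z$, so it is not in $\mathcal C_t$.

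\textbf{Mixed strides \eqref{eq:mixmain} and the unbounded meet \eqref{eq:meetmain}.} For \eqref{eq:mixmain}, each $1/\den(s)$ divides into $1/L$, so $\mathcal C_Q\subseteq\mathcal C_{1/L}$ by the operator theorem. Conversely, $1/L$ is a $\mathbb Z$-combination of the strides $1/\den(s)$, so a descent of $1/L$ can be composed from the available descents. Because each program has finite support, only finitely many strides are ever involved.

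For \eqref{eq:meetmain}, I would use the grid-code characterisation. A function is in $\U_{1/q}$ iff it has a primitive-recursive translated-coordinate code on the $\tfrac1q$-grid. A function on both the $\tfrac1q$-grid and the $\tfrac1v$-grid has all breakpoints and offsets in $\tfrac1q\mathbb Z\cap\tfrac1v\mathbb Z$, which is $\tfrac1d\mathbb Z$. Its primitive-recursive code then restricts to the coarser $\tfrac1d$-grid.

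\textbf{The bounded equivalence.} This is the main obstacle. The general identity trivially implies the $(2,3)$ instance. For the reverse, I would reduce an arbitrary pair $(q,v)$ to the $(2,3)$ case by a coordinate rescaling through grid codes. To link the $(2,3)$ instance with $\E_{0,*}=\E_{1,*}$, I would proceed in two steps. First, a Boolean reconstruction reduces bounded functions to their predicate parts. Second, I would compare precision classes: a clipped fine-grid simulation, run with a frozen global resource bound, shows that $\mathcal B_{1/2}\cap\mathcal B_{1/3}$ restricted to integers captures exactly $\E_{1,*}$, while $\mathcal B_1$ captures $\E_{0,*}$. The delicate point is keeping the frozen majorant valid on all auxiliary inputs during the simulation; I would first try to build it from Lemma~\ref{lem:onecoord}.

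\textbf{The remaining claims.} The $\Pi^0_1$ lower bound for program equality comes from reducing primitive-recursive zero-testing via Lemma~\ref{lem:pr-syntax}; the upper bound follows because equality reduces to agreement on a decidable grid. For $\Sigma^0_2$-completeness of bounded-algebra membership, the upper bound is ``there exist a bounded program such that for all grid inputs the two agree''; hardness comes from a reduction from finiteness-type problems. Distinctness of the two regimes holds because $\U$ contains fast-growing primitive-recursive functions that $\mathcal B$ cannot, by the one-coordinate bound.
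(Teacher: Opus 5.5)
Your outlines of the unbounded meet, the necessity direction, regime separation and the index-set bounds follow the paper. The bounded equivalence, which you rightly call the main obstacle, does not go through as proposed.

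\textbf{The bounded equivalence.} No rescaling reduces an arbitrary pair $(q,v)$ to $(2,3)$. The conjugations $\mathfrak T_d$ of Lemma~\ref{lem:scale} only remove the common factor $d=\gcdop(q,v)$, leaving a coprime pair such as $(5,7)$. The grid-code characterization exists only for the unbounded algebras, so it cannot transport \emph{bounded} membership.

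Nor does the clipped simulation show that the Boolean part of $\B_{1/2}\cap\B_{1/3}$ is $\E_{1,*}$. It puts $\widehat p$ into $\B_{1/a}$ only for some $a\geq K_P$ that grows with the program, which yields $\bigcup_a\mathcal P_a=\E_{1,*}$ (Theorem~\ref{thm:union}). The Boolean members of $\B_{1/2}\cap\B_{1/3}$ are exactly the lifts $\widehat p$ with $p\in\mathcal P_2$, and nothing proves $\mathcal P_2=\E_{1,*}$.

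Two ideas are missing:
\begin{enumerate}
\item Numerical monotonicity $\mathcal P_a\subseteq\mathcal P_b$ for $a\leq b$ (Lemma~\ref{lem:monotone}, via the contraction $\lceil an/b\rceil\in\E_0$). Without it you cannot even apply (ii) to a predicate known only to lie in $\B_{1/2}$. With it, (ii) gives $\mathcal P_2=\mathcal P_1$.
\item The collapse argument. From $\mathcal P_2=\mathcal P_1$, the identity $S_{2^k}p=S_2^{k}p$ and $k$-fold factor-two unpadding give $\mathcal P_a=\E_{0,*}$ for every $a$. The union theorem then gives $\E_{0,*}=\E_{1,*}$.
\end{enumerate}
The return to (i) runs through (v) for every coprime pair directly: Boolean members of $\B_{1/a}\cap\B_{1/b}$ come from $\mathcal P_a\subseteq\E_{1,*}=\E_{0,*}$, then Theorem~\ref{thm:booleanreduce} applies, and finally $\mathfrak T_d$ handles the common factor.

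The frozen resource bound also cannot come from Lemma~\ref{lem:onecoord}: unclipped simulated addition is majorized by no bounded function. The paper clips every intermediate value at $\C_h(w+1)$ and takes $a$ from the linear envelope of Lemma~\ref{lem:envelope}, so that clipping is inactive on intended inputs.

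\textbf{The operator compilers.} For \eqref{eq:mixmain} you only construct $\rho_{1/L}$ inside $\mathcal C_Q$, although you note yourself that a predecessor does not give closure under recursion along it. The inclusion $\mathcal C_{1/L}\subseteq\mathcal C_Q$ needs the fine-to-coarse block compiler of Theorem~\ref{thm:block}. It performs $p$ source steps per step of some $t=p/L\in Q$ and treats the residue by the guarded finite unfolding of Lemma~\ref{lem:unfold}. The same compiler is needed for $\mathcal C_{1/q}\subseteq\mathcal C_{a/q}$ in \eqref{eq:main}.

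In the bounded coarse-to-fine direction, translating the source's one-coordinate bound does not give a global majorant. The auxiliary recursion variable ranges over all reals, so the transition keeps acting after the simulated prefix. For instance, take $s=2$, $h=1$, base $x$ and transition $u+[0<z\leq2]$. The source is $F(x,w)=x+[w>2]\leq S(x)$. Yet for $w\in(0,2]$ the virtual address of Theorem~\ref{thm:coarse} stays at $w$, and the ungated auxiliary recursion equals $x+\lceil y\rceil$. That function has no one-coordinate bound, so it admits no earlier majorant.

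The paper handles this with the trichotomy \eqref{eq:trichotomy} for the transition. In the state-dominated case a gate $[\rho_s(R(w,z))>0]$ freezes state and address, so that $J=F(\mathbf x,Q(w,y))$ globally and $M(\mathbf x,Q(w,y))$ is admissible.

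Finally, your $\Sigma^0_2$ matrix must also check each candidate's majorant annotations, not merely agreement with $e$. These checks, and the agreement test, must range over rational tuples of every node's full arity, not just grid points.
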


Figure~\ref{fig:rational-order} illustrates the denominator order and
distinguishes mixed closure from literal intersection.
\begin{figure}[htbp]
\centering
\begin{tikzpicture}[x=1cm,y=1cm,>=stealth,
  every node/.style={font=\small},
  algebra/.style={draw=black!65,rounded corners=2pt,fill=black!3,
    minimum width=15mm,minimum height=7mm,inner sep=3pt}]
  \node[font=\small\bfseries] at (0,4.25) {Denominator divisibility};
  \node[algebra] (c1) at (0,0) {$\mathcal C_1$};
  \node[algebra] (c2) at (-1.55,1.15) {$\mathcal C_{1/2}$};
  \node[algebra] (c3) at (1.55,1.15) {$\mathcal C_{1/3}$};
  \node[algebra] (c4) at (-1.55,2.30) {$\mathcal C_{1/4}$};
  \node[algebra] (c6) at (1.55,2.30) {$\mathcal C_{1/6}$};
  \node[algebra] (c12) at (0,3.45) {$\mathcal C_{1/12}$};
  \foreach \a/\b in {c1/c2,c1/c3,c2/c4,c2/c6,c3/c6,c4/c12,c6/c12}
    \draw[->,semithick] (\a) -- (\b);
  \node[align=center,font=\footnotesize] at (0,-.9)
    {$\mathcal C\in\{\mathcal B,\mathcal U\}$\\Arrows point to larger algebras.};
  \draw[black!25] (2.8,-.8) -- (2.8,4.45);
  \node[anchor=west,font=\small\bfseries] at (3.35,3.9)
    {Mixed closure: least common denominator};
  \node[anchor=west,align=left] at (3.35,3.15)
    {$\displaystyle\mathcal C_{\{1/2,1/3\}}=\mathcal C_{1/6}$\\[3pt]
     $\operatorname{lcm}(2,3)=6$};
  \node[anchor=west,font=\small\bfseries] at (3.35,2.05)
    {Literal unbounded intersection: gcd};
  \node[anchor=west,align=left] at (3.35,1.3)
    {$\displaystyle\mathcal U_{1/2}\cap\mathcal U_{1/3}=\mathcal U_1$\\[3pt]
     $\gcd(2,3)=1$};
  \node[anchor=west,align=left,font=\footnotesize] at (3.35,.05)
    {The bounded equality\\[2pt]
     $\mathcal B_{1/2}\cap\mathcal B_{1/3}\stackrel{?}{=}\mathcal B_1$\\[2pt]
     remains open.};
\end{tikzpicture}
\caption{The denominator order: arrows are strict inclusions and
transitive arrows are omitted. Mixed closure uses lcm in both regimes;
literal unbounded intersection uses gcd. The bounded case is governed by
Theorem~\ref{thm:equivalence}.}
\label{fig:rational-order}
\end{figure}

\section{Definition of the algebras}\label{sec:setup}
Put $\NN_0=\{0,1,2,\dots\}$ and $\RR=[0,\infty)$. For $s>0$ define
\[
 \rh{s}(y)=(y-s)_+,\qquad \Dt{s}(0)=0,\qquad
 \Dt{s}(y)=\lceil y/s\rceil\quad(y>0).
\]
Thus $\rh{s}^{[j]}(y)=(y-js)_+$, and the orbit of any finite $y$ reaches zero in exactly $D_s(y)$ steps. For $c\ge0$ write $T_c(x)=x+c$; this notation does not assert that $T_c$ belongs to a specified algebra.

\paragraph{Step functions, descent functions and depth.}
We use the terminology of the discrete step-recursion setting
\cite{Osipov2026}: the translation $\Phi_s(x)=x+s$ is the
\emph{step function}, and
\[
 \rho_s(y)=\min\{z\geq0:\Phi_s(z)\geq y\}
\]
is its generalized inverse, or \emph{descent function}. For any map $f$,
$f^{[0]}=\operatorname{id}$ and $f^{[j+1]}=f\circ f^{[j]}$ denote
finite iterates. The \emph{descent depth} is
$D_{\rho_s}(y)=\min\{j:\rho_s^{[j]}(y)=0\}$, abbreviated $D_s(y)$.
Thus the transition's predecessor address is $\rho_s(y)$.

\begin{definition}[Address-only finite-step recursion]\label{def:algebras}
From zero, $S(x)=x+1$, and all positive finite-arity projections, form finite-stage closures under composition and the following operation along each allowed stride $s$:
\begin{equation}\label{eq:rec}
F(\mathbf x,0)=a(\mathbf x),\qquad
F(\mathbf x,y)=b(\mathbf x,\rh{s}(y),F(\mathbf x,\rh{s}(y)))\quad(y>0).
\end{equation}
\emph{Unbounded step recursion} admits the unique total solution of these terminating equations; its closure for a set $Q$ of strides is $\U_Q$. \emph{Bounded step recursion} admits the same solution only if a function $M$ constructed at an earlier stage satisfies $0\le F(\mathbf x,y)\le M(\mathbf x,y)$ on all inputs. Its closure is $\B_Q$. We abbreviate a singleton $Q=\{s\}$ as $\B_s,\U_s$.
We also write $\mathrm{USR}_{\rho_s}(a,b)=F$ for the unbounded
operator and $\mathrm{SR}_{\rho_s}(a,b,M)=F$ for the bounded operator.
Here bounded always means \emph{previously majorized}: the global
pointwise majorant belongs to an earlier construction stage. Accordingly,
$\mathcal B_Q$ is the previously majorized algebra, also denoted
$\mathcal B_Q^{\mathrm{prev}}$ when that convention needs emphasis.
The initial signature will be called the \emph{minimal basis}.
\end{definition}

All maps have fixed positive finite arity. Dummy parameters are allowed. \emph{Unbounded} refers solely to the absence of a previous-majorant requirement: all descents in this paper are finite. Semantic admission of a majorant is not claimed to be decidable. Fixed real stride values parameterize operators but are \emph{not} free real constants in the initial function basis.

A \emph{conditional} between previously defined expressions $A$ and $B$ controlled by a nonnegative $c$ is implemented by a step recursion on $c$ whose base is $A$ and whose transition is $B$ independent of the recursive value. It is therefore available without restriction in $\U_Q$. In $\B_Q$ it is permitted only if an earlier majorant bounds the \emph{selected} value globally. Where needed below we justify that majorant explicitly.

\subsection*{Finite syntax and effective interpretation}
The finite-stage convention means that each expression has a finite syntax tree with children created at earlier stages. In a bounded-recursion node $(a,b,M)$, the majorant $M$ is part of the witness: its inequality must hold for all real arguments, including auxiliary arguments introduced by a compiler. We use stage-ordered \emph{semantic} bounded descriptions rather than assuming decidable syntactic checking of their majorant conditions. For rational strides, evaluation of any unbounded expression on rational inputs is exact and terminates; this fact will be important when defining index sets.

\section{The affine invariant and predecessor obstruction}\label{sec:invariant}
For $Q\subset(0,\infty)$ finite let
\[
 M_Q=\left\{m-\sum_{s\in Q}n_s s:\ m,n_s\in\NN_0\right\}.
\]
It is an additive monoid, not generally a group.

\begin{lemma}[Local translated-coordinate invariant]\label{lem:affine}
On each compact box every $f\in\U_Q$ admits a finite partition into axis-aligned cells on which it equals either a constant $c$ or $x_i+c$, with $c\in M_Q$. This holds a fortiori for $\B_Q$.
\end{lemma}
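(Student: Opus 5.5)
The proof will go by induction on the finite syntax tree of $f$, i.e.\ on the construction stage, and will establish a slightly stronger statement. A \emph{cell} will be a product of real intervals of arbitrary type: open, closed, half-open or degenerate singletons. The strengthened claim is that for every compact box $K\subset\RR^k$ there is a finite partition of $K$ into cells on each of which $f$ equals a constant $c$ or $x_i+c$, with $c\in M_Q$. Allowing all interval types is what lets me handle the truncation in $\rh{s}$ and the discontinuous tests without extra case analysis. The statement for $\B_Q$ is then immediate, because every bounded node is in particular an unbounded one: the majorant is only a side condition on the same solution. The base cases are direct. Zero is the constant $0$, $S$ is $x_1+1$, and a projection is $x_i+0$; all these constants lie in $M_Q$.

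\emph{Composition.} Let $h=g(f_1,\dots,f_m)$ and fix a compact box $K$. First partition $K$ into cells refining the partitions for all the $f_j$ simultaneously. On each such cell every $f_j$ is a constant or a translated coordinate, so the image of $K$ under $(f_1,\dots,f_m)$ is bounded and lies in some compact box $K'$. Apply the hypothesis to $g$ on $K'$. On a cell $C$ of the common refinement, the preimage of a cell $C'$ of $g$ is the set of $x\in C$ with $f_j(x)\in I_j$ for each $j$, where $C'=\prod_j I_j$. Each such condition either says $x_{i}+c_j\in I_j$, which is again an interval condition on one coordinate, or holds trivially or fails trivially (when $f_j$ is constant). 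So the preimage is again a cell. On it $h$ is $g$'s local form with $f_j$ substituted. This gives a constant in $M_Q$, or $x_i+c+c'$ with $c+c'\in M_Q$ because $M_Q$ is closed under addition. Hence $h$ has the required form on a finite cell partition of $K$.

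\emph{Step recursion along $s\in Q$.} Fix a compact box $K\times[0,Y]$ and put $N=D_s(Y)$. Split $[0,Y]$ into the point $\{0\}$ and the slabs $(js,(j+1)s]\cap[0,Y]$ for $0\le j<N$. For $y$ in slab $j$, the orbit is exactly $\rh{s}^{[k]}(y)=y-ks$ for $k\le j$ and $\rh{s}^{[j+1]}(y)=0$. So unwinding \eqref{eq:rec} expresses $F$ on that slab as an explicit composition of depth $j+1$. The innermost term is $a(\mathbf x)$, and it is followed by iterated applications of $b$ at addresses $0$, then $y-js$, then $y-(j-1)s$, and so on up to $y-s$. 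Each address $y-ks$ is $x_{k+1}+c$ with $c=-ks\in M_Q$, and on the slab it agrees with the truncated $\rh{s}$, so it is a translated coordinate there. The point $\{0\}$ is handled by $a$ alone. Applying the composition case finitely many times, with the slab itself as an extra interval constraint on $y$, gives a cell partition on each slab, and their union covers the box.

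\emph{Main obstacle.} The delicate point is that the unwound expression uses the translate $y\mapsto y-ks$, which is not itself a program in the algebra. I would therefore phrase the composition step abstractly, for arbitrary piecewise-translated maps rather than only for programs, so that it applies to these substituted addresses. I also have to make sure the constants produced stay in the monoid $M_Q$ and not merely in the group it generates. They do: every constant arises as a sum of nonnegative integers (from $S$) and nonpositive multiples $-ks$ (from addresses), which is exactly the form of an element of $M_Q$.
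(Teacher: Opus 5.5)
Your proof is correct and follows the paper's argument: induction over program stages, with composition handled by refining the box through preimages of the outer function's cells under constant or translated-coordinate inner descriptions (offsets adding in the monoid $M_Q$), and recursion handled by finite unfolding on a compact box, where the descent depth is uniformly bounded. Your slab-by-slab unwinding with explicit addresses $y-ks$, and your statement of the composition step for arbitrary piecewise-translated maps, spell out what the paper compresses into ``each finite unfolding uses composition and the distinction $y=0$ versus $y>0$''; only the double use of $k$ for both arity and iteration index (as in $x_{k+1}+c$ with $c=-ks$) needs cleaning up.
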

\begin{proof}
The initial functions have the indicated forms. For composition, subdivide a compact input box according to the finitely many translated-coordinate descriptions of the inner functions, then refine by inverse images of the axis-aligned cells describing the outer function. Each inverse image under a constant or $x_i+c$ is again a coordinate interval condition. Substituting the inner affine descriptions adds offsets in $M_Q$.

For a recursion along $\rho_s$, the unfolding depth is uniformly bounded on a compact box in the recursion variable. Each finite unfolding uses composition and the distinction $y=0$ versus $y>0$ (an axis-aligned cut). The finitely many unfoldings therefore retain the stated property, and the actual recursion agrees with one such finite unfolding on the fixed box. Induction over program stages completes the proof.
\end{proof}

\begin{lemma}[Exact predecessor test]\label{lem:pred}
For $s>0$ and finite nonempty $Q$,
\begin{equation}\label{eq:predtest}
 \rh{s}\in\U_Q\iff\rh{s}\in\B_Q
 \iff s=\sum_{t\in Q}n_t t-m>0
 \quad(n_t,m\in\NN_0).
\end{equation}
In particular, for $Q=\{t\}$ the last condition becomes $s=nt-m>0$ for some $n\ge1,m\ge0$.
\end{lemma}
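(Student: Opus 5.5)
The plan is to prove the cycle of implications $\rh{s}\in\B_Q\Rightarrow\rh{s}\in\U_Q\Rightarrow(\text{arithmetic condition})\Rightarrow\rh{s}\in\B_Q$. The first implication is immediate, since every bounded program is an unbounded one ($\B_Q\subseteq\U_Q$ by definition).

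\emph{Necessity.} I would use the invariant of Lemma~\ref{lem:affine}. Suppose $\rh{s}\in\U_Q$ and fix the compact box $[0,s+1]$. On a cell lying inside $(s,s+1]$, $\rh{s}$ is given either by a constant or by $y+c$ with $c\in M_Q$. It cannot be constant there, because $\rh{s}(y)=y-s$ is strictly increasing on that interval. The cells form a finite partition, so some cell meets $(s,s+1]$ in a nondegenerate interval; on that interval $y-s=y+c$, which gives $-s=c\in M_Q$. Writing this out, $-s=m-\sum_t n_t t$, that is $s=\sum_t n_t t-m$ with $s>0$, which is the stated condition. For a singleton $Q=\{t\}$ the equation reads $s=nt-m>0$, and positivity of $s$ forces $n\ge1$.

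\emph{Sufficiency.} I would build $\rh{s}$ directly in $\B_Q$ from the representation $s=\sum n_t t-m$. The steps are as follows.
\begin{enumerate}
\item Each $\rh{t}$ with $t\in Q$ lies in $\B_Q$. Apply bounded recursion along $\rho_t$ with base $0$ and transition $b(y',r)=y'$, which returns the predecessor address. The majorant $M(y)=y$, a projection, is available at an earlier stage and satisfies $\rh{t}(y)\le y$.
\item Truncated subtraction of $1$ lies in $\B_Q$, via recursion along $\rho_1$. Here I need $\rho_1$ itself. It is a base-level step only if $1\in Q$, and otherwise it is not available directly.
\end{enumerate}
This is where I expect the main obstacle. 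Truncated subtraction of $m$ has to be realised from strides in $Q$ together with the successor. My plan is to avoid $\rho_1$ altogether by writing
\[
\rh{s}(y)=\Bigl(\bigl(\rho_{t_1}^{[n_{t_1}]}\cdots\rho_{t_k}^{[n_{t_k}]}\bigr)(y+m)\Bigr)\quad\text{corrected near }0.
\]
The successor supplies $y+m$, and the composed predecessors subtract $\sum n_t t$. For $y\ge s$ the result equals $y-s$ exactly, since no truncation occurs: intermediate values stay at least $y+m-\sum n_tt=y-s\ge0$, provided the predecessors are applied in a suitable order. For $y<s$ the composition can return a positive value where $\rh{s}(y)=0$ is required.

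I would repair the region $y<s$ with a conditional that outputs $0$ there. The control variable $c$ must be nonnegative and must vanish exactly when $y\le s$. A natural choice is $c$ equal to the composed expression itself evaluated at $y+m$ shifted appropriately. Concretely, the composed value is $\ge$ the true value, with equality whenever the truncation is active only at $0$. The check I anticipate is whether the uncorrected composition already equals $(y-s)_+$. Truncations $(\cdot)_+$ compose monotonically, and $\rho_a\circ\rho_b=\rho_{a+b}$. Hence $\rho_{t_1}^{[n]}\cdots(y+m)=(y+m-\sum n_tt)_+=(y-s)_+$ exactly, and no correction is needed.

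The bounded condition is then automatic: every step is either a composition or a bounded recursion with majorant given by a projection, since $\rh{t}(y)\le y$. This closes the cycle of implications.
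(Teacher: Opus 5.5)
Your proof is correct and is essentially the paper's argument. Necessity reads off $-s\in M_Q$ from the translated-coordinate invariant of Lemma~\ref{lem:affine} on a nondegenerate piece of $(s,s+1]$, and sufficiency composes the one-step bounded predecessors $\rho_t$ with $S^{[m]}$, using $\rho_a\circ\rho_b=\rho_{a+b}$ so that $\rho_{\sum_t n_t t}(y+m)=(y-s)_+$ exactly; your step~2 about $\rho_1$ and the worry about a ``correction near $0$'' are unnecessary detours and can be deleted.
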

\begin{proof}
If $\rho_s$ is definable, its affine portion $x\mapsto x-s$ on an open interval must coincide with a translated projection allowed by Lemma~\ref{lem:affine}; hence $-s\in M_Q$. Conversely, $\rho_t$ has a bounded one-step recursion definition (base zero, transition the predecessor address, earlier majorant the recursion argument). Repeated composition builds $\rho_{\sum n_t t}$, and composing it with $S^{[m]}$ gives $\rho_s$ whenever $s=\sum n_t t-m>0$.
\end{proof}

\begin{corollary}[Necessary rational denominator condition]\label{cor:necessary}
If $s=a/q$ and $t=b/v$ are reduced positive rationals and $\U_s\subseteq\U_t$ or $\B_s\subseteq\B_t$, then $q\mid v$.
\end{corollary}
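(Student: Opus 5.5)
Since $\B_s\subseteq\U_s$ and $\B_t\subseteq\U_t$, I will give one argument for both regimes, using the predecessor test of Lemma~\ref{lem:pred}.

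\textbf{Step 1: membership.} The one-step recursion in the proof of Lemma~\ref{lem:pred} (base zero, transition the predecessor address, earlier majorant the recursion argument) puts $\rh{s}$ in $\B_s\subseteq\U_s$.

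\textbf{Step 2: transfer.} Under either hypothesis, $\U_s\subseteq\U_t$ or $\B_s\subseteq\B_t$, we therefore get $\rh{s}\in\U_t$ or $\rh{s}\in\B_t$. Lemma~\ref{lem:pred} with $Q=\{t\}$ then gives integers $n\ge1$ and $m\ge0$ with
\[
s=nt-m>0 .
\]

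\textbf{Step 3: arithmetic.} Write $s=a/q$ and $t=b/v$ in lowest terms. From $a/q=nb/v-m$ we get
\[
av=q(nb-mv).
\]
The left side $av$ is therefore divisible by $q$. Since $\gcdop(a,q)=1$, this forces $q\mid v$.

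The only point needing care is Step 2's reliance on Lemma~\ref{lem:pred}. Its forward direction comes from Lemma~\ref{lem:affine}: an open interval on which $\rh{s}(x)=x-s$ must carry a translated projection $x_i+c$. I would note that $x_i$ here is indeed the argument of $\rh{s}$, because the function is unary. On a one-dimensional interval, the constant cells cannot cover an interval where the function is strictly increasing, so some cell has the form $x+c$ with $c=-s\in M_{\{t\}}$.

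Beyond that, the corollary is a direct consequence of Lemma~\ref{lem:pred}, and I expect no real obstacle.
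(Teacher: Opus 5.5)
Your proposal is correct and follows the paper's proof essentially verbatim. The inclusion transfers $\rho_s\in\B_s\subseteq\U_s$ into the $t$-algebra, Lemma~\ref{lem:pred} with $Q=\{t\}$ gives $s=nt-m$, and $av=q(nb-mv)$ together with $\gcdop(a,q)=1$ forces $q\mid v$; your remark on why a translated cell $x-s$ must occur in the forward direction of Lemma~\ref{lem:pred} is a harmless elaboration the paper leaves implicit.
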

\begin{proof}
The inclusion makes $\rho_s$ definable from $t$. By Lemma~\ref{lem:pred}, $a/q=nt-m$. Thus $av=q(nb-mv)$ and $q\mid v$, since $\gcd(a,q)=1$.
\end{proof}

\begin{lemma}[One-coordinate global bound]\label{lem:onecoord}
For every $f:\RR^k\to\RR$ in $\B_Q$ there are an input coordinate $i$ and an integer $C\ge0$ such that
\[f(\mathbf x)\le x_i+C\qquad\text{for all }\mathbf x.\]
\end{lemma}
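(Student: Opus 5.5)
We argue by induction over the finite construction stages of $\B_Q$, proving the stated one-coordinate bound for every function produced at each stage. The initial functions are immediate. For zero take any coordinate and $C=0$. For $S$ take $x_1+1$. For the projection $x_i$ take coordinate $i$ with $C=0$. Since every map has positive finite arity, an input coordinate always exists.

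For composition $f(\mathbf x)=g(h_1(\mathbf x),\dots,h_m(\mathbf x))$, the induction hypothesis gives $g(\mathbf y)\le y_j+C_g$ for some $j$, and $h_j(\mathbf x)\le x_i+C_j$ for some $i$. Here we use that all values are nonnegative, so $g$ is only evaluated on admissible arguments $\mathbf y\in\RR^m$. Chaining the two bounds gives $f(\mathbf x)\le x_i+C_j+C_g$, and $C_j+C_g$ is again an integer. The bound we get is on the value only, so no monotonicity of $g$ is needed.

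For a bounded step recursion $F=\mathrm{SR}_{\rho_s}(a,b,M)$ we do not analyse $a$ or $b$ at all. By Definition~\ref{def:algebras}, $M$ was constructed at an earlier stage and satisfies $F(\mathbf x,y)\le M(\mathbf x,y)$ for all inputs. The induction hypothesis applied to $M$, which has the same arity as $F$, yields a coordinate $i$ and an integer $C$ with $M(\mathbf x,y)\le z_i+C$, where $z$ ranges over the coordinates $(\mathbf x,y)$. Hence the same bound holds for $F$. The coordinate $i$ may be the recursion variable $y$, which is allowed.

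The point that needs care, and the only real obstacle, is making sure the induction is over the whole algebra rather than over ``functions defined without recursion.'' The majorant $M$ is itself an arbitrary earlier member of $\B_Q$ and can have been built by bounded recursions. So the statement must be proved simultaneously for all stages, with each stage's hypothesis covering every function created earlier. Conditionals need no separate treatment, because they are instances of bounded step recursion with a justified earlier majorant. Finally, the constant stays an integer because every step either adds two integer constants or copies the constant of an earlier majorant.
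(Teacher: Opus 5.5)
Your proof is correct and is essentially the paper's argument: a structural induction over construction stages, using the pairs $(1,0)$, $(1,1)$ and $(i,0)$ for the initial functions, the constant $C_g+C_j$ for composition, and copying the pair of the earlier majorant at each bounded-recursion node. The paper additionally notes that this extraction of $(i,C)$ is a syntactic computation on candidate descriptions, correct whenever the majorant annotations are valid; it uses that effectiveness later to select the trichotomy case in Theorem~\ref{thm:coarse}.
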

\begin{proof}
The argument also extracts a pair $(i,C)$ effectively from a finite
stage-ordered candidate bounded description. For zero choose $(1,0)$,
for successor choose $(1,1)$, and for the $i$th projection choose
$(i,0)$. If an outer function $g$ has extracted bound $v_j+C_g$
and its $j$th inner function has extracted bound $x_i+C_j$, assign
the composite the pair $(i,C_g+C_j)$. At a bounded-recursion node,
copy the pair extracted from its earlier majorant. Structural induction
proves the resulting bound whenever all annotations are semantically
valid. Extraction is a syntactic computation even on invalid candidates;
it does not decide validity of their majorant annotations.
\end{proof}

\section{Rational tools and the two bounded compilers}\label{sec:tool}
\begin{lemma}[Rational translations and finite controls]\label{lem:toolkit}
If $t=a/q>0$ is reduced, $\B_t$ contains $\rho_{k/q}$, $T_{k/q}$ and the constant $k/q$ for all positive integers $k$. It also contains tests of fixed $1/q$-grid thresholds and finite-valued residue clocks $D_t(y)\bmod n$ for fixed $n$.
\end{lemma}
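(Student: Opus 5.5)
Since $\gcd(a,q)=1$, we fix integers $n\ge1$, $m\ge0$ with $na-mq=1$; this is possible because $a$ is invertible modulo $q$, so we may take $n\equiv a^{-1}\pmod q$ with $n\ge1$ and put $m=(na-1)/q\ge0$. Then $1/q=nt-m$, and by the construction in Lemma~\ref{lem:pred} we get $\rho_{1/q}=\rho_t^{[n]}\circ S^{[m]}\in\B_t$. Each $\rho_t$ is a one-step bounded recursion whose earlier majorant is the recursion argument. Iterating gives $\rho_{k/q}=\rho_{1/q}^{[k]}$.

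For the translation $T_{1/q}$, we will not use free real arithmetic. Instead we observe that $x+1/q=\rho_{(q-1)/q}(x+1)$ whenever $x+1\ge (q-1)/q$, which always holds. Hence $T_{1/q}=\rho_{(q-1)/q}\circ S$ (with $T_1=S$ when $q=1$), and $T_{k/q}=T_{1/q}^{[k]}$. The constant $k/q$ is $T_{k/q}$ applied to the zero function of the appropriate arity.

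Next, we handle threshold tests. We first build the indicator $[y>0]$: a step recursion on $y$ along $\rho_t$ with base $0$ and transition the constant $1$, with earlier majorant the constant $1$, i.e.\ $S\circ 0$. A threshold test $[y>k/q]$ is then $[\rho_{k/q}(y)>0]$. Tests of the form $[y\ge k/q]$ follow by the same device, as $[\,T_{1/q}(y)>k/q\,]$ on the $1/q$-grid, or by combining $\rho$ with the complement $1-[\cdot]$. That complement is itself defined as a recursion with base $1$ and transition $0$, again majorized by $1$. Bounded conditionals selecting among finitely many constants are admissible, because their selected values are bounded by a previously built constant.

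The residue clock $D_t(y)\bmod n$ is defined by bounded recursion along $\rho_t$ on $y$. The base is $0$, and the transition is $c\mapsto \sigma(c)$, where $\sigma(c)=c+1$ for $c<n-1$ and $\sigma(n-1)=0$. The clock values stay in $\{0,\dots,n-1\}$, so the earlier constant $n-1$ (that is, $S^{[n-1]}\circ 0$) is a global majorant. It remains to realize $\sigma$ on all real inputs, which is the main obstacle: the bound must hold for every real $c$, not only for integers. We will define $\sigma$ by nested conditionals on the tests $[c>j-1/2]$ from the previous step, which gives a finite-valued step function with values in $\{0,\dots,n-1\}$ everywhere. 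Each such conditional is majorized by the constant $n-1$, and the recursion value only ever visits integers. Since $D_t(y)$ counts exactly the unfolding steps, the value computed is $D_t(y)\bmod n$.
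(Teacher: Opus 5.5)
Your route is the paper's: Bezout's identity turns $\rho_t$ and $S$ into $\rho_{k/q}$ and $T_{k/q}$, and constants are translations evaluated at zero. The nonzero test is a recursion with constant majorant one, thresholds are nonzero tests of predecessors, and the clock is a finite-state recursion along $\rho_t$ under a constant majorant. Your formula $T_{1/q}=\rho_{(q-1)/q}\circ S$ is correct and is also how the paper later writes $T_\delta$. The predecessor, translation, constant and threshold parts are fine as written.

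The step that fails is your realization of $\sigma$ through the tests $[c>j-1/2]$. The point $j-1/2$ lies in $(1/q)\mathbb Z$ only when $q$ is even. For odd $q$ such a test is not in $\B_t$ at all. Already for $t=1$, Theorem~\ref{thm:grid} says every member of $\U_1\supseteq\B_1$ is a single descriptor on each cell $(k-1,k]$, so no Boolean function of $\B_1$ can change value at $1/2$.

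The repair is to cut at integers, which are genuine $1/q$-grid thresholds, and use the right-closed cells. With $[c>0]$ and $[c>i]=[\rho_{iq/q}(c)>0]$ for $1\le i\le n-2$, set $\sigma(0)=1$, set $\sigma=j+1$ on $(j-1,j]$ for $1\le j\le n-2$, and set $\sigma=0$ on $(n-2,\infty)$. Each of these cells contains exactly one state. Hence $\sigma$ agrees with $c\mapsto c+1\bmod n$ on $\{0,\dots,n-1\}$, and every nested selector is admitted by the earlier constant $n-1$.

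For the same cell-constancy reason, your aside on $[y\ge k/q]$ is valid only on grid inputs. Complementing $[\rho_{k/q}(y)>0]$ yields $[y\le k/q]$, not $[y\ge k/q]$, and $[y\ge k/q]$ on all reals is not definable. The lemma does not need it.
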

\begin{proof}
The predecessor $\rho_t$ is obtained by bounded recursion with bound $y$. The equations $na-mq=\pm k$ have solutions $n,m\in\NN_0$ with $n\ge1$ when necessary, by Bezout's identity followed by adding sufficiently large multiples of $(q,a)$ to make both variables nonnegative. Therefore $\rho_t^{[n]}(x+m)$ is either $\rho_{k/q}(x)$ or $T_{k/q}(x)$, with the signs chosen accordingly. Evaluating positive translations at zero gives constants. The nonzero test is a bounded recursion with constant bound one; composing it with predecessors yields fixed thresholds. A bounded finite-state recursion along $\rho_t$ cycles through $n$ constant states and yields $D_t(y)\bmod n$.
\end{proof}

\begin{lemma}[Bounded finite selection]\label{lem:select}
If previously defined $f_0,f_1,c,H$ satisfy $0\le f_0,f_1\le H$ globally, the branch selecting $f_0$ when $c=0$ and $f_1$ when $c>0$ belongs to the same bounded algebra. A branch $f$ which is bounded by $H$ only on an earlier definable predicate's active region may first be zero-guarded using a previously available bound for $f$.
\end{lemma}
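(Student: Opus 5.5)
The plan is to realize the branch as a single bounded step recursion along the allowed stride $t$, with the control $c$ as recursion variable, exactly as for the conditional in Section~\ref{sec:setup}. Put
\[
 F(\mathbf x,c)=\mathrm{SR}_{\rho_t}\bigl(f_0,\;b,\;M\bigr),\qquad
 b(\mathbf x,c',r)=f_1(\mathbf x),
\]
so that $F(\mathbf x,0)=f_0(\mathbf x)$ and, for $c>0$, $F(\mathbf x,c)=f_1(\mathbf x)$. The transition ignores both the predecessor address $c'=\rho_t(c)$ and the recursive value $r$; it is only a projection composed with $f_1$. For the majorant take $M(\mathbf x,c)=H(\mathbf x)$, viewed as a function of $(\mathbf x,c)$ with a dummy coordinate. $M$ is built at a stage earlier than the recursion node, because $H$ is. Since $0\le f_0,f_1\le H$ globally, we get $0\le F\le M$ on all inputs, including every value of the auxiliary variable $c$. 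Composing with the previously defined $c(\mathbf x)$ in the recursion slot gives the selector. When $c$ is a general nonnegative expression rather than a coordinate, we first substitute $c(\mathbf x)$ and use composition closure. Nothing beyond the zero/nonzero distinction is used, so no finer structure of $\rho_t$ is needed.

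For the second sentence, suppose $P$ is an earlier definable predicate, with values in $\{0,1\}$ by Lemma~\ref{lem:toolkit}, and $f\le H$ holds only where $P=1$. We want $\tilde f=f\cdot[P=1]$ defined boundedly. We apply the same recursion with $P$ as control: base $0$ (the branch at $P=0$), transition $f$, and majorant the previously available bound $G$ for $f$. Such a $G$ exists at an earlier stage because $f$ itself lies in $\B_Q$, either as the majorant from its own recursion node or via Lemma~\ref{lem:onecoord}, which gives $x_i+C$, obtainable from projection and successors. The recursion is therefore admissible. The result $\tilde f$ satisfies $0\le\tilde f\le H$ globally: it is zero off the active region and equals $f\le H$ on it. Then $\tilde f$ can be fed into the first part as $f_0$ or $f_1$.

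The only genuine obstacle is the stage discipline: the majorant must be globally valid for all real arguments of the recursion node, and it must come from an earlier stage. In the guarding step, $H$ does not bound $f$ globally, so $H$ cannot serve as the majorant there. This is exactly why we use $f$'s own earlier bound $G$. The remaining verifications are direct pointwise inequalities on the two cases $c=0$ and $c>0$.
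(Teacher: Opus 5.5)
Your proposal is correct and matches the paper's proof: a bounded recursion on the control with base $f_0$, a transition that ignores predecessor and state and returns $f_1$, and the earlier majorant $H$ on a dummy recursion coordinate; then a zero-guard with base zero, transition $f$ and an earlier global bound for $f$, after which $H$ bounds the guarded function pointwise. One simplification: $f$ is already constructed, so $f$ itself can serve as that earlier bound, as the paper does in Lemma~\ref{lem:unfold}; your route through Lemma~\ref{lem:onecoord} is valid but unnecessary.
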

\begin{proof}
Recursion on $c$ with base $f_0$ and constant transition $f_1$ has previous majorant $H$. For a zero-guard, select zero and $f$ with an existing global bound for $f$; the resulting function is bounded by $H$ because it vanishes off the active region. Thus later recombinations may use $H$ as an earlier bound.
\end{proof}

\begin{lemma}[Frozen-parameter guards and stage separation]\label{lem:frozen}
Suppose previously constructed expressions $u_0(\mathbf v),u_1(\mathbf v),c(\mathbf v),H(\mathbf v)$ satisfy $0\le u_0,u_1\le H$ on the full input domain. Then the selector $u_0$ if $c=0$, and $u_1$ otherwise, is constructed without a new comparison primitive by a bounded recursion on an \emph{additional} variable $z$, with all of $\mathbf v$ frozen, and evaluation at $z=c(\mathbf v)$. The earlier majorant for the entire auxiliary recursion is $H(\mathbf v)$, independently of $z$. In particular, if a branch $u_1$ is required only on the region $c>0$ and there satisfies $u_1\le H$, first form its globally zero-guarded version using its own earlier global majorant; this guarded function is everywhere bounded by $H$ and may then be combined with further branches under $H$.
\end{lemma}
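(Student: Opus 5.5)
We construct the selector directly as a step recursion on a fresh variable $z$ along the stride $s$ already used by the algebra. The parameters $\mathbf v$ are held fixed throughout, and at the end we substitute $z=c(\mathbf v)$. We define
\[
G(\mathbf v,0)=u_0(\mathbf v),\qquad G(\mathbf v,z)=b(\mathbf v,\rho_s(z),G(\mathbf v,\rho_s(z)))=u_1(\mathbf v)\quad(z>0).
\]
The transition $b$ is the composite of $u_1$ with projections onto the $\mathbf v$-coordinates, and it ignores both the predecessor address and the recursive value. Because $D_s(z)\ge1$ for every $z>0$, the unique total solution satisfies $G(\mathbf v,0)=u_0(\mathbf v)$ and $G(\mathbf v,z)=u_1(\mathbf v)$ for all $z>0$. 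The base, the transition and $H$ were all constructed before this node, so each is available at an earlier stage.

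Next we check the majorant requirement of Definition~\ref{def:algebras}. It has to hold on all inputs $(\mathbf v,z)$, including the auxiliary coordinate $z$, not merely on the diagonal $z=c(\mathbf v)$. We take $M(\mathbf v,z)=H(\mathbf v)$, obtained as $H$ composed with projections that discard $z$. For every $z$ the value $G(\mathbf v,z)$ equals either $u_0(\mathbf v)$ or $u_1(\mathbf v)$. Both values lie in $[0,H(\mathbf v)]$ by hypothesis, so $0\le G\le M$ holds everywhere. Freezing $\mathbf v$ is what makes this work: the bound depends only on $\mathbf v$, so it holds uniformly in $z$. Composing $G$ with the projections and with $c$ then gives the selector $\mathbf v\mapsto G(\mathbf v,c(\mathbf v))$. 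Composition is unrestricted in $\B_Q$, so this step creates no new majorant obligation. The selector equals $u_0$ where $c=0$ and $u_1$ where $c>0$.

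For the guarded statement, suppose $u_1$ is needed only on the region $c>0$, is bounded by $H$ there, and has its own earlier global majorant $H_1$. We apply the construction above with $u_0:=0$ and the bound $H_1$ in place of $H$. This gives $g=[c>0]\cdot u_1$, which lies in $\B_Q$ because $0$ and $u_1$ are both bounded by $H_1$. Off the region $c>0$ we have $g=0\le H$, and on the region $g=u_1\le H$, so $0\le g\le H$ holds globally. The function $g$ is now an earlier-stage function with the global bound $H$. It can therefore serve as a branch in a further selector governed by $H$, exactly as in the first part.

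The main point to verify carefully is that the majorant is semantically valid on every auxiliary input. In particular, the intermediate recursion $G$ must not exceed $H(\mathbf v)$ at any $z$, which holds because $G$ only ever takes the values of $u_0$ and $u_1$. This mirrors Lemma~\ref{lem:select}, except that the recursion variable is a fresh coordinate rather than $c$ itself. No comparison primitive is used: the split between $z=0$ and $z>0$ comes entirely from the recursion scheme.
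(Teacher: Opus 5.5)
Your proof is correct and follows the paper's argument essentially verbatim. Both use a bounded recursion on a fresh variable $z$ with base $u_0$ and a transition returning $u_1$ that ignores address and state; it is admitted by $H(\mathbf v)$ on every $(\mathbf v,z)$ and evaluated at $z=c(\mathbf v)$, and the zero-guard comes from the same construction under $u_1$'s own earlier majorant, with the bound by $H$ checked afterwards as a pointwise property rather than used for admission.
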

\begin{proof}
Put $V(\mathbf v,0)=u_0(\mathbf v)$ and $V(\mathbf v,z)=u_1(\mathbf v)$ for $z>0$. The transition ignores the recursive state and predecessor address. Both clauses are bounded by $H(\mathbf v)$ on \emph{every} $(\mathbf v,z)$, so the bounded operator is admissible. Evaluation at $c(\mathbf v)$ gives the selector. Apply the same construction with the zero branch and the earlier bound for $u_1$ to obtain a zero-guarded branch. The claimed bound by $H$ is then a separate pointwise property of the already constructed guarded function, not an assumption used to admit its definition.
\end{proof}

\begin{lemma}[Globally bounded finite unfoldings]\label{lem:unfold}
Let $u>0$ be rational. Suppose an ambient algebra contains zero,
projections, $\rho_u$ and the fixed $u$-grid band predicates, and is
closed under composition and the bounded selectors of
Lemma~\ref{lem:frozen}. Let $a(\mathbf x)$, $b(\mathbf x,z,v)$
and $M(\mathbf x,r)$ be previously constructed functions in that
algebra. Suppose the unique terminating $u$-recursion solution $F$
with these data satisfies $F\le M$ on all inputs; closure under
bounded $u$-recursion is not assumed. For a fixed integer $p\geq1$
put $t=pu$. There is an earlier function $B(\mathbf x,r)$ in the
ambient algebra such that
\[
 B(\mathbf x,r)=F(\mathbf x,r)\ (0\leq r\leq t),\qquad
 B(\mathbf x,r)=0\ (r>t),\qquad 0\leq B\leq M\ \text{everywhere}.
\]
No use of the as-yet-unconstructed function $F$ is made in constructing $B$.
\end{lemma}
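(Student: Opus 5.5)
We construct $B$ by induction on $j=0,1,\dots,p$, producing functions $B_j(\mathbf x,r)$ in the ambient algebra with
\[
 B_j(\mathbf x,r)=F(\mathbf x,r)\ (0\le r\le ju),\qquad B_j(\mathbf x,r)=0\ (r>ju),\qquad 0\le B_j\le M\ \text{everywhere},
\]
and set $B=B_p$. Each $B_j$ will be built only from $a$, $b$, $M$, $\rho_u$, projections, the fixed $u$-grid band predicates, and selectors of Lemma~\ref{lem:frozen}. Its global bound will always be the earlier function $M$, so no stage refers to $F$.

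For the base case $j=0$, we use the band predicate for $r=0$ versus $r>0$. We take $u_0=a(\mathbf x)$ and $u_1=0$, both viewed as functions of $(\mathbf x,r)$, and select with control $c=r$. To apply Lemma~\ref{lem:frozen} we need $0\le a(\mathbf x)\le M(\mathbf x,r)$ for all $r$, and this does not follow from $F\le M$, which only gives $a(\mathbf x)=F(\mathbf x,0)\le M(\mathbf x,0)$. So we first zero-guard $a$ using its own earlier global bound from Lemma~\ref{lem:onecoord}. We form the guarded branch $\tilde a(\mathbf x,r)$, equal to $a(\mathbf x)$ when $r=0$ and to $0$ when $r>0$. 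Its admission bound is the one-coordinate bound of $a$, which is an earlier function. Pointwise, $\tilde a\le M$ everywhere, since it vanishes off $\{r=0\}$ and equals $F\le M$ on it. This gives $B_0=\tilde a$.

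For the inductive step from $B_j$ to $B_{j+1}$, we form
\[
 G_j(\mathbf x,r)=b\bigl(\mathbf x,\rho_u(r),B_j(\mathbf x,\rho_u(r))\bigr).
\]
For $0<r\le (j+1)u$ we have $\rho_u(r)\le ju$, so $B_j(\mathbf x,\rho_u(r))=F(\mathbf x,\rho_u(r))$ and hence $G_j=F$ there. Outside this band $G_j$ is some composite with no useful bound in terms of $M$. It does, however, have an earlier one-coordinate global bound $H_j$ by Lemma~\ref{lem:onecoord}, applied to the already constructed composite. We zero-guard $G_j$ to the active region $\{0<r\le(j+1)u\}$, which is a conjunction of fixed $u$-grid band predicates, using $H_j$ as the admission majorant. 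The result is pointwise bounded by $M$, because on the active region it equals $F$ and elsewhere it is $0$. We then combine it with $B_0$ by a selector on the control $r$ ($r=0$ versus $r>0$). Both branches are now globally bounded by $M$, so Lemma~\ref{lem:frozen} admits this selection with majorant $M$. The result is $B_{j+1}$. After $p$ steps, $B=B_p$ has the required values and bound.

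The main obstacle is admissibility. Every selector must be admitted by a majorant that is valid on all inputs, including auxiliary values of the frozen variable $z$ and arguments far from the intended band. The plan handles this by never invoking $M$ as an admission bound for a raw branch. Raw branches are always first guarded using their own syntactically extracted one-coordinate bounds, which are earlier functions by Lemma~\ref{lem:onecoord}. Only after guarding do we note the pointwise bound by $M$, which then licenses later selections under $M$.

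A secondary check is that $G_j$ is a legitimate composite. This requires $b$ and $\rho_u$ to be earlier functions and $B_j$ to have been constructed at a prior stage, which holds by the induction.
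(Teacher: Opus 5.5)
Your inductive decomposition is sound, but the admission step has a gap as written. Every zero-guard in your proof is admitted by a majorant $x_i+C$ taken from Lemma~\ref{lem:onecoord}: the guard of $a$ in the base case, and the guard of $G_j$ in the inductive step. That lemma is proved only for the algebras $\B_Q$ generated from the minimal basis.

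Lemma~\ref{lem:unfold}, however, is stated for an arbitrary ambient algebra that contains zero, projections, $\rho_u$ and the band predicates and is closed under composition and the bounded selectors. The paper uses it in that generality for the algebra $\mathcal C$ of Theorem~\ref{thm:block}. Such an algebra need not satisfy the one-coordinate invariant. For instance, $\U_Q$ meets all the hypotheses, yet it violates that bound by Corollary~\ref{cor:strict}. Nor is the ambient algebra assumed to contain the successor, so $x_i+C$ need not even be one of its functions. As written, your argument therefore proves the lemma only when the ambient algebra is some $\B_Q$.

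The missing observation is the one the paper uses. Every already constructed function $f$ is nonnegative, so $f$ is itself an earlier global majorant for its own zero-guard. The selector of Lemma~\ref{lem:frozen} with branches $0$ and $f$ needs only $0\le 0\le f$ and $f\le f$. Take the admission bound to be $a$ in the base case and $G_j$ in the inductive step. This removes any appeal to Lemma~\ref{lem:onecoord}, and the rest of your argument then goes through unchanged.

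After this repair, the two routes differ only in bookkeeping. The paper writes down all $p+1$ explicit unfoldings $L_j$ with iterated addresses $\rho_u^{[j-\ell]}(r)$. It guards each $L_j$ to its own band and recombines them under $M$ at the end. You instead obtain $B_{j+1}$ from $B_j$ by the single composition $b(\mathbf x,\rho_u(r),B_j(\mathbf x,\rho_u(r)))$. Correctness on the whole prefix $[0,(j+1)u]$ is then inherited, and each stage recombines only two guarded branches under $M$.
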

\begin{proof}
For $0\leq j\leq p$, define a finite term $L_j(\mathbf x,r)$:
$L_0=a$, and for $j\geq1$ start at $v_0=a$ and successively put
\[
 v_{\ell+1} = b\bigl(\mathbf x,\rho_u^{[j-\ell]}(r),v_{\ell}\bigr)
 \quad(0\leq\ell<j),\qquad L_j=v_j.
\]
This is a fixed finite composition of earlier expressions.  On the band
$(j-1)u<r\leq ju$ its successive predecessor addresses are $0$,
$\rho_u^{[j-1]}(r),\ldots,\rho_u(r)$; hence $L_j=F$ there.
For $j=0$, $L_0=F$ on $\{0\}$.  Let $P_j(r)$ be the band predicate,
constructed by finite predecessor tests, including $P_0=[r=0]$.
Since $L_j$ is already a previously constructed nonnegative function,
we may take $H_j=L_j$ as an earlier global majorant for its
zero-guard. This does not rely on the one-coordinate bound invariant
and applies in the general ambient algebra of Theorem~\ref{thm:block}.
First use the
bounded selector of Lemma~\ref{lem:frozen} to define
$G_j=L_j$ when $P_j=1$ and $G_j=0$ otherwise, with $H_j$ as the
admission bound of this \emph{first} selector.  The resulting $G_j$
has already been constructed and satisfies $0\leq G_j\leq M$ on the
entire real domain: on its active band this is the stipulated source
bound, and elsewhere the value is zero.
Finally combine $G_0,\ldots,G_p$ successively by bounded selection;
$M$, now an earlier bound for both branches at each stage, admits all
these selections.  Since exactly one band is active on $[0,t]$, the
result is $F$ there and zero elsewhere.  In particular the proof
never presupposes a bound on an \emph{unselected} $L_j$ by $M$.
\end{proof}

\begin{lemma}[Finite-state clock and stopped-address identities]\label{lem:clockdetail}
Fix rational $h>0$ and integer $n\geq1$, and put $s=nh$.  The
auxiliaries $E,R,r,Q$ of Theorem~\ref{thm:coarse} are constructed in
that order in $\B_h$, and satisfy, on the entire indicated domains,
\[
\begin{aligned}
 E(w)&=h\max(D_s(w)-1,0),&0\leq E(w)&\leq w+h,\\
 R(w,y)&=\rho_s^{[D_h(y)]}(w),&0\leq R(w,y)&\leq w,\\
 r(w)&=R(w,E(w)),& 0\leq r(w)&\leq w,\\
 Q(w,y)&=r(w)+s\min(D_h(y),\max(D_s(w)-1,0)),&0\leq Q(w,y)&\leq w.
\end{aligned}
\]
When $w=0$ the four states vanish; for $w>0$, $r(w)\in(0,s]$.
\end{lemma}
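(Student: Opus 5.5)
We construct the four auxiliaries in the stated order, each by a single bounded recursion along $\rho_h$ (available in $\B_h$ by Lemma~\ref{lem:toolkit}) or by composition. For each we give the earlier global majorant and check that it bounds the solution on \emph{all} real inputs, as Definition~\ref{def:algebras} requires. Throughout we use $s=nh$: $D_s(w)=\lceil D_h(w)/n\rceil$ for $w>0$, and $\rho_s=\rho_h^{[n]}$, which is a finite composition in $\B_h$.

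\emph{The clock $E$.} Recursion along $\rho_h$ on $w$ carries a pair: a residue phase $D_h(w)\bmod n$ (Lemma~\ref{lem:toolkit}) and an accumulator. Since the accumulator is a single real output, we encode it instead: $E(w)$ is obtained from $E(\rho_h(w))$ by adding $h$ exactly when the phase of the predecessor address crosses a multiple of $n$ and the predecessor is positive. The addition of the constant $h$ is $T_h$, also from Lemma~\ref{lem:toolkit}, and the case split uses the selector of Lemma~\ref{lem:frozen}, applied to the phase test of $\rho_h(w)$. An induction on $D_h(w)$ gives $E(w)=h\max(D_s(w)-1,0)$. The majorant $w+h$ is an earlier function ($T_h$ composed with a projection). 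The bound holds because $h(D_s(w)-1)\le h(D_h(w)-1)/1\cdot$ and $s(D_s(w)-1)<w$, so $h(D_s-1)\le s(D_s-1)<w$ when $n\ge1$. The slack $+h$ absorbs the selector intermediate values.

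\emph{The stopped address $R$.} We use recursion on $y$ along $\rho_h$ with $w$ frozen: base $R(w,0)=w$ and transition $R(w,y)=\rho_s(R(w,\rho_h(y)))$. Then $R(w,y)=\rho_s^{[D_h(y)]}(w)$, and since $\rho_s$ is non-increasing, $0\le R\le w$ everywhere. The projection $w$ is therefore an admissible earlier majorant. By composition, $r(w)=R(w,E(w))$. Since $D_h(E(w))=\max(D_s(w)-1,0)$, $r(w)$ is $w$ reduced by $s$ that many times. For $w>0$ this lands in $(0,s]$, and $r\le w$ follows from $R\le w$.

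\emph{The address $Q$.} We define $Q$ by recursion on $y$ with $w$ frozen: base $Q(w,0)=r(w)$, and transition adding $s$ (via $T_s=T_h^{[n]}$) while the step count has not yet reached $\max(D_s(w)-1,0)$, otherwise keeping the value. The test "count not yet reached" we would implement as positivity of $R(w,\rho_h(y))-r(w)$-type information. Concretely, $R(w,y)>r(w)$ if and only if $D_h(y)<\max(D_s(w)-1,0)$, and with $R(w,y)=Q$-complement this becomes the identity $Q(w,y)+R(w,y)=w+r(w)$ restricted appropriately. Because subtraction is unavailable, we test the threshold by a nonzero test on $\rho_{r(w)}$-type expressions. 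We expect this to be the \textbf{main obstacle}: we have no variable-stride predecessor. We would instead run the $Q$-recursion with a second frozen copy of the clock and compare via $D_h(y)$ versus $D_h(E(w))$, using the simultaneous recursion trick of phasing $\rho_h$ on $E(w)$ alongside $y$ (an auxiliary recursion whose state is $\rho_h^{[D_h(y)]}(E(w))$, bounded by $E(w)\le w+h$). Its vanishing is exactly the stop condition.

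For the majorant of $Q$, the selected value is at most $r(w)+s(D_s(w)-1)$. Since $s(D_s(w)-1)<w$ and $r(w)=w-s(D_s(w)-1)$, this equals $w$. So both branches are $\le w$ on every input, including unselected auxiliary values, provided the add-$s$ branch is zero-guarded as in Lemma~\ref{lem:frozen} before combination. At $w=0$ every quantity is zero, because the bases and the majorants vanish.
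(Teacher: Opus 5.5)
Your treatment of $E$, $R$ and $r$ matches the paper's: phase-gated increments of $h$ along $\rho_h$, the frozen-$w$ iteration of $\rho_s$ with earlier bound $w$, and composition. Your inequality for $E$ is garbled; the clean form is $E(w)=h(D_s(w)-1)\le s(D_s(w)-1)<w$ for $w>0$.

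For $Q$ you use a different gate, and it is valid. The countdown $V(w,z)=\rho_h^{[D_h(z)]}(E(w))$ must be built as its own earlier scalar recursion, with base $E(w)$, transition $\rho_h(u)$ and majorant $E(w)$. It does not depend on $Q$, so nothing has to run ``simultaneously''. Since $E(w)$ is an exact multiple of $h$, $\nz(V(w,z))=1$ exactly when $D_h(z)<\max(D_s(w)-1,0)$, i.e.\ at steps $j\le D_s(w)-1$, which yields the closed form.

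The paper instead uses the gate $[\rho_s(R(w,z))>0]$. That is precisely what your abandoned comparison $R(w,z)>r(w)$ reduces to. A positive $R(w,z)$ lies in $r(w)+s\N$ and $r(w)\in(0,s]$, so $R(w,z)>r(w)$ holds iff $R(w,z)>s$, which is a fixed-stride test. No variable-stride predecessor is needed in either version. The paper's gate reuses $R$; yours costs one extra auxiliary but is equally correct.

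The step that fails as written is the admission of the $Q$ transition. That transition is a function $b(w,z,u)$ of an \emph{arbitrary} state $u$. So neither the hold branch $u$ nor the add branch $u+s$ is bounded by $w$. Zero-guarding the add branch by a gate that depends only on $(w,z)$ does not change this, so Lemma~\ref{lem:frozen} cannot be applied with $H=w$.

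Two separate checks are needed, as the paper stresses:
\begin{itemize}
\item The selector is admitted with the earlier bound $T_s(u)=u+s$. Likewise the $E$ transition is admitted with $T_h(u)$; the $+h$ slack in the majorant of $E$ is not what admits that selector.
\item Only then is the recursion for $Q$ admitted with the projection $w$. This is justified by the closed form: $Q(w,y)\le r(w)+s\max(D_s(w)-1,0)=w$ for $w>0$, and $Q(0,y)=0$.
\end{itemize}
You did compute the second bound. With the transition bounds corrected, the argument goes through.
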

\begin{proof}
The phase counter $D_h(z)\bmod n$ is a bounded finite-state
recursion.  At a positive step with predecessor $z$, the clock $E$
increases by $h$ if $z>0$ and $D_h(z)\equiv0\pmod n$ and holds
otherwise.  If $m=D_h(w)$, the number of selected predecessors
$z$ of depths in $\{1,\ldots,m-1\}$ divisible by $n$ is
$\lfloor(m-1)/n\rfloor$, which equals $D_s(w)-1$ for $m>0$.
The earlier translation $w+\lceil h\rceil$ majorizes every value
of $E$; if $w=0$ the state remains zero.

The claimed formula for $R$ follows by induction on $D_h(y)$ from
its defining recursion, with global earlier bound $w$.
For positive $w$, $r(w)=w-s(D_s(w)-1)\in(0,s]$.
At the $j$th step of the recursion defining $Q$, the gate is
$[\rho_s(R(w,z))>0]$ with predecessor depth $D_h(z)=j-1$.
This gate is true exactly when $j\leq D_s(w)-1$.  Thus exactly
$\min(j,D_s(w)-1)$ increments of size $s$ have occurred.
Each transition value is bounded by $u+s$ for arbitrary state $u$;
this licenses the transition selector.  Since the resulting closed
form never exceeds $w$, the \emph{earlier projection} $w$ licenses
the entire recursion for $Q$.  These are distinct admission checks.
\end{proof}

\begin{example}[A finite block unfolding with global guards]\label{ex:unfold}
For $u=1/2$, $t=3/2$, the source recursion on $0\le r\le t$
is unfolded into the four branches $L_0=a$ at $r=0$,
$L_1=b(\mathbf x,0,a)$ on $(0,1/2]$,
$L_2=b(\mathbf x,\rho_u(r),b(\mathbf x,0,a))$ on
$(1/2,1]$, and
$L_3=b(\mathbf x,\rho_u(r),b(\mathbf x,\rho_u^{[2]}(r),b(\mathbf x,0,a)))$
on $(1,3/2]$.  Each expression is first zero-guarded under its own
previously known global bound; the resulting guarded functions all
obey the source majorant $M$, allowing their later recombination.
\end{example}

\begin{theorem}[Bounded coarse-to-fine simulation]\label{thm:coarse}
For $h\in\mathbb Q_{>0}$ and $n\ge1$,
\begin{equation}\label{eq:coarse}\B_{nh}\subseteq\B_h.\end{equation}
\end{theorem}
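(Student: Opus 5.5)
Let $s=nh$. The plan is a structural induction over the stage-ordered construction of $\B_s$. I will show that every expression is realized in $\B_h$ by an expression computing the same function. Initial functions and composition are trivial. The whole content is a single bounded $s$-recursion node $F=\mathrm{SR}_{\rho_s}(a,b,M)$, where $a,b,M$ are already translated into $\B_h$ and $F\le M$ holds globally.

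The idea is to replace the coarse descent $w\mapsto\rho_s(w)$ by $n$ fine steps, using the auxiliaries of Lemma~\ref{lem:clockdetail}. For $w>0$, the coarse orbit of $w$ is
\[
w,\; w-s,\;\dots,\; r(w),\; 0,
\]
where $r(w)\in(0,s]$ is the bottom nonzero address. The coarse addresses reached above $r(w)$ are exactly $Q(w,y)$ with $D_h(y)=j$, that is, $r(w)+s\min(j,D_s(w)-1)$. I therefore plan a fine $h$-recursion in an auxiliary counter variable $y$ with $w$ frozen:
\[
G(\mathbf x,w,0)=F(\mathbf x,r(w)),\qquad G(\mathbf x,w,y)=b\bigl(\mathbf x,Q(w,\rho_h(y)),G(\mathbf x,w,\rho_h(y))\bigr)\ \text{if the gate } [\rho_s(R(w,\rho_h(y)))>0] \text{ is open}.
\]
When the gate is closed, $G$ holds its previous value. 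The final answer is $F(\mathbf x,w)=G(\mathbf x,w,E(w))$, or more simply $G$ evaluated at any $y$ with $D_h(y)\ge D_s(w)-1$. Since $E(w)$ is already built, I will use $E(w)$. This is what the frozen-parameter guard Lemma~\ref{lem:frozen} is designed for.

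The base value $F(\mathbf x,r(w))$ with $r(w)\in[0,s]$ comes from Lemma~\ref{lem:unfold} with $u=h$ and $p=n$. That lemma is not directly applicable as stated, since it unfolds an $h$-recursion and the source here is an $s$-recursion. On $(0,s]$, however, $F$ is simply $b(\mathbf x,0,a(\mathbf x))$, and $F(\mathbf x,0)=a(\mathbf x)$. So the base is a single bounded selection, via Lemma~\ref{lem:select}, on the test $[r(w)=0]$. Its admission bound is $M(\mathbf x,r(w))$, and I will check that both branches are bounded by it after zero-guarding.

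The main obstacle is admitting the recursion for $G$: it needs an earlier global majorant valid at every $(\mathbf x,w,y)$, including auxiliary $y$ far beyond $E(w)$ and ungated states. I will prove by induction on $D_h(y)$ that
\[
G(\mathbf x,w,y)=F(\mathbf x,Q(w,y)).
\]
The held-state clause is consistent with this because $Q$ freezes once the $\min$ saturates. Hence $G\le M(\mathbf x,Q(w,y))$ everywhere. This composite is an earlier function, since $Q$ and $M$ precede $G$, so it serves as the previous majorant.

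The gated transition itself must be built by a selector whose admission is justified separately. Its selected value $b(\dots)$ is only known to be bounded on the gated region, so I will zero-guard it using an earlier global bound. Lemma~\ref{lem:onecoord} gives one for $b$ of the form $v_i+C$, and $Q\le w$ together with the state bound make this usable, as in the remark in Lemma~\ref{lem:clockdetail} that transition bounds and recursion bounds are distinct checks. Finally, $F(\mathbf x,w)\le M(\mathbf x,w)$ is inherited, and evaluation at $y=E(w)$ gives $Q(w,E(w))=w$, which completes the node.
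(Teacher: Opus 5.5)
There is a genuine gap: the gated transition you need does not exist in $\B_h$ unless $b$ is state-dominated. The transition of your $G$-recursion is
\[
 b'(\mathbf x,w,z,u)=\begin{cases} b(\mathbf x,Q(w,z),u),& g(w,z)>0,\\ u,& g(w,z)=0,\end{cases}
\]
and it must already belong to $\B_h$ as a function of \emph{all} real $(\mathbf x,w,z,u)$.

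Zero-guarding the $b$-branch under its one-coordinate bound is fine. The trouble is the next step: the selector against the hold branch $u$ needs one earlier majorant that dominates both branches for arbitrary $u$. If Lemma~\ref{lem:onecoord} gives $b\le x_i+C$ or $b\le z+C$, no such majorant exists. In fact $b'$ itself then violates Lemma~\ref{lem:onecoord}.

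For a concrete case, take the valid source $a=0$, $b(x,z,u)=x$, $M(x,y)=x$. The gate depends only on $(w,z)$. It is open at $w>s$, $z=0$, where $b'=x$ for all $(x,u)$. It is closed at $w=0$, where $b'=u$ for all $(x,u)$. Hence $b'\le v+C$ fails for every single coordinate $v$.

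So your global invariant $G=F(\mathbf x,Q(w,y))$, which depends on the hold clause, is available only when $b\le u+C$. In that state-dominated case your argument is exactly the paper's: $u+C$ admits the transition, and $M(\mathbf x,Q(w,y))$ admits the recursion.

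The missing idea is the trichotomy \eqref{eq:trichotomy}, extracted from $b$ by Lemma~\ref{lem:onecoord}. In the parameter-dominated and address-dominated cases, drop the gate and run
\[
 J(\mathbf x,w,y)=b(\mathbf x,Q(w,z),J(\mathbf x,w,z)).
\]
Its global earlier majorant is $x_i+C$, respectively $w+C$, the latter using $Q\le w$. You must then give up the global simulation identity. It holds only on the prefix $D_h(y)\le D_s(w)-1$, which contains $y=E(w)$; Example~\ref{ex:prefix} shows it genuinely fails beyond that prefix.

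In these two cases the base must also lie under the new majorant. Your base $F(\mathbf x,r(w))$ equals $a(\mathbf x)$ at $w=0$, and that need not be bounded by $x_i+C$, or by $0+C$. So, as in the paper, take $b(\mathbf x,0,a(\mathbf x))$ zero-guarded at $w=0$ as the base. Then restore $a(\mathbf x)$ at $w=0$ by a final selection admitted by $M(\mathbf x,w)$.
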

\begin{proof}
Write $s=nh$. Induct on the stages of a source program. Consider a source $s$-recursion $F$ with data $a,b,M$ already compiled into $\B_h$, and $F(\mathbf x,w)\le M(\mathbf x,w)$ for all inputs. We give a global-majorant-preserving compiler.

Put $K(w)=D_s(w)$, used only in mathematical descriptions, not as an assumed bounded-algebra function. The stage-order and global bounds of the clock, residual and stopped address are supplied by Lemma~\ref{lem:clockdetail}. Construct the compressed clock
\[
 E(w)=h\max(K(w)-1,0).
\]
A bounded $h$-recursion can compute $E$: when scanning $w$ from zero through its $h$-descent, increase a state by $h$ precisely when the predecessor address is positive and its $h$-depth is congruent to zero modulo $n$. The available finite residue clock tests this condition. If $m=D_h(w)>0$, the number of increments is $\lfloor(m-1)/n\rfloor=K(w)-1$. Its value is bounded by $w+h$, so an earlier integer translation of $w$ is a majorant.

With frozen parameter $w$ define $R(w,0)=w$ and
\[
 R(w,y)=\rho_s(R(w,\rho_h(y)))\quad(y>0).
\]
This $h$-recursion is bounded by $w$, and $R(w,y)=\rho_s^{[D_h(y)]}(w)$. Set $r(w)=R(w,E(w))$. For $w>0$,
\[r(w)=w-s(K(w)-1)\in(0,s],\qquad r(0)=0.\]
Define the gate $g(w,z)=[\rho_s(R(w,z))>0]$, and define an auxiliary address by
\[
 Q(w,0)=r(w),\quad Q(w,y)=
 \begin{cases}Q(w,z)+s,&g(w,z)>0,\\Q(w,z),&g(w,z)=0,\end{cases}
 \quad z=\rho_h(y).
\]
Both candidate updates have the earlier bound $u+s$ when the recursive state is $u$, so Lemma~\ref{lem:select} builds the transition. The resulting closed form is
\begin{equation}\label{eq:Qclock}
 Q(w,y)=\begin{cases}
 r(w)+s\min\bigl(D_h(y),K(w)-1\bigr),&w>0,\\
 0,&w=0.
 \end{cases}
\end{equation}
Hence $0\le Q(w,y)\le w$ \emph{for all} $w,y$; the earlier projection $w$ admits this bounded recursion. In particular $Q(w,E(w))=w$.

The positive-base value $b_*(\mathbf x)=b(\mathbf x,0,a(\mathbf x))$ is an earlier function. Let $B(\mathbf x,w)$ be $b_*(\mathbf x)$ for $w>0$ and zero for $w=0$, constructed by bounded zero-guarding. At positive $w$, $B=F(\mathbf x,r(w))$ since $0<r(w)\le s$. Lemma~\ref{lem:onecoord}, applied to the entire transition $b(\mathbf x,z,u)$, gives one of three global inequalities with integer $C$:
\begin{equation}\label{eq:trichotomy}
 b(\mathbf x,z,u)\le x_i+C,\qquad
 b(\mathbf x,z,u)\le z+C,\qquad\text{or}\qquad
 b(\mathbf x,z,u)\le u+C.
\end{equation}
The structural extraction in Lemma~\ref{lem:onecoord}, applied to the
compiled bounded description of $b$, supplies a coordinate and $C$,
and thus selects one of these cases effectively. No search for a true
universal inequality is required. For the first two define
\[
 J(\mathbf x,w,0)=B(\mathbf x,w),\quad
 J(\mathbf x,w,y)=b\bigl(\mathbf x,Q(w,z),J(\mathbf x,w,z)\bigr),\quad z=\rho_h(y).
\]
The earlier majorant is respectively $x_i+C$ or $w+C$: the base satisfies it and every update does too, using $Q(w,z)\le w$. Thus these are valid bounded $h$-recursions on the \emph{whole} auxiliary domain. When the third inequality holds, instead use the transition
\[
 J(\mathbf x,w,y)=\begin{cases}
 b\bigl(\mathbf x,Q(w,z),J(\mathbf x,w,z)\bigr),&g(w,z)>0,\\
 J(\mathbf x,w,z),&g(w,z)=0,
 \end{cases}
\]
with the same base. For any arbitrary candidate state $u$ both branches are bounded by $u+C$, so the transition is already in $\B_h$. In this third (state-dominated) case the gate freezes both $Q$ and $J$ after the last required update; induction gives $J(\mathbf x,w,y)=F(\mathbf x,Q(w,y))$ for every $w>0$ and every $y$, while $J(\mathbf x,0,y)=0$. Thus the earlier function $M(\mathbf x,Q(w,y))$ is a global majorant for $J$ in the third case (and at $w=0$ it dominates zero). In the first two cases the global simulation identity need not hold after the virtual address has stopped: it holds only on the required prefix $D_h(y)\le K(w)-1$, which includes $y=E(w)$. Their independent earlier majorants $x_i+C$ and $w+C$ remain valid on the entire auxiliary domain. Example~\ref{ex:prefix} gives a counterexample to the stronger identity after that prefix.

In every case the first $K(w)-1$ positive auxiliary stages follow the original $s$-recursion, from $r(w)$ to $w$, so $J(\mathbf x,w,E(w))=F(\mathbf x,w)$ for $w>0$. At $w=0$, return $a(\mathbf x)$ instead, using zero-guarded selection. First construct each branch with its own already known global majorant; then the selected function is pointwise at most the source majorant $M(\mathbf x,w)$ on every input, making $M$ a valid earlier bound for the final selector. The needed frozen-parameter stage separation is Lemma~\ref{lem:frozen}. This completes the source-stage induction.
\end{proof}

\begin{example}[Why the coarse compiler needs only a prefix invariant]\label{ex:prefix}
Let $s=2$, $h=1$, $a=0$, and $b(z,u)=z+1$, with source majorant $M(y)=y+1$. For $w=5/2$, the coarse recursion has $F(w)=3/2$. The compiler's parameters are $K(w)=2$, $r(w)=1/2$, $E(w)=1$, and $Q(w,0)=1/2$, $Q(w,1)=5/2$. It produces $J(w,0)=1$, $J(w,1)=3/2$, hence the correct final output. At an auxiliary input of depth $2$, however, the unfrozen address-dominated transition gives $J(w,y)=7/2$ while $F(Q(w,y))=3/2$. Thus the auxiliary identity is \emph{false} after the designated prefix. The genuinely global statement needed for admissibility is $J(w,y)\le w+1=7/2$; the simulation identity is asserted only on the initial prefix. This example prevents an unjustified strengthening of the induction hypothesis.
\end{example}

\begin{theorem}[Bounded fine-to-coarse block simulation]\label{thm:block}
Let $t=pu$, $p\in\NN_{>0}$, and let an algebra $\mathcal C$ contain $\rho_u$, $T_{ju}$ for $0\le j\le p$, the fixed $u$-threshold tests and $T_t$, and be closed under bounded $t$-recursion and composition. Then it is closed under the entire bounded $u$-recursion operator.
\end{theorem}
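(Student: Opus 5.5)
Given a source bounded $u$-recursion $F=\mathrm{SR}_{\rho_u}(a,b,M)$ whose data lie in $\mathcal C$, I will show $F\in\mathcal C$ by simulating it with one bounded $t$-recursion that advances $p$ fine steps per coarse step. Write $y=\rho_t^{[k]}(y)+\text{(remainder)}$. The natural coarse recursion runs on $y$ along $\rho_t$, but its base region $(0,t]$ is not a single point. So I first handle the initial block $0\le y\le t$ with Lemma~\ref{lem:unfold}. That lemma needs only $\rho_u$, the $u$-grid band predicates (built from the threshold tests and $\rho_u$) and the frozen selectors of Lemma~\ref{lem:frozen}, all available in $\mathcal C$. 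It produces an earlier function $B(\mathbf x,r)$ with $B=F$ on $[0,t]$, $B=0$ beyond $t$, and $0\le B\le M$ everywhere, without using $F$.

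The coarse recursion is defined by
\[
G(\mathbf x,y)=B(\mathbf x,y)\ \text{ if } y\le t,\qquad
G(\mathbf x,y)=\Psi\bigl(\mathbf x,\rho_t(y),G(\mathbf x,\rho_t(y))\bigr)\ \text{ if } y>t .
\]
For the second clause, note that when $y>t$ the fine predecessor chain from $\rho_t(y)=y-t$ up to $y$ consists of the addresses $y-t,\,y-t+u,\dots,y-u$, that is $T_{ju}(z)$ for $z=\rho_t(y)$ and $0\le j<p$. So I set
\[
\Psi(\mathbf x,z,v)=b\bigl(\mathbf x,T_{(p-1)u}(z),b(\cdots b(\mathbf x,T_{u}(z),b(\mathbf x,z,v))\cdots)\bigr),
\]
a finite composition using the translations $T_{ju}$ that $\mathcal C$ contains. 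The split at $y\le t$ is a threshold test. To put $G$ in the form~\eqref{eq:rec}, I use base $G(\mathbf x,0)=B(\mathbf x,0)=a(\mathbf x)$ and a transition that selects, by the test $[T_t(z)\ \text{vs}\ \ldots]$, equivalently $[z>0]$ at the predecessor address, between $B(\mathbf x,T_t(z))$ when $z=0$ and $\Psi(\mathbf x,z,v)$ when $z>0$. On $(0,t]$ we have $\rho_t(y)=0$ and $y=T_t(0)$ is not defined for general $y$. So the branch for $z=0$ must instead evaluate $B(\mathbf x,y)$ itself, which forces me to pass $y$ as a frozen parameter. I therefore define an auxiliary $H(\mathbf x,w,y)$ with $w$ frozen and evaluate at $w=y$. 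Its base-branch value is $B(\mathbf x,w)$, and its transition is $\Psi$ guarded by $[z>0]$. Induction on $D_t(y)$ then gives $G=F$ at $w=y$.

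The main obstacle is admissibility: the bounded $t$-operator needs an earlier global majorant on \emph{all} auxiliary inputs $(\mathbf x,w,y)$, not only on the diagonal $w=y$. On the diagonal, $M(\mathbf x,y)$ works, but off-diagonal $\Psi$ may blow up. To handle this I copy Theorem~\ref{thm:coarse}. I first try to avoid the frozen parameter entirely: on the band $0<y\le t$, write $y=T_t(0)$-relative data through $\rho_t$, making $B(\mathbf x,y)$ the base-block value computed \emph{inside} the transition at the first positive step. On that step $z=0$ and the current address $y$ is not visible. Since this fails, I fall back on the frozen $w$ with a gated transition. The transition applies $\Psi$ only while the virtual address $T_t(z)$ stays $\le w$, which a threshold comparison detects; otherwise it holds the state. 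After this gating the identity $H(\mathbf x,w,y)=F(\mathbf x,\text{stopped address})$ holds globally, so $M$ evaluated at an earlier-definable stopped address, bounded by $w$, serves as the majorant. If the comparison $z+t\le w$ cannot be built in $\mathcal C$, I would instead use zero-guarding (Lemma~\ref{lem:frozen}) with each branch's own majorant, and apply the trichotomy extraction only where the ambient algebra allows it.

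Finally, the per-step guarded terms are combined by bounded selection under the majorant, exactly as in Lemma~\ref{lem:unfold}, so that no unselected branch needs to satisfy $M$. Induction over source stages then gives closure of $\mathcal C$ under the whole bounded $u$-operator.
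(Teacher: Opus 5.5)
Your use of Lemma~\ref{lem:unfold} for the block $[0,t]$ and your $p$-fold macro $\Psi$ match the paper. The coarse recursion you build from them, however, has a genuine gap.

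\textbf{The first positive coarse step is misaligned.} Evaluate $H(\mathbf x,w,y)$ at $y=w$. The $\rho_t$-descent of $w$ passes from $0$ to the residue $r_t(w)=w-t(D_t(w)-1)\in(0,t]$. So the transition at predecessor $z=0$ must produce $F(\mathbf x,r_t(w))$, not $B(\mathbf x,w)$, which vanishes for $w>t$. For example, take $a=0$, $b(\mathbf x,z,v)=T_u(v)$ and $M(\mathbf x,y)=T_u(y)$, so $F(\mathbf x,y)=uD_u(y)$. Your $H$ at $w=y=2t$ returns $\Psi(\mathbf x,t,0)=t$ instead of $2t$. You never construct $r_t$. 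The paper obtains it by a bounded $t$-recursion with frozen $w$, base $T_t(w)$, transition $\rho_t$ and earlier bound $w+t$, evaluated at $w$ and then zero-guarded at $w=0$.

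\textbf{Admissibility is not established.} Even with base value $B(\mathbf x,r_t(w))$, you feed $\Psi$ the auxiliary predecessor $z$. Its position modulo $t$ is set by $y$, whereas the meaning of the state is set by $w$. When $y-w\notin t\mathbb Z$, the state is not $F(\mathbf x,\cdot)$ at any address. So your claimed global identity with a ``stopped address'' fails, and gating on $z+t\le w$ cannot repair this misalignment. Moreover, $\mathcal C$ is given only fixed $u$-threshold tests, not comparisons of two variables. Your fallback to the trichotomy relies on Lemma~\ref{lem:onecoord}, a property of $\B_Q$ rather than of the general algebra $\mathcal C$ in the statement.

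There is a further problem with the $[z>0]$ selector inside your transition. It needs one earlier majorant for both branches on all $(\mathbf x,w,z,v)$, but $M$ does not bound $\Psi(\mathbf x,z,v)$ for arbitrary states $v$. In $\B_Q$, such a selector can fail to exist at all. Take the valid source $a(\mathbf x)=x_1$, $b(\mathbf x,z,v)=v$, $M=x_1$. The selector would have to equal $x_1$ at $z=0$, $w\le t$, and $v$ at $z>0$, which Lemma~\ref{lem:onecoord} forbids.

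\textbf{The missing idea} is a $w$-aligned virtual address $A(w,y)=r_t(w)+tD_t(y)$. It is defined by its own bounded $t$-recursion, with base $r_t(w)$, transition $T_t$ and earlier majorant $T_t(T_t(y))$. Put the residue block in the base, $J(\mathbf x,w,0)=B(\mathbf x,r_t(w))$, and use the uniform transition $\Psi(\mathbf x,A(w,z),v)$. Induction then gives $J(\mathbf x,w,y)=F(\mathbf x,A(w,y))$ on \emph{every} auxiliary input. Hence $M(\mathbf x,A(w,y))$ is an earlier global majorant, and no gate or case split is needed. Evaluating at $y=\rho_t(w)$, where $A(w,\rho_t(w))=w$ (including $w=0$), yields $F(\mathbf x,w)$.
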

\begin{proof}
Take source data $a,b,M\in\mathcal C$ for a bounded $u$-recursion $F$. First define the positive $t$-residue $r_t(w)$, equal to zero at $w=0$ and to $w-t(D_t(w)-1)\in(0,t]$ for $w>0$. It is boundedly definable: run a $t$-recursion with frozen $w$, base $w+t$ and transition $v\mapsto\rho_t(v)$, bound it by $w+t$, evaluate at recursion argument $w$, and use a bounded zero-guard to set the value at $w=0$ to zero.

On $0\le r\le t=pu$, the result $F(\mathbf x,r)$ has at most $p$ source steps, so it is represented there by finitely many nested terms in $a,b$ and $\rho_u$. The global construction of the finite unfolding $B$ is provided by Lemma~\ref{lem:unfold}: each branch is guarded under its own earlier majorant before the source majorant $M$ is used for recombination. Thus $B$ already exists and agrees with $F$ on $[0,t]$.

Define the virtual address via a $t$-recursion with frozen $w$:
\[
 A(w,0)=r_t(w),\qquad A(w,y)=A(w,\rho_t(y))+t\quad(y>0).
\]
Then $A(w,y)=r_t(w)+tD_t(y)\le y+2t$ globally, so an earlier function $T_t(T_t(y))=y+2t$ bounds $A$. Put
\[
 C_p(\mathbf x,L,v)=b\bigl(\mathbf x,L+(p-1)u,
 b(\mathbf x,L+(p-2)u,\ldots b(\mathbf x,L,v)\ldots)\bigr).
\]
The macro uses only earlier functions. Define the $t$-recursion
\[
 J(\mathbf x,w,0)=B(\mathbf x,r_t(w)),\quad
 J(\mathbf x,w,y)=C_p\bigl(\mathbf x,A(w,z),J(\mathbf x,w,z)\bigr),\quad z=\rho_t(y).
\]
Induction over $D_t(y)$ gives $J(\mathbf x,w,y)=F(\mathbf x,A(w,y))$ on every auxiliary input. Consequently $M(\mathbf x,A(w,y))$ is a \emph{globally valid earlier majorant}, not merely a diagonal one. As $A(w,\rho_t(w))=w$ also at $w=0$, evaluation gives $J(\mathbf x,w,\rho_t(w))=F(\mathbf x,w)$ as required.
\end{proof}

\Needspace{16\baselineskip}
\subsection*{Compiler stage and majorant ledger}\label{sec:ledger}
For reference, the majorant-bearing auxiliaries used in Theorems~\ref{thm:coarse} and~\ref{thm:block} have the following creation order. All entries refer to \emph{already available} data when the next recursion is introduced.
\begin{center}\small
\begin{tabular}{@{}p{.26\linewidth}p{.24\linewidth}p{.37\linewidth}@{}}\toprule
Object & Earlier bound & Scope of the bound\\\midrule
$E(w)$ & $w+\lceil h\rceil$ & All $w\ge0$ \\
$R(w,y)$ & $w$ & All $(w,y)$ \\
$Q(w,y)$ & $w$ & All $(w,y)$, including after address saturation \\
$J$ (parameter-dominated) & $x_i+C$ & All auxiliary inputs \\
$J$ (address-dominated) & $w+C$ & All auxiliary inputs \\
$J$ (state-dominated) & $M(\mathbf x,Q(w,y))$ & All auxiliary inputs, using the gate which freezes the state \\
$r_t(w)$ & $w+t$ & All $w$, with a separate zero guard \\
$A(w,y)$ & $y+2t$ & All auxiliary inputs \\
$J$ (block compiler) & $M(\mathbf x,A(w,y))$ & All auxiliary inputs \\
\bottomrule\end{tabular}
\end{center}
Each bound is represented before the corresponding recursion is
introduced. The compiler is a computable transformation of candidate
descriptions; its correctness and the validity of the output annotations
are conditional on validity of the input annotations. The table records
global auxiliary bounds, rather than bounds only on the output diagonal.

\section{Exact rational inclusion order}\label{sec:class}
\begin{lemma}[Phase-accurate unbounded compiler]\label{lem:unbounded-phase}
For every $h>0$ and integer $n\ge1$, closure under unrestricted
$h$-recursion and composition, with the available finite phase tests,
implies closure under unrestricted $nh$-recursion.
\end{lemma}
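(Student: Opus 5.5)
The argument reuses the coarse-to-fine compiler of Theorem~\ref{thm:coarse}, but without majorants. Since no admission bounds are needed, the trichotomy \eqref{eq:trichotomy} and the prefix subtleties of Example~\ref{ex:prefix} disappear. What remains is to simulate the phases exactly: one source $s$-step, with $s=nh$, must correspond to $n$ auxiliary $h$-steps, of which only one applies the source transition.

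Let $F=\mathrm{USR}_{\rho_s}(a,b)$ with $a,b$ already in the algebra. We build the auxiliaries of Lemma~\ref{lem:clockdetail} in the same order, now by unrestricted $h$-recursions:
\begin{itemize}[label={}]
\item the phase counter $D_h(z)\bmod n$ and the clock $E(w)=h\max(D_s(w)-1,0)$;
\item the stopped address $R(w,y)=\rho_s^{[D_h(y)]}(w)$, defined by $R(w,0)=w$ and $R(w,y)=\rho_s(R(w,\rho_h(y)))$, where $\rho_s=\rho_h^{[n]}$ is available by composition;
\item the residual $r(w)=R(w,E(w))\in(0,s]$ for $w>0$;
\item the gated virtual address $Q(w,y)=r(w)+s\min(D_h(y),D_s(w)-1)$.
\end{itemize}
The identities for these functions follow by induction on $D_h(y)$, exactly as in the proof of Lemma~\ref{lem:clockdetail}. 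Conditionals are unrestricted in $\U$, so every gate is available.

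Next we define the simulating recursion with base $J(\mathbf x,w,0)=B(\mathbf x,w)$. Here $B$ equals $b(\mathbf x,0,a(\mathbf x))$ for $w>0$ and $0$ at $w=0$, which is the value of $F$ at any address in $(0,s]$. For $y>0$ with $z=\rho_h(y)$, the transition is
\[
J(\mathbf x,w,y)=\begin{cases}b\bigl(\mathbf x,Q(w,z),J(\mathbf x,w,z)\bigr),&g(w,z)>0,\\ J(\mathbf x,w,z),&g(w,z)=0,\end{cases}
\]
where $g(w,z)=[\rho_s(R(w,z))>0]$ is the gate from Theorem~\ref{thm:coarse}.

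The claim is that $J(\mathbf x,w,y)=F(\mathbf x,Q(w,y))$ for all $w>0$. Induct on $D_h(y)$. While the gate is open, $Q(w,z)$ is a positive address, namely $r(w)+js$. The new address is $Q(w,z)+s$, and $\rho_s(Q(w,z)+s)=Q(w,z)$. So the $s$-recursion equation for $F$ at the address $Q(w,y)$ is exactly the update performed. Once the gate closes, both $Q$ and $J$ freeze, and the identity persists.

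Evaluating at $y=E(w)$ gives $Q=w$, hence $J=F(\mathbf x,w)$. At $w=0$ a conditional returns $a(\mathbf x)$. In fact the gating is not even strictly needed, since only the prefix matters, but it keeps the invariant global.

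The main obstacle I expect is checking the phase bookkeeping at the boundary cases:
\begin{itemize}[label={}]
\item $w\in(0,s]$, where $D_s(w)=1$, $E(w)=0$ and the base alone must be correct;
\item the off-by-one between $D_h$ of the predecessor and the count $\lfloor(m-1)/n\rfloor$.
\end{itemize}
I would settle these first by the computation already given in Lemma~\ref{lem:clockdetail}, noting that it never used majorants. If the phase tests for general real $h$ are the issue (the statement says ``every $h>0$''), I would construct $D_h(z)\bmod n$ directly as a finite-state $h$-recursion cycling through constants $0,1,\dots,n-1$. This needs only the constants $0$ and $S$, and no rationality of $h$.
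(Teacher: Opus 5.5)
\textbf{This is a genuine gap: your proof covers rational $h$, but not every real $h>0$ as the lemma asserts.}

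For rational $h$ your argument is correct. It is the state-dominated branch of Theorem~\ref{thm:coarse} with the majorants deleted, and the identities $r(w)\in(0,s]$, $Q(w,E(w))=w$ and $J=F(\mathbf x,Q(w,y))$ all check out.

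The trouble is that your construction silently inherits the rationality hypothesis of Lemma~\ref{lem:clockdetail}, which is stated only for rational $h$. The clock $E$ raises its state by $h$, and $Q$ raises its state by $s=nh$. So you need the translations $T_h$ and $T_s$. These come from Lemma~\ref{lem:toolkit}, that is, from rationality. They do not follow from closure under $h$-recursion and composition.

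For irrational $h$ the step fails outright. On $(s,2s]$ the function $E$ equals the constant $h$. By Lemma~\ref{lem:affine}, every constant piece of a function in $\U_h$ lies in $M_{\{h\}}=\{m-kh\}$, and $h=m-kh$ would force $h\in\mathbb Q$. Hence $E\notin\U_h$. Your closing paragraph therefore misidentifies the obstacle: the phase tests are granted by hypothesis, but the ascending clock and the ascending virtual address are not available.

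The missing idea, which the paper uses, is to drop ascending addresses altogether and run the auxiliary recursion on the diagonal $y=w$. The $h$-descent of $w$ passes through the $s$-orbit $w,\rho_s(w),\rho_s^{[2]}(w),\dots$ exactly at depths congruent to $D_h(w)$ modulo $n$. The construction is then:
\begin{enumerate}[label=(\alph*)]
\item Set $G(\mathbf x,w,0)=a(\mathbf x)$.
\item At predecessor $z=\rho_h(y)$, if $D_h(z)+1\equiv D_h(w)\pmod n$, replace the state $u$ by $b(\mathbf x,\rho_h^{[n-1]}(z),u)$. At such an active step, $\rho_h^{[n-1]}(z)=\rho_s(y)$.
\item Otherwise, hold the state.
\end{enumerate}
Induction on $D_h(y)$ gives $G(\mathbf x,w,y)=F(\mathbf x,\rho_h^{[r]}(y))$, where $r\in\{0,\dots,n-1\}$ and $r\equiv D_h(y)-D_h(w)\pmod n$. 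Taking $y=w$ gives $F(\mathbf x,w)$, including at $w=0$.

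This uses only $\rho_h$, composition, residue comparisons and unrestricted conditionals. No residual, clock, base correction or translation is needed, so it works for every $h>0$.
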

\begin{proof}
Put $s=nh$ and take source data $a(\mathbf x)$ and
$b(\mathbf x,z,u)$. Freeze the source argument $w$ as an additional
parameter, and write $d=D_h(w)$. The number $d$ is used only in the
metatheoretic explanation; the program needs only its residue modulo
$n$. Define $G(\mathbf x,w,0)=a(\mathbf x)$. At a positive
microstep with predecessor $z=\rho_h(y)$, retain the previous state
unless
\[
 D_h(z)+1\equiv D_h(w)\pmod n.
\]
On an active step replace the state $u$ by
\(b(\mathbf x,\rho_h^{[n-1]}(z),u)\).
The gate, predecessor iterate and unbounded conditional are earlier
functions, so this is a valid unrestricted $h$-recursion.

For any $y\ge0$, set $m=D_h(y)$ and let
$r_m\in\{0,\ldots,n-1\}$ be the residue of $m-d$ modulo $n$.
Write $H_w(y)=\rho_h^{[r_m]}(y)$, again only as an invariant.
Induct over $m$ to show
\[
 G(\mathbf x,w,y)=F(\mathbf x,H_w(y)).
\]
The zero case holds because $H_w(0)=0$. If $m\not\equiv d\pmod n$,
then $r_m>0$ and
$H_w(y)=\rho_h^{[r_m-1]}(z)=H_w(z)$; the source value
is unchanged and the microstep is inactive. If $m\equiv d\pmod n$,
then $r_m=0$, $r_{m-1}=n-1$, and
$H_w(z)=\rho_h^{[n-1]}(z)=\rho_s(y)$.
The active update is exactly the source recursion equation at $y$.
Finally $H_w(w)=w$ since $m=d$, so the diagonal evaluation
$G(\mathbf x,w,w)$ equals $F(\mathbf x,w)$ for all inputs, including
$w=0$. No bound claim is needed.
\end{proof}

\begin{theorem}[Single-stride classification]\label{thm:classification}
If $s=a/q$ and $t=b/v$ are positive rationals in lowest terms, then
\[
\boxed{\B_s\subseteq\B_t\iff q\mid v,\qquad
\U_s\subseteq\U_t\iff q\mid v.}
\]
In either family equality holds exactly when $q=v$, and the two algebras are incomparable when neither denominator divides the other.
\end{theorem}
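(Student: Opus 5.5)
Necessity is already Corollary~\ref{cor:necessary}: either inclusion makes $\rho_s$ definable from $t$, and Lemma~\ref{lem:pred} then forces $q\mid v$. The work is sufficiency. Assume $q\mid v$ and write $v=kq$. The plan is to route both strides through the common fine stride $1/v$. Since $1/q=k\cdot(1/v)$ and $1/v$ divides both $a/q$ and $b/v$ as integer multiples, we have
\[
 s=a/q=(ak)\cdot(1/v),\qquad t=b\cdot(1/v).
\]
The coarse-to-fine simulation, Theorem~\ref{thm:coarse}, then gives $\B_s\subseteq\B_{1/v}$. It suffices to show $\B_{1/v}\subseteq\B_t$.

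For this I would apply the block simulation, Theorem~\ref{thm:block}, with $u=1/v$, $p=b$ and $\mathcal C=\B_t$. Its hypotheses are all supplied by Lemma~\ref{lem:toolkit}, since $t$ is reduced with denominator $v$. That lemma gives $\rho_{1/v}$, the translations $T_{j/v}$ for $0\le j\le b$, the $1/v$-grid threshold tests, and $T_t=T_{b/v}$. Closure under bounded $t$-recursion and composition holds by definition. Theorem~\ref{thm:block} therefore shows $\B_t$ is closed under bounded $1/v$-recursion. An induction on program stages, with the basis functions common to both, then yields $\B_{1/v}\subseteq\B_t$. Hence $\B_s\subseteq\B_t$.

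The unbounded case runs in parallel. Lemma~\ref{lem:unbounded-phase} with $h=1/v$ and $n=ak$ gives $\U_s\subseteq\U_{1/v}$; the phase tests $D_h\bmod n$ come from Lemma~\ref{lem:toolkit}, which transfers a fortiori from $\B_{1/v}$ to $\U_{1/v}$. For $\U_{1/v}\subseteq\U_t$ I would reuse the construction in the proof of Theorem~\ref{thm:block} and simply drop every majorant. The residue $r_t$, the finite unfolding on $[0,t]$, the virtual address $A$ and the macro $C_p$ are all built from unbounded conditionals and compositions. The identity $J=F(\mathbf x,A(w,y))$ then gives the diagonal evaluation $J(\mathbf x,w,\rho_t(w))=F(\mathbf x,w)$ exactly as before.

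For the remaining claims: equality in either family holds iff $q\mid v$ and $v\mid q$, i.e.\ $q=v$. If neither denominator divides the other, necessity rules out both inclusions, so the algebras are incomparable. The main obstacle is the unbounded fine-to-coarse step, since Theorem~\ref{thm:block} is stated only for the bounded operator. I would check that its proof uses the bounds only as admission witnesses and never in the equational invariants. In particular the finite unfolding of Lemma~\ref{lem:unfold} must be replaced by plain unbounded selection on the $u$-bands, so that no source majorant is needed.
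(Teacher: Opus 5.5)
Your proof is correct and uses the same ingredients as the paper: Corollary~\ref{cor:necessary} for necessity, Theorems~\ref{thm:coarse} and~\ref{thm:block} (with Lemma~\ref{lem:toolkit} supplying the hypotheses) for bounded sufficiency, and Lemma~\ref{lem:unbounded-phase} together with the majorant-free version of the block compiler for the unbounded case. The only difference is cosmetic. You route $s$ directly through the common fine stride $1/v$, with one coarse-to-fine step ($n=ak$) and one block step ($p=b$). The paper instead first normalizes $\B_{a/q}=\B_{1/q}$ and then applies Theorem~\ref{thm:coarse} to $1/q=(v/q)(1/v)$.
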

\begin{proof}
Necessity follows from Corollary~\ref{cor:necessary}. For the bounded sufficiency, Theorems~\ref{thm:coarse} and~\ref{thm:block}, using Lemma~\ref{lem:toolkit}, give $\B_{a/q}=\B_{1/q}$. If $q\mid v$, then $1/q=(v/q)(1/v)$, and Theorem~\ref{thm:coarse} gives $\B_{1/q}\subseteq\B_{1/v}=\B_t$.

For the unbounded case, Lemma~\ref{lem:unbounded-phase} supplies the whole operator simulation $\U_{nh}\subseteq\U_h$. Combining this simulation and the unrestricted version of Theorem~\ref{thm:block} gives $\U_{a/q}=\U_{1/q}$, and the same denominator-divisibility implication follows.
\end{proof}

\begin{corollary}[Distinct recursion regimes]\label{cor:strict}
For every nonempty stride set $Q\subseteq\mathbb Q_{>0}$, finite or infinite, $\B_Q\subsetneq\U_Q$.
\end{corollary}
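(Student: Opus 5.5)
Two things have to be shown: the inclusion $\B_Q\subseteq\U_Q$, and a witness function in $\U_Q\setminus\B_Q$. The inclusion is immediate from Definition~\ref{def:algebras}, by induction over construction stages. Every bounded-recursion node $\mathrm{SR}_{\rho_s}(a,b,M)$ defines the same unique terminating solution as $\mathrm{USR}_{\rho_s}(a,b)$; the majorant is only an admission condition and does not change the function. Composition and the initial functions are shared by the two algebras.

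For strictness I will use the one-coordinate global bound, Lemma~\ref{lem:onecoord}. Every $f\in\B_Q$ satisfies $f(\mathbf x)\le x_i+C$ for some coordinate $i$ and some integer $C$. It therefore suffices to exhibit a unary $g\in\U_Q$ with $g(x)-x$ unbounded. Lemma~\ref{lem:onecoord} is stated for finite $Q$, but a fixed $f\in\B_Q$ has a finite syntax tree. It thus uses only finitely many strides and lies in $\B_{Q_0}$ for some finite $Q_0\subseteq Q$; the same applies on the unbounded side. The structural induction in that lemma never uses finiteness of $Q$ anyway, so the bound holds for all of $\B_Q$.

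For the witness I fix any $s\in Q$ and work in $\U_s\subseteq\U_Q$. First I define doubling-type growth by unbounded recursion along $\rho_s$: put $G(0)=0$ and $G(y)=S(S(G(\rho_s(y))))$ for $y>0$. Then $G(y)=2D_s(y)=2\lceil y/s\rceil$. This is linear, not superlinear, and is not yet enough when $s<1/2$: the slope $2/s$ may still be large, but I want a clean witness independent of $s$. A better choice is $H(x,0)=x$, $H(x,y)=S(H(x,\rho_s(y)))$, which gives $H(x,y)=x+D_s(y)$. Feeding this into a second recursion, $P(0)=0$ and $P(y)=H(P(\rho_s(y)),y)$, uses $y$ as a parameter. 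To avoid dependence on the current address, I will instead use a frozen copy: let $P'(x,0)=0$ and $P'(x,y)=H(P'(x,\rho_s(y)),x)$. Then $P'(x,x)=D_s(x)^2$, which is quadratic in $x$ for large $x$. The diagonal $g(x)=P'(x,x)$ lies in $\U_s$, and $g(x)-x\to\infty$. Hence $g$ violates every bound $x_1+C$, so $g\notin\B_Q$.

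The main obstacle is making the use of Lemma~\ref{lem:onecoord} airtight: its bound must hold globally for every element of $\B_Q$ regardless of which majorants were used, and $Q$ may be infinite. Finite program support settles both points. The computation $P'(x,x)=D_s(x)^2$ is a routine induction on $D_s(y)$, giving $P'(x,y)=D_s(y)D_s(x)$.
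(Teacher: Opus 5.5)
Your proof is correct and follows the paper's route. The paper's witness is exactly your intermediate $H(x,y)=x+D_s(y)$, which it excludes from $\B_Q$ via Lemma~\ref{lem:onecoord} because it exceeds both $x+C$ (vary $y$) and $y+C$ (vary $x$). Your nested recursion giving $D_s(x)^2$ is valid but unnecessary, since even the diagonal $H(x,x)=x+D_s(x)$ is a unary witness. Your aside about $G(y)=2D_s(y)$ is backwards: $G$ fails to beat $y+C$ only when $s\ge 2$, not when $s<1/2$. You discard $G$, so this does not affect the proof.
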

\begin{proof}
Fix $s\in Q$. The unbounded recursion with base $a(x)=x$ and transition $b(x,z,u)=S(u)$ defines $F(x,y)=x+D_s(y)$. This is not bounded by $x+C$ (vary $y$) or $y+C$ (vary $x$), hence is excluded from $\B_Q$ by Lemma~\ref{lem:onecoord}.
\end{proof}

\section{Mixed strides and supernatural denominators}\label{sec:mixed}
\begin{lemma}[Synthesis of a common rational unit]\label{lem:bezout}
Let $Q=\{a_1/q_1,\ldots,a_k/q_k\}$ consist of positive reduced rationals. Put $L=\lcmop(q_1,\ldots,q_k)$ and $u=1/L$. Then $\rho_u$ and $T_u$ belong to $\B_Q$.
\end{lemma}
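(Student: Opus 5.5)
The plan is to use Lemma~\ref{lem:pred}, which says $\rho_s\in\B_Q$ exactly when $s=\sum_i n_i(a_i/q_i)-m>0$ with $n_i,m\in\NN_0$. So it suffices to show that $1/L$ has such a representation. For $T_u$, I would either find a representation of $-1/L$ in the same additive form plus a positive integer, or obtain $T_u$ by composing $\rho$-maps with successors. Concretely, the goal is nonnegative integers $n_i$ and $m$ with
\[
 \sum_{i=1}^k n_i\frac{a_i}{q_i}=m+\frac1L .
\]
Then the composite $\rho_{a_1/q_1}^{[n_1]}\circ\cdots\circ\rho_{a_k/q_k}^{[n_k]}\circ S^{[m]}$ equals $\rho_{1/L}$. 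This holds because truncated translations compose additively when all shifts are nonpositive in the right order: first add $m$, then subtract each stride in turn. The partial sums stay correctly truncated, since $(y+m-\sigma)_+$ with $\sigma=m+1/L$ gives $(y-1/L)_+$, and truncation at zero only occurs when the final value would be nonpositive anyway. By the same reasoning, $\sum n_i' a_i/q_i = m'-1/L$ with $m'\ge1$ gives $T_{1/L}(x)=\rho_{\sigma'}(x+m')$, which is exact for all $x\ge0$ because $x+m'-\sigma'=x+1/L>0$. Each $\rho_{a_i/q_i}$ is in $\B_Q$ by the one-step bounded recursion with majorant $y$, exactly as in Lemma~\ref{lem:toolkit}.

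The arithmetic core is Bezout. Write $a_i/q_i=a_i(L/q_i)/L$ and set $c_i=a_i L/q_i$. I claim $\gcd(c_1,\ldots,c_k,L)=1$. Take a prime $p\mid L$, and choose $i$ with $v_p(q_i)=v_p(L)$. Then $p\nmid L/q_i$, and $p\nmid a_i$ because $a_i/q_i$ is reduced and $p\mid q_i$. Hence $p\nmid c_i$. Bezout then gives integers with $\sum n_i c_i - mL=\pm1$. To make all $n_i$ and $m$ nonnegative, I would add large multiples of the relation $L\cdot c_1 - c_1\cdot L=0$, that is, increase $n_1$ by $NL$ and $m$ by $Nc_1$, and similarly for any other negative $n_i$. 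This keeps the identity and makes every coefficient nonnegative. Dividing by $L$ yields both required representations, with $m'\ge1$ achievable by enlarging $N$.

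The only step needing care is the gcd computation: choosing, for each prime, an index where the denominator carries the full $p$-adic valuation of the lcm. I expect this to be the main, though mild, obstacle. The sign-fixing and the composition identities for truncated translations are routine checks along the lines already used in Lemma~\ref{lem:toolkit}.
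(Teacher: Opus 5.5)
Your proof is correct and follows essentially the paper's route: a B\'ezout identity for the common unit $1/L$, followed by composing truncated translations (predecessors after successors, as in Lemma~\ref{lem:pred}). The only difference is in how the B\'ezout step is set up. The paper first reduces to the unit strides $1/q_i$ via Lemma~\ref{lem:toolkit}, then uses $\gcd(L/q_1,\ldots,L/q_k)=1$ together with $T_u=\rho_N\circ T_P$ and $\rho_u=\rho_P\circ T_N$. You instead fold the numerators and the integer successor into a single B\'ezout step, via $\gcd(a_1L/q_1,\ldots,a_kL/q_k,L)=1$, and your $p$-adic argument establishes that gcd correctly.
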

\begin{proof}
By Lemma~\ref{lem:toolkit}, $\rho_{1/q_i}$ and $T_{1/q_i}$ are in $\B_Q$. As $\gcd(L/q_1,\ldots,L/q_k)=1$, there are integers $c_i$ such that $\sum_i c_i/q_i=1/L$. Separating positive and negative coefficients gives nonnegative rationals $P,N$, sums of available strides, with $P-N=u$. Using compositions of available translations and truncated predecessors, $T_u=\rho_N\circ T_P$ and $\rho_u=\rho_P\circ T_N$. If $N=0$, the former is simply $T_P$.
\end{proof}

\begin{theorem}[Finite mixed rational classification]\label{thm:mixed}
Every nonempty finite set $Q\subset\mathbb Q_{>0}$ satisfies
\[
\boxed{\B_Q=\B_{1/L(Q)},\qquad\U_Q=\U_{1/L(Q)}.}
\]
\end{theorem}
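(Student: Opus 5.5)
The plan is to prove the two inclusions $\mathcal C_{1/L}\subseteq\mathcal C_Q$ and $\mathcal C_Q\subseteq\mathcal C_{1/L}$ separately, for $\mathcal C\in\{\B,\U\}$, by induction over program stages.

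\emph{The inclusion $\mathcal C_Q\subseteq\mathcal C_{1/L}$.} Each $s=a_i/q_i\in Q$ has $q_i\mid L$. Theorem~\ref{thm:classification} therefore gives $\mathcal C_s\subseteq\mathcal C_{1/L}$. That theorem concerns single-stride algebras, but its proof is an operator-level compiler, so it applies to mixed programs. We show by induction on the stages of a program over $Q$ that every function it defines lies in $\mathcal C_{1/L}$.
\begin{itemize}
\item Initial functions and composition are immediate.
\item At a recursion node along $s\in Q$, the data $a,b$ (and $M$ in the bounded case) already lie in $\mathcal C_{1/L}$ by induction.
\item The compilers of Theorems~\ref{thm:coarse} and~\ref{thm:block} (respectively Lemma~\ref{lem:unbounded-phase} together with the unrestricted block simulation) turn an $s$-recursion with such data into a $1/L$-program.
\item Concretely, $s=a_i(L/q_i)\cdot(1/L)$, so $\rho_s$ is an integer multiple $n$ of the unit $h=1/L$. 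Theorem~\ref{thm:coarse} with this $h$ and $n$ is applied directly to a single operator node whose data come from $\B_h$.
\end{itemize}

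\emph{The inclusion $\mathcal C_{1/L}\subseteq\mathcal C_Q$.} By Lemma~\ref{lem:bezout}, $\B_Q$, and hence $\U_Q$, contains $\rho_u$ and $T_u$ for $u=1/L$. Pick any $t=a_1/q_1\in Q$. Then $q_1\mid L$, so $t=pu$ with $p=a_1L/q_1$ a positive integer. The algebra $\mathcal C_Q$ is closed under bounded (or unbounded) $t$-recursion and composition. Its remaining hypotheses for Theorem~\ref{thm:block} are supplied as follows:
\begin{itemize}
\item $T_{ju}$ is obtained by composing $T_u$ with itself.
\item $T_t$ is likewise an iterate of $T_u$.
\item The fixed $u$-threshold tests are the nonzero test (a bounded recursion with constant bound one) composed with iterates of $\rho_u$.
\end{itemize}
Theorem~\ref{thm:block} then shows that $\mathcal C_Q$ is closed under the full $u$-recursion operator. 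A stage induction over $1/L$-programs gives $\mathcal C_{1/L}\subseteq\mathcal C_Q$. In the unbounded case we use the unrestricted analogue of Theorem~\ref{thm:block}, invoked already in the proof of Theorem~\ref{thm:classification}; there the majorant bookkeeping is simply omitted.

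\emph{Main obstacle.} The delicate point is in the first inclusion: the cited compilers must be applicable node by node to data that were themselves produced by compilation, with majorants carried over. Theorem~\ref{thm:coarse} uses Lemma~\ref{lem:onecoord} on the compiled description of $b$ inside $\B_h$. That lemma holds in $\B_h$, so the trichotomy remains available. Likewise, the earlier majorant $M$ has already been compiled into $\B_{1/L}$, and the source inequality $F\le M$ is semantic, so it survives compilation. I will state explicitly that the compilers are operator simulations relative to an ambient algebra containing the compiled data. This is how they were proved, and it is what makes the mixed induction go through.
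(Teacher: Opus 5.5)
Your proposal is correct and follows the paper's proof. The inclusion $\mathcal C_Q\subseteq\mathcal C_{1/L}$ comes from the single-stride compilers plus stage induction, and the converse comes from Lemma~\ref{lem:bezout} with Theorem~\ref{thm:block} at a fixed $t=pu\in Q$, omitting the majorant in the unbounded case. Your direct use of Theorem~\ref{thm:coarse} (resp.\ Lemma~\ref{lem:unbounded-phase}) at $h=1/L$, $n=a_iL/q_i$ spells out the operator-level reading that the paper's ``stage induction'' relies on implicitly: at an $s$-node of a mixed program the data lie in $\mathcal C_Q$, not necessarily in $\mathcal C_s$.
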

\begin{proof}
By Theorem~\ref{thm:classification}, each single-stride algebra for a member of $Q$ is included in the canonical algebra at $u=1/L(Q)$; stage induction gives $\B_Q\subseteq\B_u$ and $\U_Q\subseteq\U_u$. Conversely Lemma~\ref{lem:bezout} constructs $\rho_u,T_u$ inside $\B_Q$. Fix $t\in Q$, so $t=pu$ for an integer $p\ge1$. All the hypotheses of Theorem~\ref{thm:block} hold in $\B_Q$, providing closure under bounded $u$-recursion and hence $\B_u\subseteq\B_Q$. The same compilation without a majorant proves the unbounded inclusion.
\end{proof}

For an infinite set, its entire algebra still contains only functions having finite programs. The appropriate denominator invariant is therefore a supernatural (Steinitz) number.
\begin{definition}[Supernatural denominator]
For nonempty $Q\subseteq\mathbb Q_{>0}$, define
\[
 \mathfrak L(Q)=\prod_{\ell\ \mathrm{prime}}\ell^{e_\ell(Q)},\qquad
 e_\ell(Q)=\sup_{s\in Q}v_\ell(\den(s))\in\NN_0\cup\{\infty\}.
\]
Write $\mathfrak L(Q)\mid\mathfrak L(P)$ if every exponent of the former is no larger than the corresponding exponent of the latter.
\end{definition}

\begin{theorem}[All rational stride sets]\label{thm:super}
For arbitrary nonempty $Q,P\subseteq\mathbb Q_{>0}$,
\[
\boxed{\B_Q\subseteq\B_P\iff\U_Q\subseteq\U_P
 \iff\mathfrak L(Q)\mid\mathfrak L(P).}
\]
Furthermore, for $\mathcal C\in\{\B,\U\}$,
\[
 \mathcal C_Q=\bigcup_{\substack{d\in\NN_{>0}\\d\mid\mathfrak L(Q)}}\mathcal C_{1/d}.
\]
Consequently, each rational-stride regime contains continuum many distinct algebras.
\end{theorem}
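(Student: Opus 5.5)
I would prove the union formula first and derive the order characterization from it. The first key fact is finite support. Every $f\in\mathcal C_Q$ has a finite program, and that program uses recursion along only finitely many strides, forming a finite $Q_0\subseteq Q$. By stage induction, $f\in\mathcal C_{Q_0}$, and Theorem~\ref{thm:mixed} gives $\mathcal C_{Q_0}=\mathcal C_{1/L(Q_0)}$. Each prime exponent of $L(Q_0)$ is at most $e_\ell(Q)$, so $d=L(Q_0)$ divides $\mathfrak L(Q)$. This proves $\subseteq$ in the union formula.

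For $\supseteq$, let $d\mid\mathfrak L(Q)$. For each prime power $\ell^{k}\,\|\,d$ we have $k\le e_\ell(Q)$, so some $s_\ell\in Q$ has $\ell^{k}\mid\den(s_\ell)$. The finite set $Q_0=\{s_\ell\}$ then satisfies $d\mid L(Q_0)$. Theorem~\ref{thm:classification} gives $\mathcal C_{1/d}\subseteq\mathcal C_{1/L(Q_0)}$, Theorem~\ref{thm:mixed} gives $\mathcal C_{1/L(Q_0)}=\mathcal C_{Q_0}$, and $\mathcal C_{Q_0}\subseteq\mathcal C_Q$ is monotonicity of the closure. The union is directed, since $\mathcal C_{1/d},\mathcal C_{1/d'}\subseteq\mathcal C_{1/\lcmop(d,d')}$ and $\lcmop(d,d')\mid\mathfrak L(Q)$.

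For the order, sufficiency is immediate from the union formula: if $\mathfrak L(Q)\mid\mathfrak L(P)$, then every $d\mid\mathfrak L(Q)$ also divides $\mathfrak L(P)$. Necessity is the main obstacle, because it must be stated for an infinite $Q$. Suppose $\mathcal C_Q\subseteq\mathcal C_P$ and, for some prime $\ell$, $k\le e_\ell(Q)$ is finite while $\ell^k\nmid\mathfrak L(P)$. Pick $s\in Q$ with $\ell^k\mid\den(s)$; then $\rho_s\in\mathcal C_P$. By finite support, $\rho_s\in\mathcal C_{P_0}$ for a finite $P_0\subseteq P$, and Lemma~\ref{lem:pred} writes $s=\sum n_t t-m$ with $t\in P_0$. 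The right side has denominator dividing $L(P_0)$, which divides $\mathfrak L(P)$, contradicting $\ell^k\mid\den(s)$. The same argument works verbatim for both regimes, so both conditions are equivalent to $\mathfrak L(Q)\mid\mathfrak L(P)$. This also gives the first equivalence $\B_Q\subseteq\B_P\iff\U_Q\subseteq\U_P$.

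For the continuum claim, take any set $S$ of primes and put $Q_S=\{1/\ell:\ell\in S\}$, so that $\mathfrak L(Q_S)=\prod_{\ell\in S}\ell$. Distinct sets $S$ give supernatural numbers that are not mutually divisible, hence distinct algebras in each regime by the order characterization. Since there are $2^{\aleph_0}$ such sets $S$, this gives continuum many distinct algebras. The upper bound is automatic, because there are only continuum many subsets of $\mathbb Q_{>0}$.
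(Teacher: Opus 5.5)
Your proof is correct and takes the paper's route: finite program support plus Theorem~\ref{thm:mixed} gives the union formula, which yields sufficiency, and necessity comes from transferring a predecessor into a finite $P_0\subseteq P$ and applying Lemma~\ref{lem:pred}. The differences are cosmetic: you transfer $\rho_s$ itself rather than $\rho_{1/q}$, and you use squarefree sets $\{1/\ell:\ell\in S\}$ rather than arbitrary prime-power strides for the continuum count. You should also add one element of $Q$ (resp.\ $P$) to $Q_0$ (resp.\ $P_0$) so that it is nonempty when Theorem~\ref{thm:mixed} or Lemma~\ref{lem:pred} is invoked, e.g.\ when $d=1$.
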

\begin{proof}
Each finite program uses only a finite $F\subset Q$ of strides. By Theorem~\ref{thm:mixed}, the algebra generated by $F$ equals $\mathcal C_{1/L(F)}$. The ordinary positive integers dividing $\mathfrak L(Q)$ are exactly those dividing some $L(F)$ for finite $F\subset Q$, since every integer has finitely many prime divisors. This proves the union formula. If supernatural divisibility holds, the union formula proves inclusion. Conversely, inclusion transfers each predecessor $\rho_{1/q}\in\mathcal C_s$ ($s\in Q$, reduced denominator $q$) to a finite-stage program using a finite subset of $P$. The obstruction Lemma~\ref{lem:pred} then gives $q\mid L(F)$ for some finite $F\subset P$, which forces each exponent inequality. There are continuum many distinct supernatural numbers, and each is realized by a choice of reciprocal prime-power strides.
\end{proof}

\section{Unbounded grid semantics and exact intersections}\label{sec:grid}
For $\delta=1/q$, let $I_0^\delta=\{0\}$ and $I_k^\delta=((k-1)\delta,k\delta]$ for $k\ge1$. An $m$-ary $\delta$-grid cell is a product $\prod_i I_{k_i}^\delta$. A \emph{descriptor} $(i,z)$ with $i=0$ or $1\le i\le m$, and $z\in\mathbb Z$, denotes the constant $z\delta$ when $i=0$ and $x_i+z\delta$ otherwise. We require its value to be nonnegative on the cell. Zero-coordinate occurrences are normalized to constant descriptors. A \emph{primitive-recursive grid code} is a primitive-recursive integer-valued function of the indices $\mathbf k$ producing such a valid descriptor on every cell.

\begin{lemma}[Uniform primitive-recursive code operations]\label{lem:uniformgrid}
Fix $\delta=1/q$ and arity $m$.  A signed grid descriptor is
encoded as a natural-number tuple specifying its tag
$i\in\{0,\ldots,m\}$ and signed offset $z\in\mathbb Z$.
There are primitive-recursive procedures for the following operations:
(i) finding the grid cell of the value of a valid descriptor on a given
input cell; (ii) substituting descriptors in a descriptor; and
(iii) evaluating the next descriptor in a finite-step recursion, given
the input cell, its predecessor depth, and the preceding-state descriptor.
All these procedures preserve nonnegativity on the input cell whenever
the represented program does.
\end{lemma}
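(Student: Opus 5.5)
The plan is to reduce everything to finite arithmetic on the integer cell indices and the signed offsets, all bounded by explicit primitive-recursive functions of the inputs. Fix $\delta=1/q$. I encode a descriptor $(i,z)$ by a pair $(i,\langle z\rangle)$, using any standard primitive-recursive encoding of $\mathbb Z$ into $\NN_0$, say sign and magnitude. Integer addition, comparison and the ceiling of rational quotients with fixed denominator $q$ are primitive recursive on such codes, and I will use these facts freely. A cell is a tuple $\mathbf k$ of indices, with $I_0=\{0\}$ and $I_k=((k-1)\delta,k\delta]$.

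\emph{Step (i).} For a constant descriptor $(0,z)$ with $z\ge0$, the value $z\delta$ lies in $I_z$, and in $I_0$ when $z=0$. For $(i,z)$ with $i\ge1$ on a cell with $k_i\ge1$, the value $x_i+z\delta$ ranges over $((k_i+z-1)\delta,(k_i+z)\delta]$, which is exactly the cell $I_{k_i+z}$. Nonnegativity of the descriptor on the cell forces $k_i+z\ge1$, or else the value is identically $0$ when $k_i+z=0$. That last case would mean a value ranging over $(-\delta,0]$, so validity actually forces $k_i+z\ge1$. If $k_i=0$, the descriptor normalizes to the constant $(0,z)$. So the output cell is a primitive-recursive case distinction on $(i,z,\mathbf k)$. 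The important observation here is that translated-coordinate values never straddle cell boundaries, because offsets are multiples of $\delta$.

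\emph{Step (ii).} To substitute inner descriptors $d_1,\dots,d_r$ into an outer descriptor $(j,z)$ on an input cell: if $j=0$ the result is $(0,z)$. Otherwise I replace by $d_j=(i,z')$, obtaining $(i,z+z')$, and normalize to a constant if $i=0$ or $k_i=0$. Nonnegativity is preserved because the composite value equals the represented composite function on the cell. The cell required to evaluate the outer code on the inner values comes from (i).

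\emph{Step (iii).} The recursion-variable predecessor $\rho_\delta(y)$ on cell index $k\ge1$ has descriptor $(y,-1)$ and lies in cell $k-1$. The next state descriptor is obtained by applying the transition code $b$ at the tuple of cells consisting of the parameter cells, the predecessor cell $k-1$, and the cell of the previous state descriptor computed by (i). The resulting descriptor is then substituted via (ii), plugging in the parameter projections, the predecessor descriptor and the preceding state descriptor.

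The depth is exactly $k$, since $D_\delta$ on $I_k$ equals $k$. Hence iterating this step is a primitive recursion on $k$, whose base comes from $a$'s code. For a general rational stride $s$ one first reduces to $\delta$ via the grid, though the lemma only requires the one-step operation.

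The main obstacle is the careful treatment of the zero cell and the normalization. The predecessor of cell $1$ is the point $0$, and a descriptor $x_i+z\delta$ evaluated where $k_i=0$ must become a constant so that cell membership stays exact. I would verify by cases on $k_i=0$ versus $k_i\ge1$ that every output descriptor is valid on the whole cell, not just at sample points. With that settled, everything is a bounded case split plus integer addition, hence uniformly primitive recursive in $m$, $q$ and the codes.
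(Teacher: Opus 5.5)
Your proposal is correct and follows essentially the same route as the paper. It computes output cells by the shift $k_i+z$ (validity forces $k_i+z\ge1$), substitutes by adding signed offsets with zero-coordinate normalization, and runs a primitive recursion on the depth $k$ in which the transition code is evaluated at the explicitly computed parameter, predecessor and preceding-state cells.

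The only difference is cosmetic. The paper keeps the input cell fixed and iterates over stages $j\le k$, with predecessor descriptor $y-(k-j+1)\delta$ in cell $j-1$. Your recursion on the cell index instead requires re-expressing the previous cell's state descriptor in the current variable. Concretely, you substitute the predecessor descriptor $(y,-1)$, so the offset shifts by $-1$ whenever that descriptor uses $y$. Your operation (ii) already covers this, but you should state it explicitly.
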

\begin{proof}
Constants $z\delta\geq0$ belong to cell $z$ (with zero treated
separately).  For a nonzero input coordinate $x_i$ in cell $k_i\ge1$,
a valid expression $x_i+z\delta$ has image in cell $k_i+z$:
its range is $((k_i+z-1)\delta,(k_i+z)\delta]$, and validity
requires $k_i+z\ge1$.  An input coordinate fixed to zero is
normalized to a constant descriptor.  These cell-index operations
use integer addition, comparison and a sign tag, all primitive
recursive on the chosen encodings.

Substitution replaces the chosen coordinate by its inner descriptor
and adds the two signed offsets.  For a recursion on $y$ in cell
$k\ge1$, at bottom-up stage $j=1,\ldots,k$ the predecessor is
zero when $j=1$, and when $j>1$ it has descriptor
$y-(k-j+1)\delta$ and cell index $j-1$.  The transition code is
therefore evaluated on a tuple of explicitly calculated grid
indices, consisting of the frozen parameter cells, predecessor cell
and preceding-state cell.  The output descriptor is obtained by
substitution.  All operations have primitive-recursive encodings,
and primitive recursion over $j\le k$ computes the final descriptor.
The boundary convention $((k-1)\delta,k\delta]$ makes these
statements valid at endpoints, not merely on open interiors.
\end{proof}

\begin{lemma}[Syntax-uniform implementation of integer programs]\label{lem:pr-syntax}
Fix $\delta=1/q$. There is a computable map taking a standard
primitive-recursive syntax tree for $g:\mathbb N_0^m\to\mathbb N_0$
to a $\U_\delta$ syntax tree for a real function $\widetilde g$
agreeing with $g$ on every integer tuple. Parameters are frozen at
each recursion node, and the translation assumes no real addition
primitive. The depth clock $D_\delta$ belongs to $\U_\delta$.
\end{lemma}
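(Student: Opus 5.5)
We will define the translation $g\mapsto\widetilde g$ by structural recursion on the primitive-recursive syntax tree. The target is an extension $\widetilde g:\RR^m\to\RR$ that agrees with $g$ on $\mathbb N_0^m$; its values off the integer lattice are irrelevant. The guiding idea is that integers enter the real model through the depth clock. On an integer argument $n$ we have $D_\delta(n)=qn$, and the integer $n$ is represented by the real number $n$ itself. Integer recursion on $n$ is then simulated by a $\delta$-recursion that performs one ``real'' integer step every $q$ microsteps.

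\textbf{Depth clock.} We first show $D_\delta\in\U_\delta$. Take the unbounded recursion with base $0$ and transition $b(z,u)=S(u)$. This gives $D_\delta(y)$ counted in unit increments, which is exactly the function used in Corollary~\ref{cor:strict} with parameter $x=0$.

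\textbf{Initial functions and composition.} Zero, successor and projections translate to themselves. Composition translates to composition of the translated trees. Agreement on integer tuples is preserved, because every inner function takes integer tuples to integers.

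\textbf{Primitive recursion.} Consider $g(\mathbf x,0)=\alpha(\mathbf x)$ and $g(\mathbf x,n+1)=\beta(\mathbf x,n,g(\mathbf x,n))$, with $\alpha,\beta$ already translated. We apply the phase-accurate unbounded compiler of Lemma~\ref{lem:unbounded-phase} with $h=\delta$ and $n=q$, so that $s=1$. That lemma turns any unrestricted $1$-recursion into a $\delta$-recursion, using only the finite phase tests $D_\delta(\cdot)\bmod q$ from Lemma~\ref{lem:toolkit} and the predecessor iterate $\rho_\delta^{[q-1]}$. Both are available in $\U_\delta$.

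It therefore suffices to write the source as a stride-$1$ recursion: $F(\mathbf x,0)=\widetilde\alpha(\mathbf x)$ and $F(\mathbf x,y)=\widetilde\beta(\mathbf x,\rho_1(y),F(\mathbf x,\rho_1(y)))$. At integer $y=n+1$ the predecessor address is the integer $n$, so by induction $F$ agrees with $g$ on integers. In the compiled program the source argument is frozen as the extra parameter $w$, and the result is the diagonal $G(\mathbf x,w,w)$.

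For syntax uniformity the construction must be a fixed finite syntactic template for each node type, with $q$ fixed. The template produces the phase gate, the conditional (an unbounded recursion on the gate value), $\rho_1=\rho_\delta^{[q]}$, and the recursion node itself. Every ingredient is therefore a computable syntax-to-syntax map, and the whole translation is computable by recursion on the tree.

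\textbf{Main obstacle.} The delicate step is checking that the phase gate $D_\delta(z)+1\equiv D_\delta(w)\pmod q$ is built from earlier $\U_\delta$ expressions without any real addition. I would handle it with residue clocks: finite-state recursions that output a constant code for the residue, as in Lemma~\ref{lem:toolkit}. Equality of two residues is then a finite case split made of nested conditionals on constant values, tested through $\rho_\delta$ thresholds.

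Finally, the freezing of parameters at each recursion node must be made explicit. This is routine but should be written out, since no addition primitive may appear anywhere in the translated tree.
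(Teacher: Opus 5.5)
Your proposal is correct, but it handles the primitive-recursion node by a different decomposition than the paper.

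\textbf{The paper's route.} The paper does not pass through a stride-$1$ recursion or Lemma~\ref{lem:unbounded-phase}. It writes one gated $\delta$-recursion $G(\mathbf x,y)$. The gate is the single-clock test $D_\delta(z)\equiv0\pmod q$ on the predecessor $z=\rho_\delta(y)$, and on active steps $z$ itself is passed to $\widetilde b$. This works because only integer inputs $N$ matter, and for them $D_\delta(N)=qN\equiv0\pmod q$, so the phase is known in advance. The active predecessors are then exactly $0,1,\ldots,N-1$, all integers, and the induction hypothesis for $\widetilde b$ applies directly.

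\textbf{Your route.} You define the real unit-stride recursion $F$ and compile it with the general phase-accurate compiler. That requires:
\begin{enumerate}[label=(\alph*)]
\item an extra frozen copy $w$ of the recursion argument;
\item a comparison of two residue clocks, $D_\delta(z)+1$ against $D_\delta(w)$ modulo $q$;
\item the iterate $\rho_\delta^{[q-1]}$;
\item diagonal evaluation $G(\mathbf x,w,w)$.
\end{enumerate}

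\textbf{What each buys.} Your route reuses an already proved lemma, so no new correctness induction is needed. It also gives a stronger conclusion: each translated recursion node equals the genuine stride-$1$ real recursion on all real inputs, not only on integers. The paper's route is a shorter, self-contained template with a one-residue gate and no extra parameter, verified only on the integer inputs the statement requires.

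Your appeals to Lemma~\ref{lem:unbounded-phase} and Lemma~\ref{lem:toolkit} are not circular, since neither depends on the present lemma.
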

\begin{proof}
Define $D_\delta$ by step recursion with base zero and transition
$S(u)$. Thus $D_\delta(N)=qN$ for integer $N$. A finite-state
step recursion computes $D_\delta(z)\bmod q$; for $q=1$ this is
the constant zero. Finite conditional selection is available by
unbounded step recursion.

Proceed by structural induction on primitive-recursive syntax.
Zero, successor and projections are inherited, and composition uses
the already translated children. For a primitive-recursion node
with data $a(\mathbf n)$ and $b(\mathbf n,k,u)$, define
\[
 G(\mathbf x,0)=\widetilde a(\mathbf x),\qquad
 G(\mathbf x,y)=
 \begin{cases}
 \widetilde b(\mathbf x,z,G(\mathbf x,z)),
       &D_\delta(z)\equiv0\pmod q,\\
 G(\mathbf x,z),&\text{otherwise},
 \end{cases}
 \quad z=\rho_\delta(y).
\]
All transition functions are constructed before this recursion.
On integer recursion input $N$, the bottom-up predecessor addresses
are $0,\delta,\ldots,N-\delta$. The gate selects exactly
$0,1,\ldots,N-1$, in that order. At each selected stage the
parameters, selected address and stored state are integers, so the
induction hypothesis for $\widetilde b$ applies. The other stages
retain the state. Hence $G(\mathbf n,N)$ equals the ordinary
primitive-recursion value, including at $N=0$.

This performs an effective finite collection of syntax operations at
each source node, proving uniformity. Integer pairing, sign tags,
comparisons and finite cases are translated as primitive-recursive
integer subprograms, not added as real primitives.
\end{proof}

\label{sec:proofcheck}
\begin{example}[A complete integer microstep schedule]
\label{ex:integertrace}
Take $\delta=1/2$ and an ordinary primitive recursion
$f(n,0)=a(n)$ and $f(n,k+1)=b(n,k,f(n,k))$.
On the integer input $k=3$, the $\delta$-recursion has six
bottom-up microsteps; their predecessor addresses, in order, are
\[
 0,\ \tfrac12,\ 1,\ \tfrac32,\ 2,\ \tfrac52.
\]
The gate $D_{1/2}(z)\equiv0\pmod2$ is active exactly at
$z=0,1,2$. The three active updates are therefore
\[
 b(n,0,a(n)),\quad
 b(n,1,f(n,1)),\quad
 b(n,2,f(n,2)),
\]
while the other three microsteps leave the stored value unchanged.
This verifies the indexing, including the first update at predecessor
zero and the right-closed input endpoint; it does not assume integer
addition or multiplication as real-valued basis operations.
\end{example}

\begin{corollary}[Variable shifts]\label{lem:microformal}
There are $\U_\delta$ functions which, for real $x\ge0$ and
integer $n\ge0$, return $x+n\delta$ and
$\rho_\delta^{[n]}(x)$. There is also a function
$\operatorname{AddInt}(x,n)=x+n$ on these inputs.
\end{corollary}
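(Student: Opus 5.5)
We construct all three functions directly as unbounded $\delta$-recursions in $\U_\delta$, with $x$ frozen as a parameter and recursion on the integer argument $n$. None of them needs real addition: each uses only the available translations and predecessors, iterated by the recursion itself.

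\textbf{Step 1: the translation $T_\delta$ and the predecessor $\rho_\delta$.} By Lemma~\ref{lem:toolkit} with $t=\delta=1/q$ (take $k=1$), both $T_\delta$ and $\rho_\delta$ belong to $\B_\delta\subseteq\U_\delta$. These will serve as transitions.

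\textbf{Step 2: the shift $x+n\delta$ via a gated recursion.} Recursion on $y=n$ along $\rho_\delta$ takes $qn$ microsteps. So we cannot simply apply $T_\delta$ at each step; that would give $x+qn\delta=x+n$. Instead we reuse the gate from Lemma~\ref{lem:pr-syntax}. Define
\[
V(x,0)=x,\qquad V(x,y)=\begin{cases}T_\delta(V(x,z)),&D_\delta(z)\equiv0\pmod q,\\ V(x,z),&\text{otherwise},\end{cases}\qquad z=\rho_\delta(y).
\]
The residue clock $D_\delta(z)\bmod q$ and the unbounded conditional are earlier functions, so this is a valid unbounded recursion. On integer input $n$ the active predecessors are exactly $0,1,\dots,n-1$ (as in Example~\ref{ex:integertrace}). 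Induction on $n$ therefore gives $V(x,n)=x+n\delta$.

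Replacing $T_\delta$ by $\rho_\delta$ in the transition gives $W(x,n)=\rho_\delta^{[n]}(x)$. The inductive step is $\rho_\delta\circ\rho_\delta^{[k]}=\rho_\delta^{[k+1]}$, which holds because truncation composes correctly: $\rho_\delta^{[j]}(x)=(x-j\delta)_+$.

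\textbf{Step 3: $\operatorname{AddInt}$.} This is easier. Recurse on $y$ ungated, with base $x$ and transition $S$ applied only at active gate positions. Equivalently, drop the gate and use transition $T_\delta$ at every microstep, since there are exactly $qn$ microsteps on integer $n$. Either version gives $x+n$; alternatively, take $V$ iterated $q$ times.

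\textbf{Main point of care.} The only delicate step is the indexing: verifying that the gate fires exactly $n$ times on integer $n$, including at the predecessor $z=0$. That verification is the one already carried out in Lemma~\ref{lem:pr-syntax} and Example~\ref{ex:integertrace}, so it transfers verbatim. No majorant is needed because we work in $\U_\delta$.
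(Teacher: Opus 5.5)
Your proposal is correct and follows the paper's proof. Both use the gated recursion of Lemma~\ref{lem:pr-syntax}, with the frozen real base $x$ and transition $T_\delta$, $\rho_\delta$ or $S$ applied at the $n$ selected stages; the paper gets $T_\delta$ as $\rho_\delta^{[q-1]}\circ S$ rather than citing Lemma~\ref{lem:toolkit}, which makes no difference. One wording slip: in Step~3, ``ungated \ldots\ applied only at active gate positions'' contradicts itself, but both readings work, namely gated $S$, or ungated $T_\delta$ over all $qn$ microsteps, and each gives $x+n$.
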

\begin{proof}
Use the gated recursion of Lemma~\ref{lem:pr-syntax} on integer
counter $n$, with base the frozen real parameter $x$. At each
selected stage apply, respectively, $T_\delta$, $\rho_\delta$,
or $S$ to the state. There are exactly $n$ selected stages.
Here $\rho_\delta$ is a one-step definable function and
$T_\delta=\rho_\delta^{[q-1]}\circ S$, so all transitions are
available before the recursion. The state may be real because
these transitions are exact on all nonnegative real inputs.
\end{proof}

\begin{lemma}[Uniform decoder for affine grid descriptors]\label{lem:grid-decode}
Let $c(\mathbf k)$ be a primitive-recursive code returning a valid
descriptor $(i,z)$ on each $\delta$-product cell. Its extensional
real function is definable in $\U_\delta$ by a uniform syntax
construction.
\end{lemma}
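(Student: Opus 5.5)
The plan is to build $\widetilde c(\mathbf x)$ in three stages. First compute the integer cell indices of the inputs. Then run a translated primitive-recursive program on those indices to obtain the descriptor. Finally apply the descriptor to the actual real input by an exact variable shift.

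\emph{Cell indices.} The depth clock $D_\delta$ of Lemma~\ref{lem:pr-syntax} satisfies $D_\delta(x)=k$ exactly when $x\in I_k^\delta$, including the convention $D_\delta(0)=0$. Hence $k_i=D_\delta(x_i)$ is an integer-valued $\U_\delta$ function giving the index of $x_i$'s cell.

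\emph{Integer part of the computation.} I would fix primitive-recursive encodings of the descriptor: a tag function $\tau(\mathbf k)\in\{0,\dots,m\}$, a sign bit $\sigma(\mathbf k)\in\{0,1\}$ and a magnitude $\mu(\mathbf k)=|z|$. Each is extracted from $c$ by primitive-recursive projection of the tuple code. Lemma~\ref{lem:pr-syntax} turns their syntax trees computably into $\U_\delta$ programs $\widetilde\tau,\widetilde\sigma,\widetilde\mu$. Composed with the clocks $D_\delta(x_i)$, these return the correct integers on every real input.

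\emph{Real output.} Corollary~\ref{lem:microformal} supplies functions returning $x+n\delta$ and $\rho_\delta^{[n]}(x)$ for real $x$ and integer $n$. For tag $i\ge1$, the output is $x_i+\mu\delta$ when $\sigma=0$ and $\rho_\delta^{[\mu]}(x_i)$ when $\sigma=1$. In the negative case validity gives $x_i-\mu\delta\ge0$ on the cell, so truncation is harmless and the value is exactly $x_i+z\delta$. For tag $0$ the output is $0+\mu\delta$, and validity forces $\sigma=0$ or $\mu=0$.

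\emph{Assembly.} The branches are combined by finite unbounded conditionals, which are available without restriction in $\U_\delta$. Equality tests such as $[\tau=i]$ for the fixed $i\le m$ come from the translated integer program. Selecting $x_i$ among the $m$ coordinates is a fixed finite case split, so only finitely many branches arise. Every step is an effective syntax operation on the syntax tree of $c$, which gives uniformity.

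The main obstacle is correctness at cell boundaries together with the normalization of zero coordinates. The convention $I_k^\delta=((k-1)\delta,k\delta]$ must match $D_\delta$ exactly, and I expect the ceiling definition to make this immediate. A descriptor using a coordinate $x_i=0$ was already normalized to a constant, so either formula gives the same value there. The negative-offset case must also not silently truncate; the validity hypothesis of the grid code is exactly what rules this out.
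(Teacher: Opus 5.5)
Your proposal is correct and follows essentially the same route as the paper: depth clocks $D_\delta(x_j)$ for the cell indices, Lemma~\ref{lem:pr-syntax} for the lifted tag and signed offset, a finite unbounded conditional to choose the base ($0$ or $x_i$), and the variable-length shift or descent of Corollary~\ref{lem:microformal}. As in the paper, validity of the descriptor on the whole cell is what rules out truncation in the negative-offset case.
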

\begin{proof}
Compute the integer depths $k_j=D_\delta(x_j)$ and then the
integer tag and signed offset using Lemma~\ref{lem:pr-syntax}.
Build a base $v=0$ if $i=0$ and $v=x_i$ otherwise, using a
finite unbounded conditional on the tag. If $z\ge0$, run the
variable-length shift $T_\delta^{[z]}(v)$; if $z<0$, run the
variable-length descent $\rho_\delta^{[-z]}(v)$, as in
Corollary~\ref{lem:microformal}. For a nonnegative valid constant
$z\delta$ this yields exactly $z\delta$. For a variable descriptor,
validity on the whole current cell ensures
$v+z\delta\ge0$ there, and hence the truncation in a descent does
not change the intended affine expression. All tag and sign tests
act on integer outputs of the lifted code, while the real coordinate
$v$ is frozen during each shift. Thus the resulting program agrees
with the specified descriptor on every cell, including singleton
zero cells and right endpoints.\end{proof}

\begin{lemma}[Coding direction]\label{lem:coding}
Every function of $\U_{1/q}$ has a primitive-recursive $1/q$-grid code.
\end{lemma}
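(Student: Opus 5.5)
We argue by structural induction over the finite-stage syntax of a $\U_{1/q}$ program, with $\delta=1/q$. For each program node we produce a primitive-recursive function $c(\mathbf k)$ that returns a valid descriptor on every $\delta$-product cell. The representation claim underlying this is a sharpened form of Lemma~\ref{lem:affine}: on each full cell $\prod_i I^\delta_{k_i}$, every $f\in\U_{1/q}$ equals a single constant $z\delta$ or a single translate $x_i+z\delta$. The offset lies in $\delta\mathbb Z$ because $M_{\{1/q\}}\subseteq\delta\mathbb Z$. We prove this cell-constancy simultaneously with the primitive recursiveness of the code, as part of the same induction.

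\textbf{Base and composition.} Zero, successor and projections have constant codes: $(0,0)$, $(1,q)$ and $(i,0)$. For a composition $g(h_1,\dots,h_r)$, given the input cell $\mathbf k$ we do the following.
\begin{enumerate}
\item Compute each inner descriptor $c_{h_j}(\mathbf k)$.
\item Use Lemma~\ref{lem:uniformgrid}(i) to find the cell index of each inner value. Since $x_i+z\delta$ maps $I^\delta_{k_i}$ bijectively onto $I^\delta_{k_i+z}$, and constants map into a single cell, the tuple of inner values ranges inside one product cell $\mathbf k'$.
\item Evaluate $c_g(\mathbf k')$ and substitute by Lemma~\ref{lem:uniformgrid}(ii).
\end{enumerate}
The one subtle point is that several inner functions may be translates of the same coordinate $x_i$. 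The outer code is constant on the product cell $\mathbf k'$ regardless of correlations among the inner values, so the composite is still a single descriptor on $\mathbf k$. Validity (nonnegativity) is inherited because the composite equals a genuine nonnegative function.

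\textbf{Recursion node.} Take $F=\mathrm{USR}_{\rho_\delta}(a,b)$ and a cell whose recursion coordinate has index $k$. If $k=0$, return $c_a$. For $k\ge1$, run the bottom-up loop of Lemma~\ref{lem:uniformgrid}(iii) for $j=1,\dots,k$.
\begin{itemize}
\item At stage $j$ the predecessor address is $0$ when $j=1$, and $y-(k-j+1)\delta$ otherwise. Its cell index is $j-1$, and it is fixed by the cell alone because $\rho_\delta^{[k-j+1]}$ maps $I^\delta_k$ onto $I^\delta_{j-1}$ (and onto $\{0\}$ for the first step).
\item The state descriptor from the previous stage determines a state cell by Lemma~\ref{lem:uniformgrid}(i).
\item We evaluate $c_b$ on the parameter cells, the predecessor cell and the state cell, then substitute the predecessor and state descriptors into the result.
\end{itemize}
This loop is a primitive recursion on $j\le k$ whose step function is primitive recursive, so the resulting code $c_F(\mathbf k,k)$ is primitive recursive. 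By induction on $j$, each intermediate value is a single descriptor on the whole cell; this is what makes $F$ cell-constant in descriptor form.

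\textbf{Main obstacle.} The hardest part is the cell-coherence claim at each stage: the predecessor and the state must each lie in one fixed cell for all inputs in the cell, since otherwise $c_b$ would not apply uniformly. The grid is right-closed, $I^\delta_k=((k-1)\delta,k\delta]$, and translates by multiples of $\delta$ preserve such cells exactly. Zero coordinates are normalized to constants, and the descriptor $x_y-(k-j+1)\delta$ is valid since $k-j+1\le k$. Together these give coherence, including at endpoints. I would isolate this as an explicit invariant — each descriptor maps its input cell bijectively onto an output cell or to a constant — and verify that substitution preserves it. The remaining bookkeeping of signed offsets, tags and pairing is routine primitive-recursive arithmetic.
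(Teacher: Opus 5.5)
Your proposal is correct and follows essentially the same route as the paper: structural induction over program stages, with composition handled by the cell-shift and substitution operations of Lemma~\ref{lem:uniformgrid}, and recursion handled by a primitive recursion over the $k$ bottom-up stages, whose predecessor lies in cell $j-1$ and whose state cell is read off the previous descriptor. Two small points need tightening: validity of the predecessor descriptor $y-(k-j+1)\delta$ requires $k-j+1\le k-1$ (that is, $j\ge2$, exactly your ``otherwise'' branch), not $k-j+1\le k$; and your constant base codes must be normalized to constant descriptors on cells where the relevant coordinate index is zero.
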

\begin{proof}
The base functions have evident codes. For composition, the image of a grid cell under a nonnegative constant or translated-coordinate expression lies wholly inside a unique grid interval, whose index is obtained from the input index by a fixed integer shift or constant (Lemma~\ref{lem:uniformgrid}). Substituting the corresponding outer descriptors is primitive recursive. For recursion, the last index $k$ gives precisely $k$ steps. In bottom-up order the predecessor address at stage $j$ lies in the $j$th grid interval ($j=0,\ldots,k-1$), with descriptor zero at $j=0$ and $y-(k-j)\delta$ otherwise. Evaluate the known code for the transition on the preceding descriptor and the accumulated state descriptor. The exact index calculation and tagged-offset computation are in Lemma~\ref{lem:uniformgrid}; thus this is a primitive recursion on the integer stage count with finite tuples of integer tags and signed offsets. The resulting descriptor is valid on the whole original cell. Inducting over program stages proves the claim.
\end{proof}

\begin{lemma}[Realization direction]\label{lem:realization}
Every nonnegative finite-arity function with a primitive-recursive $1/q$-grid code belongs to $\U_{1/q}$.
\end{lemma}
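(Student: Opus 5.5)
This is essentially Lemma~\ref{lem:grid-decode}, together with a remark on what "primitive-recursive grid code" means for a function. Let $f$ be a nonnegative $m$-ary function with a primitive-recursive $1/q$-grid code $c(\mathbf k)$, and put $\delta=1/q$. The code returns a valid descriptor $(i,z)$ on every cell, and $f$ agrees with that descriptor on the cell. The plan is to build a $\U_\delta$ program that does three things on input $\mathbf x$: compute the cell indices, run the code on them, and decode the resulting descriptor back into a real value.

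First, compute the cell indices. The depth clock $D_\delta$ lies in $\U_\delta$ by Lemma~\ref{lem:pr-syntax}, and $D_\delta(x_j)=k_j$ exactly: the singleton cell $\{0\}$ gives index $0$, and the right-closed intervals $I_k^\delta$ match $\lceil x_j/\delta\rceil$. Next, feed $(k_1,\ldots,k_m)$ to the translated program $\widetilde c$ supplied by Lemma~\ref{lem:pr-syntax}. Its outputs are the integer tag $i$, a sign bit, and $|z|$, obtained by unpairing, which is itself primitive recursive and so is also translated. Because every $k_j$ is an integer, the translated program agrees with the integer code.

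Then decode, as in Lemma~\ref{lem:grid-decode}. Select the base value $v$, equal to $0$ or to $x_i$, by a finite unbounded conditional on the tag; there are only $m+1$ cases. Then apply the variable shift $T_\delta^{[z]}$ or the variable descent $\rho_\delta^{[-z]}$ from Corollary~\ref{lem:microformal}, with the choice made by the sign bit and the iteration count $|z|$ taken from the integer output. Validity of the descriptor on the whole cell gives $x_i+z\delta\ge0$. This has two consequences:
\begin{itemize}
\item truncation never activates along the descent, since the partial values $x_i-j\delta$ with $j\le|z|$ stay at or above the final value;
\item a constant descriptor yields exactly $z\delta$.
\end{itemize}
Hence the program equals $f$ on every cell, including endpoints and zero cells. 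The composite is built from earlier-stage pieces only, so it belongs to $\U_{1/q}$.

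The step I expect to need the most care is showing that the integer-valued outputs of the translated code can safely act as iteration counts and as conditional controls inside a real program. Corollary~\ref{lem:microformal} only asserts correctness for integer $n$. So I must check that $\widetilde c$, evaluated at integer depths, really returns integers, and that the gated recursion counts exactly $|z|$ active stages. Both follow from the integer-agreement clause of Lemma~\ref{lem:pr-syntax}.
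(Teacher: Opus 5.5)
Your proposal is correct and follows the paper's route. The paper's proof cites Lemma~\ref{lem:pr-syntax} to evaluate the code on the depth-clock indices and Lemma~\ref{lem:grid-decode} to decode the descriptor, in the same stage order you spell out: depth clocks, lifted integer code, tag and sign guards, then variable shifts or descents. It also observes, as you do, that every recursion node is unbounded, so no majorant has to be verified.
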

\begin{proof}
By Lemma~\ref{lem:pr-syntax}, the code can be evaluated on the
integer grid indices without adding arithmetic to the real basis.
Lemma~\ref{lem:grid-decode} then constructs a real-valued program
realizing the decoded descriptor on every cell. Its stage order is:
first the depth clocks, next the lifted integer code, then the tag
and sign guards, and finally the variable translations or descents.
All recursion nodes in this construction are unbounded, so no
unverified intermediate majorant is invoked.
\end{proof}

\begin{theorem}[Exact unbounded grid semantics]\label{thm:grid}
A nonnegative finite-arity $f$ belongs to $\U_{1/q}$ if and only if it has a primitive-recursive $1/q$-grid code.
\end{theorem}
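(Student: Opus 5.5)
The theorem is the conjunction of the two directions already established, so the plan is to assemble them and check that their hypotheses match the statement exactly. For the forward implication, take $f\in\U_{1/q}$ and apply Lemma~\ref{lem:coding}. It produces, by induction over the program stages of a finite syntax tree for $f$, a primitive-recursive $1/q$-grid code. The only point to verify is that every descriptor it produces is valid, meaning nonnegative on its whole cell. This holds because every function in $\U_{1/q}$ is nonnegative-valued: each descriptor is the exact value of $f$ on its cell. Lemma~\ref{lem:affine} guarantees that this value has translated-coordinate form with offset in $M_{\{1/q\}}\subseteq\delta\mathbb Z$, and Lemma~\ref{lem:uniformgrid} gives the cell-wise index bookkeeping, including the singleton zero cells and right-closed endpoints.

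For the reverse implication, take a nonnegative $f$ with a primitive-recursive $1/q$-grid code $c$. By Lemma~\ref{lem:pr-syntax}, the syntax tree of $c$ compiles into $\U_{1/q}$ and is evaluated on the integer depths $D_{1/q}(x_j)$, which Lemma~\ref{lem:pr-syntax} also places in $\U_{1/q}$. Lemma~\ref{lem:grid-decode} then turns the integer tag and signed offset into the real value. It does this with a tag conditional and a variable-length shift or truncated descent from Corollary~\ref{lem:microformal}, with the real coordinate frozen during the shift. Lemma~\ref{lem:realization} records exactly this stage order, so $f\in\U_{1/q}$. I would also check that $D_{1/q}(x_j)$ is exactly the cell index $k_j$ under the convention $I_k=((k-1)\delta,k\delta]$, with $D(0)=0$. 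This makes decoding agree with the code on every cell, endpoints included.

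The step I expect to be the main obstacle is the forward direction's recursion case. There one must confirm that the nested primitive recursion over the stage index $j\le k$ uses only finitely many integer tags and offsets, whose size is bounded primitive-recursively in $\mathbf k$. This is what makes the whole code primitive recursive rather than merely computable. I would argue it by noting that each stage applies a fixed primitive-recursive substitution (Lemma~\ref{lem:uniformgrid}(ii),(iii)) to the previous state descriptor. The resulting state sequence is therefore defined by primitive recursion on $j$, with a primitive-recursive step function and the input cell as a parameter. Closure of primitive recursive functions under this scheme finishes the argument, and the theorem follows by combining the two lemmas.
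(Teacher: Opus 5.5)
Your proposal is correct and matches the paper, whose proof of this theorem is exactly the combination of Lemma~\ref{lem:coding} for the forward direction and Lemma~\ref{lem:realization} for the reverse direction. One small correction to your obstacle paragraph: the recursion case needs no primitive-recursive size bound on the tags and offsets, because ordinary primitive recursion on the stage index $j$ with a primitive-recursive step function already produces a primitive-recursive code (size bounds matter only for bounded schemes such as those in Section~\ref{sec:bounded-meets}).
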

\begin{proof}Lemmas~\ref{lem:coding} and~\ref{lem:realization}.\end{proof}

\begin{lemma}[Connectivity of the rational overlap graph]\label{lem:overlapgraph}
Fix positive integers $q,v$, $d=\gcd(q,v)$ and
$\Delta=1/d$.  Within an interval $J=((K-1)\Delta,K\Delta]$
with $K\ge1$, consider the $1/q$-cells and $1/v$-cells meeting
its interior.  Make a bipartite graph whose vertices are these
cells and whose edges are pairs with positive-length overlap.
This graph is connected.  More generally, the overlap graph of
product cells inside a product of such $J$'s is connected.
\end{lemma}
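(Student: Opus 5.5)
The plan is to prove connectivity in one dimension first by rescaling, and then pass to products with a standard argument on product graphs.

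\emph{One dimension.} Multiply by $qv/d=\operatorname{lcm}(q,v)$. Write $q=dq'$ and $v=dv'$, so that $\gcd(q',v')=1$. After translating $J$ to start at $0$, the interval becomes $(0,q'v']$. The $1/q$-cells become the consecutive intervals $((i-1)v',iv']$ for $1\le i\le q'$, and the $1/v$-cells become $((j-1)q',jq']$ for $1\le j\le v'$. No other cells meet the interior of $J$: the endpoints $(K-1)\Delta$ and $K\Delta$ lie on both grids because $d\mid q$ and $d\mid v$. So the two families tile $J$ exactly.

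Order all cells by their left endpoints. Take two consecutive cells $I_i,I_{i+1}$ of the same family, sharing the endpoint $p=iv'$ with $0<p<q'v'$. This point lies in the interior or at the right end of a unique cell $C$ of the other family. If $p$ is interior to $C$, then $C$ overlaps both $I_i$ and $I_{i+1}$ with positive length, since $C$ has positive length on each side of $p$.

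It remains to exclude $p$ being an endpoint of $C$. That would mean $p$ is a common multiple of $v'$ and $q'$ strictly between $0$ and $q'v'$. This contradicts coprimality, because the least positive common multiple is $q'v'$. I expect this coprimality step to be the only genuine point, and it is where $d=\gcd$ is used.

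Hence any two consecutive cells of one family are joined through a cell of the other family. Each family is therefore internally connected in the graph. Finally, the first cells of the two families both contain an initial segment of $J$, so they overlap, and the graph is connected. The degenerate cases $q'=1$ or $v'=1$ are covered: one family is then the single cell $J$, which overlaps every cell of the other family.

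\emph{Products.} For the product version, let a product cell be a tuple of one-dimensional cells, one per coordinate. Each coordinate is drawn from the $1/q$-grid or the $1/v$-grid, according to the bipartition of the product cells. Two product cells overlap in positive measure exactly when their coordinates overlap with positive length in every factor.

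To join two product cells, change coordinates one factor at a time along one-dimensional overlap paths. The difficulty is that in the bipartite setting every coordinate of a $1/q$-product cell is a $1/q$-cell. So a single product edge moves all coordinates simultaneously to the other family, and each coordinate must take a one-dimensional step at the same time.

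The remedy is that a cell always overlaps itself in a shared region. Concretely, a one-dimensional path can be padded by going back and forth along an edge $C\to C'\to C$. Since every one-dimensional graph is connected and bipartite with at least one edge, I would use paths of equal length in all coordinates. Parity is automatic, because all coordinates start in the same family and end in the same family. Padding with back-and-forth steps then equalizes the lengths.

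Stepping all coordinates together along these synchronized paths gives a path of product cells in which consecutive cells overlap in every factor, hence in positive measure. This proves connectivity of the product overlap graph. If instead the intended product graph allows mixed grids per coordinate, the same argument applies with one coordinate moved at a time, and no synchronization is needed.
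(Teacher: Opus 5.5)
Your proof is correct. The one-dimensional part is the paper's argument in rescaled form. Both rest on coprimality of $q/d$ and $v/d$ excluding a common grid boundary strictly inside $J$, so that at every interior boundary the cell of the other family straddles it. The paper phrases this as a left-to-right sweep over the merged boundaries. You instead link consecutive same-family cells through a straddling cell and then join the two families at the left end of $J$.

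The product step is where you take a different route. The paper changes one coordinate at a time and lifts a one-dimensional path. On the $1/q$-steps it keeps every other coordinate at its fixed $1/q$ interval. On the $1/v$-steps it uses, in each other coordinate, a fixed $1/v$ interval overlapping that $1/q$ interval. This connects all $1/q$ product cells, and each $1/v$ product cell is then attached to some overlapping $1/q$ product cell.

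You move all coordinates simultaneously instead. You pad the coordinate paths with back-and-forth steps to a common length, which is possible because all of them have the same parity. This is the standard proof that the colour-matched part of a tensor product of connected bipartite graphs is connected.

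The paper's lift avoids parity and padding bookkeeping and needs only one auxiliary overlapping partner per frozen coordinate. Your version is symmetric in the two families, joins arbitrary pairs of vertices directly (including cross-family pairs), and applies verbatim to any finite tensor product of connected bipartite graphs with an edge. The paper's proof also remarks on singleton zero-coordinate cells. These lie outside the statement as posed, since every $J$ has $K\ge1$, so omitting them is not a gap.
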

\begin{proof}
Write $q=da$ and $v=db$ with $\gcd(a,b)=1$.
An internal common grid boundary in $J$ would satisfy
$j/q=\ell/v$ and be strictly between $(K-1)/d$ and $K/d$.
Equivalently, $jb=\ell a$, so $a\mid j$ and the boundary
belongs to $\Delta\mathbb Z$; there is no such point inside $J$.
Move from left to right across the ordered union of all fine-grid
boundaries.  At a boundary strictly inside $J$ only one family
changes cells, while the cell from the other family overlaps both
adjacent cells in intervals of positive length.  This connects
successive vertices and the entire graph.  For product cells, consider first two vertices from the $1/q$
family whose cell indices differ in just one coordinate. In that
coordinate, a path in the one-dimensional bipartite graph connects
the two $1/q$ intervals. In every other coordinate keep its $1/q$
interval fixed, and choose a $1/v$ interval with positive-length
overlap with it. Each intermediate $1/v$ interval on the path,
together with those fixed choices, gives a $1/v$ product cell
overlapping both neighboring $1/q$ product cells in positive
relative volume. Thus the one-dimensional path lifts to the product
graph. Changing one coordinate at a time connects all $1/q$
product vertices. Every $1/v$ product vertex overlaps some $1/q$
product vertex, so the full graph is connected. Coordinates fixed
at zero are omitted throughout, with overlaps taken in the
remaining relative domain. If all coordinates are zero, there is
only one cell from each family and they overlap in that singleton.
\end{proof}

Figure~\ref{fig:overlap-grids} shows the one-dimensional overlap
mechanism for denominators two and three.
\begin{figure}[htbp]
\centering
\begin{tikzpicture}[x=1cm,y=1cm,
  every node/.style={font=\small},
  cell/.style={draw=black!65,rounded corners=2pt,
    minimum width=10mm,minimum height=6mm,inner sep=3pt}]
  \node[anchor=east,font=\footnotesize] at (-.3,3.75) {$x$};
  \draw[black!65] (0,3.75) -- (6,3.75);
  \foreach \xx/\tt in {0/0,2/{\frac13},3/{\frac12},4/{\frac23},6/1} {
    \draw (\xx,3.66) -- (\xx,3.84);
    \node[above] at (\xx,3.9) {$\tt$};
  }
  \node[anchor=east] at (-.3,2.95) {Half-grid};
  \filldraw[fill=blue!6,draw=black!65] (0,2.6) rectangle (3,3.3);
  \filldraw[fill=blue!6,draw=black!65] (3,2.6) rectangle (6,3.3);
  \node at (1.5,2.95) {$A_1=(0,\tfrac12]$};
  \node at (4.5,2.95) {$A_2=(\tfrac12,1]$};
  \node[anchor=east] at (-.3,1.95) {Third-grid};
  \filldraw[fill=black!4,draw=black!65] (0,1.6) rectangle (2,2.3);
  \filldraw[fill=black!4,draw=black!65] (2,1.6) rectangle (4,2.3);
  \filldraw[fill=black!4,draw=black!65] (4,1.6) rectangle (6,2.3);
  \node at (1,1.95) {$B_1=(0,\tfrac13]$};
  \node at (3,1.95) {$B_2=(\tfrac13,\tfrac23]$};
  \node at (5,1.95) {$B_3=(\tfrac23,1]$};
  \node[anchor=east,align=right,font=\footnotesize] at (-.3,.1)
    {Positive-length\\overlap graph};
  \node[cell,fill=blue!6] (a1) at (1.5,.45) {$A_1$};
  \node[cell,fill=blue!6] (a2) at (4.5,.45) {$A_2$};
  \node[cell,fill=black!4] (b1) at (1,-.75) {$B_1$};
  \node[cell,fill=black!4] (b2) at (3,-.75) {$B_2$};
  \node[cell,fill=black!4] (b3) at (5,-.75) {$B_3$};
  \foreach \a/\b in {a1/b1,a1/b2,a2/b2,a2/b3}
    \draw[semithick] (\a) -- (\b);
  \draw[black!25] (6.65,-1.15) -- (6.65,4.35);
  \node[anchor=north west,align=left,text width=4.2cm] at (7.15,3.9)
    {No shared boundary in $(0,1)$.\\[8pt]
     Connected overlaps force one affine descriptor on $(0,1]$.};
  \node[anchor=west,align=left] at (7.15,.5)
    {$\displaystyle\tfrac12\mathbb Z\cap\tfrac13\mathbb Z=\mathbb Z$\\[7pt]
     $\displaystyle\mathcal U_{1/2}\cap\mathcal U_{1/3}=\mathcal U_1$};
\end{tikzpicture}
\caption{The overlap mechanism behind literal unbounded intersections,
illustrated on the common coarse cell $(0,1]$. Edges correspond to
positive-length overlaps; all intervals are left-open and right-closed.
Connectedness forces matching affine expressions, and the common offset
is integral. The cell at zero is treated separately. For arbitrary
denominators $q,v$, the common coarse cell has width $1/\gcd(q,v)$;
the effective realization argument is supplied by the grid semantics.}
\label{fig:overlap-grids}
\end{figure}

\begin{lemma}[Gluing two rational grids]\label{lem:glue}
Let $d=\gcdop(q,v)$ and $\Delta=1/d$. If $f$ has a $1/q$-grid code and a $1/v$-grid code, it is represented by a single descriptor on each $\Delta$-grid cell. Its offset belongs to $\Delta\mathbb Z$. If either fine code is primitive recursive, so is the resulting coarse descriptor code.
\end{lemma}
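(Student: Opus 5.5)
Fix a $\Delta$-product cell $\prod_i J_{K_i}$, where each $J_K$ is $\{0\}$ if $K=0$ and $((K-1)\Delta,K\Delta]$ if $K\ge1$. Since $d\mid q$ and $d\mid v$, every $1/q$-cell and every $1/v$-cell is contained in a unique $\Delta$-cell: the points $(K-1)/d$ and $K/d$ are boundaries of both fine grids, and the right-closed convention matches. So the fine cells of each family meeting the interior of the coarse cell tile it. On each fine cell, $f$ equals the affine function given by its descriptor: a constant $z\delta$ or $x_i+z\delta$ with $\delta\in\{1/q,1/v\}$.

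The plan is to propagate one descriptor along the overlap graph of Lemma~\ref{lem:overlapgraph}. Suppose a $1/q$-product cell and a $1/v$-product cell overlap in positive relative volume, with coordinates fixed at zero omitted. Two affine functions of the forms $c$ or $x_i+c$ that agree on a set with nonempty relative interior must be identical, so the tags and offsets coincide. I will handle this once the normalization of zero coordinates to constants is fixed, so that distinct normalized descriptors give distinct functions on any relatively open set. Connectivity of the overlap graph then forces all fine cells inside the coarse cell to carry the same descriptor $(i,c)$.

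Next comes integrality of the offset. The common offset satisfies $c=z_1/q=z_2/v$ with $z_1,z_2\in\mathbb Z$, so $c\in\frac1q\mathbb Z\cap\frac1v\mathbb Z$. Write $q=da$, $v=db$ with $\gcd(a,b)=1$. Then $z_1b=z_2a$, so $a\mid z_1$ and $c\in\frac1d\mathbb Z=\Delta\mathbb Z$. The resulting function agrees with $f$ on the union of the fine cells, which is the whole coarse cell, including right endpoints and the singleton zero cells. There each family contributes exactly one cell and the same argument applies trivially. Nonnegativity is inherited, so the descriptor is valid.

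For primitive recursiveness, suppose the $1/q$ code $c_q$ is primitive recursive. For a coarse index tuple $\mathbf K$, take the fine index tuple $\mathbf k$ with $k_i=(q/d)K_i$, the rightmost $1/q$-subcell, or $0$ when $K_i=0$. Evaluate $c_q(\mathbf k)=(i,z)$ and output $(i,z/(q/d))$. The division is exact by the previous step, and all of this is primitive recursive. The symmetric map works if only the $1/v$ code is primitive recursive. The coarse code then needs no search, because the descriptor is already known to be constant across the coarse cell.

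The main obstacle I expect is the rigidity step on lower-dimensional faces and its interaction with the normalization conventions. Near coordinates equal to zero, or on cells where some $x_i$ lives in $\{0\}$, an expression $x_i+c$ and a constant can agree. I would handle this by working in the relative domain exactly as Lemma~\ref{lem:overlapgraph} does. Only positive-dimensional coordinates are active there, and zero coordinates have already been converted to constants by the stated normalization. So uniqueness of the descriptor holds on every relatively open overlap.
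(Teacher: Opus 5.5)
Your proposal is correct and follows the paper's proof step for step: rigidity of normalized translated-coordinate descriptors on overlaps of positive relative volume, propagation along the connected overlap graph of Lemma~\ref{lem:overlapgraph}, the lattice identity $\frac1q\mathbb Z\cap\frac1v\mathbb Z=\frac1d\mathbb Z$, and a primitive-recursive extraction that samples one fixed fine index and divides the offset exactly by $q/d$. The only difference is cosmetic: you sample the rightmost subcell $(q/d)K_i$ instead of the paper's first subcell $(K_i-1)q/d+1$, and either choice works.
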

\begin{proof}
Inside each interval of length $\Delta$, the interior boundaries of the two fine grids never coincide: a common boundary would be a positive multiple of both $1/q$ and $1/v$, hence of $\Delta$. The overlap graph of their fine cells is connected by Lemma~\ref{lem:overlapgraph}. On an open overlap, distinct coordinate-affine expressions cannot agree, so the descriptors must match as real affine functions. Cartesian products of these connected graphs treat any number of coordinates; singleton zero-coordinate cells are handled by normalization. Endpoint values follow because intervals are right-closed.

The offset is simultaneously in $(1/q)\mathbb Z$ and $(1/v)\mathbb Z$, and their intersection equals $(1/d)\mathbb Z$. To extract its code, for each nonzero coarse index $K_i$ sample the first $1/q$-grid index $(K_i-1)q/d+1$ inside that coarse cell, keeping zero indices zero. The selected fine code is correct throughout the coarse cell. Its offset in $1/q$ units is divisible by $q/d$, and fixed-integer division gives the $1/d$-offset. The extraction is primitive recursive.
\end{proof}

\begin{theorem}[Literal unbounded intersections]\label{thm:meet}
For all $q,v\ge1$,
\[\boxed{\U_{1/q}\cap\U_{1/v}=\U_{1/\gcd(q,v)}.}\]
For nonempty rational stride sets $Q,P$, the same identity holds with the right side replaced by $\U_R$, where $\mathfrak L(R)$ is the primewise greatest common divisor of $\mathfrak L(Q)$ and $\mathfrak L(P)$.
\end{theorem}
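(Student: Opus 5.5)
The plan is to prove the single-stride identity first and then deduce the statement for stride sets. For $\U_{1/q}\cap\U_{1/v}=\U_{1/d}$, the inclusion $\supseteq$ is immediate from Theorem~\ref{thm:classification}: $d$ divides both $q$ and $v$, so $\U_{1/d}\subseteq\U_{1/q}$ and $\U_{1/d}\subseteq\U_{1/v}$. For $\subseteq$, take $f$ in both algebras. By Lemma~\ref{lem:coding}, $f$ has a primitive-recursive $1/q$-grid code and a primitive-recursive $1/v$-grid code. Lemma~\ref{lem:glue} then produces a single descriptor on each $\Delta$-cell, where $\Delta=1/d$. Its offset lies in $\Delta\mathbb Z$, and the resulting coarse code is primitive recursive, being obtained from the $1/q$ code by sampling the first fine index in each coarse cell and dividing by $q/d$. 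This coarse code is a primitive-recursive $1/d$-grid code for $f$. It is valid in the required sense: nonnegativity on each coarse cell follows because $f\ge0$ and the descriptor agrees with $f$ everywhere on the cell, endpoints included. Theorem~\ref{thm:grid} (Lemma~\ref{lem:realization}) then gives $f\in\U_{1/d}$.

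Before applying Lemma~\ref{lem:glue}, I will check one point about descriptor conventions. The glued descriptor must use a variable coordinate only when that coordinate is nonzero on the coarse cell, and it must be a constant on zero-coordinate cells. This matches the normalization of the fine codes, because the singleton zero cell $I_0$ is shared by all grids.

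For stride sets $Q,P$, I will use the union formula of Theorem~\ref{thm:super}:
\[
\U_Q\cap\U_P=\Bigl(\bigcup_{d_1\mid\mathfrak L(Q)}\U_{1/d_1}\Bigr)\cap\Bigl(\bigcup_{d_2\mid\mathfrak L(P)}\U_{1/d_2}\Bigr)=\bigcup_{d_1,d_2}\bigl(\U_{1/d_1}\cap\U_{1/d_2}\bigr)=\bigcup_{d_1,d_2}\U_{1/\gcdop(d_1,d_2)}.
\]
Distributivity holds here because the intersection of unions is the union of pairwise intersections. The integers $\gcdop(d_1,d_2)$ with $d_1\mid\mathfrak L(Q)$ and $d_2\mid\mathfrak L(P)$ are exactly the positive integers dividing the primewise gcd $\mathfrak G$. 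Indeed, any $d\mid\mathfrak G$ arises with $d_1=d_2=d$, and conversely each such gcd divides both supernatural numbers. The last step needs a stride set $R$ with $\mathfrak L(R)=\mathfrak G$. Take $R=\{1/\ell^{k}:\ell^k\mid\mathfrak G,\ k\ge0\}$, which is nonempty since it contains $1$, as in the realization remark of Theorem~\ref{thm:super}. The union formula for $R$ then identifies the right side with $\U_R$.

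The main obstacle is the gluing step, and it is already supplied by Lemmas~\ref{lem:overlapgraph} and~\ref{lem:glue}. What remains delicate is that the coarse code must come out primitive recursive and valid. I will rely on the explicit sampling extraction in Lemma~\ref{lem:glue} rather than an unbounded search. Such a search would also work, but only because agreement is guaranteed, and it would cost primitive recursiveness.
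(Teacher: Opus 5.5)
Your proposal is correct and follows essentially the paper's route. You obtain two primitive-recursive fine codes from the grid semantics, glue them via Lemma~\ref{lem:glue}, realize the coarse code in $\U_{1/d}$, and take the reverse inclusion from Theorem~\ref{thm:classification}. For stride sets, distributing the intersection over the union formula of Theorem~\ref{thm:super} and exhibiting an explicit $R$ repackages the paper's finite-support argument via mixed normalization, so the approach is the same.
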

\begin{proof}
By Theorem~\ref{thm:grid}, membership in both single-stride algebras supplies two primitive-recursive fine-grid codes. Lemma~\ref{lem:glue} provides a primitive-recursive coarse-grid code; Theorem~\ref{thm:grid} realizes it in $\U_{1/d}$. The reverse inclusion follows from Theorem~\ref{thm:classification}.

For arbitrary sets, every member of the intersection uses some finite $Q_0\subset Q$ and $P_0\subset P$. Finite mixed normalization reduces these to two ordinary denominators; the first part places the function in their greatest-common-divisor class. That denominator divides the supernatural gcd, hence Theorem~\ref{thm:super} places it in $\U_R$. Conversely, each finite denominator dividing that supernatural gcd is realized inside both stride families, giving inclusion of $\U_R$ into the intersection.
\end{proof}

\section{Bounded intersections and integer predicate precision}
\label{sec:bounded-meets}
For positive integers $q,v$ put $d=\gcdop(q,v)$. The remaining literal
bounded-intersection assertion is
\begin{equation}\label{eq:target}
 \B_{1/q}\cap\B_{1/v}=\B_{1/d}.
\end{equation}
Theorem~\ref{thm:meet} places every common function in $\U_{1/d}$,
with a normalized affine descriptor on each coarse product cell.
When one denominator divides the other, \eqref{eq:target} follows from
the inclusion theorem; otherwise both fine strides are smaller than
$\delta=1/d$. We reduce the latter case to common Boolean predicates,
then identify their integer precision classes. The dependencies are the
rational compilers, Lemma~\ref{lem:onecoord} and the unbounded grid theorem;
no bounded-intersection identity is assumed.

\subsection{Globally bounded grid tools}
Throughout this section $\delta=1/d$ and all tools belong to $\B_\delta$.
The fixed translation $T_\delta$ and predecessor $\rho_\delta$ are available
by the rational toolkit. Write $\nz(t)=[t>0]$ and $\zz(t)=[t=0]$.
Both tests are boundedly definable with constant bound one.

\begin{lemma}[Round-up, grid subtraction and residual]\label{lem:tools}
The following functions belong to $\B_\delta$:
\begin{align*}
 \C_\delta(y)&=\delta\lceil y/\delta\rceil,\\
 R_\delta(a,b)&=\rho_\delta^{[D_\delta(b)]}(a),\\
 r_\delta(x)&=R_\delta(x,\rho_\delta(\C_\delta(x))).
\end{align*}
Their bounds are respectively $y+\delta$, $a$, and $\delta$ for the last
function's actual values. On grid arguments $a,b\in\delta\N$,
$R_\delta(a,b)=(a-b)_+$. For $x>0$,
$r_\delta(x)=x-\C_\delta(x)+\delta\in(0,\delta]$, while $r_\delta(0)=0$.
\end{lemma}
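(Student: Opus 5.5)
I will construct the three functions in the stated order inside $\B_\delta$, each by a single bounded $\delta$-recursion (or a composition of earlier ones). For each I will exhibit a majorant that exists at an earlier stage and holds on \emph{all} real inputs, and then verify the closed forms by induction on the descent depth. The tools I use are $\rho_\delta$, $T_\delta$, the tests $\nz,\zz$ (Lemma~\ref{lem:toolkit}) and the bounded selectors of Lemmas~\ref{lem:select} and~\ref{lem:frozen}.

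\emph{Round-up.} I define $\C_\delta(0)=0$ and $\C_\delta(y)=T_\delta(\C_\delta(\rho_\delta(y)))$ for $y>0$, with transition $b(z,u)=T_\delta(u)$. Induction on $D_\delta(y)$ gives $\C_\delta(y)=\delta D_\delta(y)=\delta\lceil y/\delta\rceil$. The admission bound $y+\delta$ is the earlier function $T_\delta(y)$, and it is valid everywhere because $\delta\lceil y/\delta\rceil<y+\delta$ for $y>0$ and the value is $0$ at $y=0$.

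\emph{Grid subtraction.} With $a$ frozen as a parameter, I define $R_\delta(a,0)=a$ and $R_\delta(a,b)=\rho_\delta(R_\delta(a,\rho_\delta(b)))$ for $b>0$, a $\delta$-recursion on $b$. Induction on $D_\delta(b)$ gives $R_\delta(a,b)=\rho_\delta^{[D_\delta(b)]}(a)=(a-\delta D_\delta(b))_+$. Since $\rho_\delta$ never increases its argument, the projection $a$ is a global earlier majorant. For $a,b\in\delta\N$ we have $\delta D_\delta(b)=b$, so $R_\delta(a,b)=(a-b)_+$.

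\emph{Residual.} I set $r_\delta(x)=R_\delta(x,\rho_\delta(\C_\delta(x)))$, which is a composition of earlier functions, so no new majorant is needed for admission. The only delicate step is computing $D_\delta$ of the grid point $\rho_\delta(\C_\delta(x))$. For $x>0$ this point equals $\C_\delta(x)-\delta\in\delta\N$, whose depth is $D_\delta(x)-1$. Hence
\[
r_\delta(x)=(x-\C_\delta(x)+\delta)_+,
\]
and since $\C_\delta(x)-\delta<x\le\C_\delta(x)$ the truncation is inactive and the value lies in $(0,\delta]$. For $x=0$ everything is zero. The stated bound $\delta$ is therefore a property of the actual values, not an admission bound. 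I expect the main care to go into these two points: the off-by-one in the depth of $\rho_\delta(\C_\delta(x))$, and keeping that pointwise value bound separate from the admission majorants (which are $T_\delta(y)$ and $a$).
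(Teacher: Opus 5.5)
Your proposal is correct and follows the paper's proof: $\C_\delta$ is a base-zero recursion with transition $T_\delta(u)$, admitted by the earlier majorant $T_\delta(y)$. $R_\delta$ is a frozen-parameter $\rho_\delta$-recursion admitted by the projection $a$ on all of $(a,b)$, and $r_\delta$ is obtained by composition, where you also spell out the depth identity $D_\delta(\rho_\delta(\C_\delta(x)))=D_\delta(x)-1$ that the paper leaves implicit.
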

\begin{proof}
Use $\delta$-recursion with base zero and transition $T_\delta(u)$ for
$\C_\delta$. Its solution is $\delta D_\delta(y)\leq y+\delta$, and
$T_\delta(y)$ is an earlier global majorant. For $R_\delta$ use base $a$
and transition $\rho_\delta(u)$ with the frozen earlier bound $a$.
Composition gives $r_\delta$ and its displayed identities. We do not
admit a new recursion under a diagonal-only bound: $R_\delta$ is first
constructed on all $(a,b)$ with bound $a$, then composed.
\end{proof}

\begin{lemma}[Transfer of the within-cell coordinate]\label{lem:transfer}
There is $J_\delta(r,y)\in\B_\delta$ with the earlier global majorant
$y+\delta$, such that for $0<r\leq\delta$ and $y\in\delta\mathbb N_{>0}$,
\[
 J_\delta(r,y)=r+y-\delta.
\]
\end{lemma}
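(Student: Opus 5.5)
We need $J_\delta(r,y)=r+y-\delta$ for $0<r\le\delta$ and $y$ a positive grid point, with $y+\delta$ as an earlier global majorant. The plan is a single bounded $\delta$-recursion on the variable $y$, with $r$ frozen, in the style of the frozen-parameter constructions of Lemma~\ref{lem:frozen}.

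Define
\[
 J_\delta(r,0)=0,\qquad
 J_\delta(r,y)=\begin{cases} r,& \zz(z)=1,\\ T_\delta(z)\ \text{clipped},& \text{otherwise},\end{cases}\quad z=\rho_\delta(y).
\]
The transition is more cleanly written as $r+z$, using the predecessor address $z$ directly. Since addition is not available, we instead take the transition as $J(r,z)+\delta$, i.e.\ $T_\delta$ applied to the recursive state, at positive addresses, and $r$ at address zero. For grid $y=k\delta$ with $k\ge1$, the bottom-up addresses are $0,\delta,\dots,(k-1)\delta$. The first update yields $r$, and each later one adds $\delta$, giving $r+(k-1)\delta=r+y-\delta$; the identity follows by induction on $D_\delta(y)$.

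The obstacle is admissibility: the bound $y+\delta$ must hold on all $(r,y)$, including arbitrary real $r\ge0$ and non-grid $y$. With the state-translation transition the solution is $r+\delta(D_\delta(y)-1)$ for $y>0$, and this is unbounded in $r$. So $r$ must first be clipped. We replace $r$ by $\hat r=R_\delta$-style truncation, namely $\hat r=\min(r,\delta)$ realized as $\rho_\delta$-complement. The simplest realization is to select $r$ when $\nz(\rho_\delta(r))=0$ (that is, $r\le\delta$) and $\delta$ otherwise. This selector is admitted by Lemma~\ref{lem:frozen}: both branches are bounded by $T_\delta$ of the constant zero? The branch $r$ is not globally bounded by $\delta$, so we zero-guard it first with its own earlier bound $r$ and then recombine under the constant $\delta$, exactly as in Lemma~\ref{lem:select}. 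Call the result $\hat r\in[0,\delta]$; it equals $r$ on $0<r\le\delta$.

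Now run the recursion with base $0$, transition value $\hat r$ at $\zz(z)=1$, and $T_\delta(u)$ otherwise. The transition selector is admitted pointwise, since both branches are bounded by $u+\delta$ for any state $u$, which is an earlier function of the state. The closed form is $J=\hat r+\delta(D_\delta(y)-1)\le \delta D_\delta(y)\le y+\delta$ for $y>0$, and $J=0$ at $y=0$. Thus the earlier translation $T_\delta(y)$ is a global majorant on every input, not merely on grid $y$, and the bounded operator is admissible. The displayed identity is then the restriction to $0<r\le\delta$ and $y\in\delta\mathbb N_{>0}$. The main care point is this global clipping of $r$; the rest is the routine induction.
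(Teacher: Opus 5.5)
Your construction is correct and matches the paper's proof: clip $r$ to a function bounded by $\delta$ that agrees with $r$ on $(0,\delta]$, then run the $\delta$-recursion with base $0$, transition $\hat r$ at predecessor address $0$ and $u+\delta$ otherwise. As in the paper, the transition selector is admitted under $u+\delta$, and the recursion under $T_\delta(y)$ via the closed form $\hat r+\delta(D_\delta(y)-1)\le y+\delta$. The only difference is immaterial (the paper zero-guards $r$ to $0$ off $[0,\delta]$ rather than capping it at $\delta$), and you should delete the abandoned first display with the ``clipped'' $T_\delta(z)$.
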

\begin{proof}
First zero-guard $r$ outside $[0,\delta]$, using the fixed threshold
$\nz(\rho_\delta(r))$ and earlier bound $r$. Call the result $g(r)$;
it has actual global bound $\delta$. Define
\[
 J_\delta(r,0)=0,\qquad
 J_\delta(r,y)=
 \begin{cases}
 g(r),&\rho_\delta(y)=0,\\
 J_\delta(r,\rho_\delta(y))+\delta,&\rho_\delta(y)>0.
 \end{cases}
\]
The transition selector is an earlier bounded function: its two candidate
values $g(r)$ and $u+\delta$ are both bounded by $u+\delta$ on all
$(r,z,u)$. The resulting recursion is
$g(r)+(D_\delta(y)-1)\delta$ for $y>0$, and is globally bounded by
$y+\delta$, independently of $r$. This earlier majorant admits the
recursion. The identity follows on the stated grid inputs.
\end{proof}

\begin{lemma}[Bounded counting of an earlier predicate]\label{lem:count}
For every Boolean $P(\mathbf x,t)\in\B_\delta$, there is
$K_P(\mathbf x,y)\in\B_\delta$, with earlier global bound $y+\delta$,
such that on $y=n\delta$,
\[
 K_P(\mathbf x,n\delta)=\delta\sum_{k=1}^{n}P(\mathbf x,k\delta).
\]
\end{lemma}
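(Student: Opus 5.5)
The construction is a single bounded $\delta$-recursion on $y$, with the parameters $\mathbf x$ frozen. The plan mirrors the transfer construction of Lemma~\ref{lem:transfer}: at each positive step, add $\delta$ to the state exactly when the predicate holds at the current grid point. Since $P$ is evaluated at the current address and not at the predecessor, I express that address through the predecessor as $T_\delta(z)$ with $z=\rho_\delta(y)$. This is legitimate because $T_\delta\in\B_\delta$ by the rational toolkit (Lemma~\ref{lem:toolkit}). The recursion is then
\[
 K_P(\mathbf x,0)=0,\qquad
 K_P(\mathbf x,y)=\begin{cases}K_P(\mathbf x,z)+\delta,& P(\mathbf x,T_\delta(z))=1,\\ K_P(\mathbf x,z),& P(\mathbf x,T_\delta(z))=0,\end{cases}\quad z=\rho_\delta(y).
\]
On $y=n\delta$ with $n\ge1$, the bottom-up predecessor addresses are $0,\delta,\ldots,(n-1)\delta$. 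The evaluation points $T_\delta(z)$ are therefore exactly $\delta,2\delta,\ldots,n\delta$. Induction on $n$ then gives the displayed sum, and $n=0$ gives the empty sum.

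Admission has two parts. For the transition, I build it as the frozen-parameter selector of Lemma~\ref{lem:frozen} on the control $c=P(\mathbf x,T_\delta(z))$. Its branches are $u$ (at $c=0$) and $T_\delta(u)=u+\delta$ (at $c>0$). Both are globally bounded by the earlier function $T_\delta(u)$ on every $(\mathbf x,z,u)$, including non-grid $z$ and arbitrary states $u$. Because $P$ is Boolean, testing $c=0$ versus $c>0$ is exactly testing the truth value. Hence the transition is an earlier $\B_\delta$ function.

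For the recursion itself, the closed form holds for every real $y$, not only on the grid. There are $D_\delta(y)$ positive steps, each adding at most $\delta$, so
\[
 0\le K_P(\mathbf x,y)\le \delta D_\delta(y)=\C_\delta(y)\le y+\delta .
\]
So $T_\delta(y)=y+\delta$, an earlier function independent of $\mathbf x$, is a valid global majorant on all inputs. This supplies the bounded-recursion witness.

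The only delicate point is that the majorant must hold off the grid and for every $\mathbf x$, in keeping with the paper's ledger convention. The counting argument above covers this because it never uses grid membership; the grid restriction enters only when identifying the sum.
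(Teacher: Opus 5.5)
Your proof is correct and is essentially the paper's argument: a base-zero $\delta$-recursion whose transition selects between $u$ and $u+\delta$, both bounded by $u+\delta$. The recursion is admitted by the earlier majorant $y+\delta$, because there are $D_\delta(y)$ steps and each adds at most $\delta$. The only cosmetic difference is that the paper tests $P(\mathbf x,\C_\delta(z)+\delta)$ while you test $P(\mathbf x,T_\delta(z))$; these coincide at the grid predecessors $0,\delta,\ldots,(n-1)\delta$, and those are the only points the stated identity uses.
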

\begin{proof}
Take base zero, and at predecessor address $z$ increment the state by
$\delta$ exactly when $P(\mathbf x,\C_\delta(z)+\delta)=1$.
Both transition branches are bounded by $u+\delta$, so the transition
selector is an earlier bounded function. On every real input there are
$D_\delta(y)$ transitions and at most that many increments; hence the
new recursion has the earlier bound $y+\delta$. On grid inputs the tested
thresholds are exactly $\delta,2\delta,\ldots,n\delta$.
\end{proof}

\subsection{Reduction to Boolean intersections}
\begin{theorem}[Boolean predicates suffice]\label{thm:booleanreduce}
Fix $q,v$ and $d=\gcd(q,v)$. If every Boolean function of every finite arity
in $\B_{1/q}\cap\B_{1/v}$ belongs to $\B_{1/d}$, then
\eqref{eq:target} holds for all functions of every finite arity.
The reverse implication is immediate.
\end{theorem}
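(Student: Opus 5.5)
The plan is to take an arbitrary $f\in\B_{1/q}\cap\B_{1/v}$ of arity $m$ and rebuild it inside $\B_\delta$, $\delta=1/d$, from finitely many Boolean predicates, each lying in both fine algebras. By Theorem~\ref{thm:meet}, $f\in\U_{1/d}$. By Lemma~\ref{lem:glue}, on each coarse product cell $\prod_i I^\delta_{k_i}$ it equals one descriptor $(i,z)$: either $z\delta$ or $x_i+z\delta$. By Lemma~\ref{lem:onecoord}, applied in $\B_{1/q}$, there are a coordinate $i_0$ and an integer $C$ with $f\le x_{i_0}+C$ globally. This bound will serve as the earlier majorant for the reconstruction. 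Throughout, the degenerate case where one denominator divides the other is dismissed by Theorem~\ref{thm:classification}.

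\textbf{Defining the predicates.} Coarse cells are indexed through $\C_\delta(x_j)$, which lies in $\B_\delta$ and hence in both fine algebras by Theorem~\ref{thm:classification}. I would define the tag predicates
\[
P_j(\mathbf x)=[\text{descriptor on the cell of }\mathbf x\text{ has tag }j],\qquad 0\le j\le m.
\]
For the offset, I would use a sign predicate together with the unary encoding of the value part of $f$ through a threshold family, for instance
\[
S(\mathbf x,t)=\bigl[\,f(\mathbf x)\ge \C_\delta(x_j)+t\,\bigr]\quad\text{or}\quad [\,f(\mathbf x)\ge t\,]
\]
on grid $t$. Each such predicate should lie in $\B_{1/q}$. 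The reason is that $f$ and $\C_\delta$ are available there, comparisons $[\,\cdot\ge\cdot\,]$ of two bounded functions are obtainable with bound $1$ via $\nz$ of a truncated grid subtraction, and the required subtraction $\rho^{[D(b)]}(a)$ is a bounded recursion with bound $a$ (Lemma~\ref{lem:tools}, carried out at fine strides). The same construction works in $\B_{1/v}$. Hence, by the hypothesis, every such predicate lies in $\B_\delta$.

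I would also keep a predicate $Z(\mathbf x)$ identifying points where $\mathbf x$ sits at the right endpoint of a coarse cell. This is not essential, since the descriptor is constant on each coarse cell. A more convenient formulation evaluates each predicate only at cell representatives $\C_\delta(\mathbf x)$, so that the predicates become functions of grid indices alone.

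\textbf{Reconstruction inside $\B_\delta$.}
\begin{enumerate}
\item Recover the offset magnitude as a grid value by counting with Lemma~\ref{lem:count}: $K_P(\mathbf x,y)$ sums the threshold predicate over $t=\delta,\dots,y$, and $y$ is chosen as a grid bound such as $\C_\delta(x_{i_0})+C$. The result has earlier bound $y+\delta$.
\item Form the base value $v=x_j$ or $0$ by finite bounded selection (Lemma~\ref{lem:frozen}), with the projection bound guarded.
\item Apply the offset. A positive offset uses $J_\delta$-style transfer (Lemma~\ref{lem:transfer}): write $x_j=r_\delta(x_j)+\C_\delta(x_j)-\delta$ and add the grid count, giving $J_\delta(r_\delta(x_j),\C_\delta(x_j)+\text{count})$. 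A negative offset uses $R_\delta$ with bound $a$.
\item Zero-guard each tag branch under its own earlier bound, then recombine under the single majorant $x_{i_0}+C$, exactly as in Lemma~\ref{lem:unfold}.
\end{enumerate}

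\textbf{Main obstacle.} The hard step is ensuring that every auxiliary value in steps 1--3 has an \emph{earlier global} majorant on all inputs, not just on the diagonal. This is delicate when the descriptor's coordinate $j$ differs from $i_0$, because then $x_j+z\delta\le x_{i_0}+C$ holds only on cells where tag $j$ is selected. My approach would be to compute a capped offset: clip the count at $\C_\delta(x_{i_0})+C$, and produce
\[
\min\bigl(x_j+z\delta,\;x_{i_0}+C\bigr)
\]
via frozen-parameter recursions bounded by $x_{i_0}+C+\delta$. Only afterwards would I guard this result by the tag predicate.

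The reverse implication is immediate, since Boolean functions are functions.
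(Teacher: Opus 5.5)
\textbf{The gap: the tag predicates.} Your architecture matches the paper's (coarse descriptors, threshold counting via Lemma~\ref{lem:count}, transfer by $J_\delta$, guarded recombination). However, nothing places the tag predicates $P_j$ in $\B_{1/q}$ and $\B_{1/v}$, and the justification you give does not cover them. Exact comparisons via $\nz$ or $\zz$ of a truncated grid subtraction only test the value of $f$ at one point against a grid level. One point of a coarse cell cannot reveal the tag. At the representative $\mathbf e(\mathbf x)=(\C_\delta(x_1),\ldots,\C_\delta(x_m))$, a constant descriptor $E$ and the descriptor $x_j+(E-\C_\delta(x_j))$ take the same value. Tests at the raw point $\mathbf x$ that could see the slope, such as $[f(\mathbf x)<E(\mathbf x)]$, jump at closed right endpoints of fine cells and are not definable.

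You need a second, definable evaluation point inside the same coarse cell. This is exactly where the non-nested hypothesis $h_q=1/q<\delta$ enters. Replace the $i$th coordinate of $\mathbf e(\mathbf x)$ by $\rho_{h_q}(\C_\delta(x_i))$ and call the result $\mathbf e^{q,i}(\mathbf x)$. This probe stays in the same coarse cell when $x_i>0$, and it is the same tuple when $x_i=0$. There $f$ drops by exactly $h_q$ iff the normalized tag is $i$. Both values lie in $h_q\N$, so $T_i^q=\nz\bigl(R_{h_q}(E,f(\mathbf e^{q,i}))\bigr)$ is exact and lies in $\B_{1/q}$. The same construction with $h_v$ gives the identical function in $\B_{1/v}$. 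Only then does your hypothesis apply to the tags.

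\textbf{Two further points.} First, the threshold predicates must be evaluated at $\mathbf e(\mathbf x)$; this is necessary, not merely convenient. The predicate $[f(\mathbf x)\ge t]$ at arbitrary $\mathbf x$ is generally not even in $\U_{1/q}$. For $f=x_1$, the test $[x_1\ge\delta]$ is $0$ inside the fine cell $(\delta-1/q,\delta]$ and $1$ at its closed right endpoint. At $\mathbf e(\mathbf x)$, by contrast, all compared values are $\delta$-grid values.

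Second, the majorant obstacle you describe disappears if you count the endpoint value $E(\mathbf x)=f(\mathbf e(\mathbf x))$ itself rather than the offset. The bound $E\le\C_\delta(x_{i_0})+C$ serves as the counting bound. Then $f=E$ on constant cells and $f=J_\delta(r_\delta(x_i),E)$ on tag-$i$ cells. Every candidate is globally bounded by the already constructed $E+\delta$ (Lemma~\ref{lem:transfer}), so no sign predicate, addition or clipping is needed.

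As written, your step~3 needs the sum $\C_\delta(x_j)+\text{count}$ of two unrelated variable quantities, which is not available. Your clip $\min(x_j+z\delta,x_{i_0}+C)$ is a real minimum of two independent coordinates. That is in general not a single descriptor on a cell, so it is not definable. The clipping route could only be rescued by clipping grid values, as in Lemma~\ref{lem:clip}, before the transfer.
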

\begin{proof}
The nested case is already known. Otherwise put $\delta=1/d$,
$h_q=1/q<\delta$, $h_v=1/v<\delta$, and let $f$ be an $m$-ary common
bounded function. Its coarse grid semantics gives a normalized affine
descriptor on every $\delta$-product cell. Put
\[
 \mathbf e(\mathbf x)=(\C_\delta(x_1),\ldots,\C_\delta(x_m)),
 \qquad E(\mathbf x)=f(\mathbf e(\mathbf x)).
\]
The endpoint value $E$ is $\delta$-grid-valued and belongs to both source
algebras. A one-coordinate bound of $f$, say $f\leq x_j+C$ with integer
$C\geq0$, gives
\begin{equation}\label{eq:endbound}
 E(\mathbf x)\leq B(\mathbf x):=\C_\delta(x_j)+C.
\end{equation}
The grid-valued bound $B$ is already in $\B_\delta$.

\paragraph{Recover the endpoint value.}
The Boolean function
\[
 P(\mathbf x,t)=\zz\bigl(R_\delta(\C_\delta(t),E(\mathbf x))\bigr)
\]
belongs to both source algebras and equals
$[E(\mathbf x)\geq\C_\delta(t)]$, because both values being compared
are on the grid. By the assumed Boolean intersection, $P\in\B_\delta$.
Lemma~\ref{lem:count} and \eqref{eq:endbound} now yield
\[
 E(\mathbf x)=K_P(\mathbf x,B(\mathbf x))\in\B_\delta.
\]
This counts the true positive grid thresholds up to $E$. No unbounded
grid descriptor decoder is used.

\paragraph{Recover the coordinate tag.}
For each $i$, replace the $i$th coordinate of $\mathbf e(\mathbf x)$ by
$\rho_{h_q}(\C_\delta(x_i))$; call this tuple $\mathbf e^{q,i}(\mathbf x)$.
Define, in $\B_{1/q}$,
\[
 T_i^q(\mathbf x)=\nz\bigl(R_{h_q}(E(\mathbf x),
                         f(\mathbf e^{q,i}(\mathbf x)))\bigr).
\]
If $x_i=0$, the two tuples coincide and this is zero. Otherwise the
probe is strictly inside the same coarse cell, since $h_q<\delta$.
If that cell's descriptor uses coordinate $i$, its value decreases by
$h_q$ at the probe; for every other normalized descriptor it is unchanged.
Both probe values lie in $h_q\N$, so grid subtraction is exact.
Consequently $T_i^q$ is precisely the indicator of normalized tag $i$.
The same construction with $h_v$ gives $T_i^v$ in $\B_{1/v}$ with
identical values on all real tuples. Thus the intrinsic tag predicate
$T_i=T_i^q=T_i^v$ lies in the source intersection, and by assumption
belongs to $\B_\delta$.

\paragraph{Reconstruct the real output.}
If the descriptor is constant, then $f=E$. If its tag is $i>0$, then
$x_i>0$ and validity of the affine descriptor implies $E\geq\delta$;
otherwise the descriptor would be negative at interior points of its cell.
In this case
\[
 f(\mathbf x)=J_\delta(r_\delta(x_i),E(\mathbf x)),
\]
by Lemmas~\ref{lem:tools} and~\ref{lem:transfer}. All candidate functions
$E$ and $J_\delta(r_\delta(x_i),E)$ are globally bounded by the already
constructed $E+\delta$. Finite bounded selectors controlled by the disjoint
tag predicates therefore reconstruct $f$ in $\B_\delta$. This proves the
nontrivial inclusion, with global majorants at every auxiliary stage.
\end{proof}

\begin{corollary}[Transfer of counterexamples]\label{cor:witness}
If an $m$-ary function violates \eqref{eq:target}, then some Boolean
function of arity at most $m+1$ violates the same identity. There is also
a uniform construction of one Boolean counterexample of arity $m+2$
from any such common-function witness and its two valid source descriptions.
\end{corollary}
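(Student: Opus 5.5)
We read the proof of Theorem~\ref{thm:booleanreduce} contrapositively: every Boolean function it invokes is common to both source algebras, so if all of them lay in $\B_{1/d}$ then $f$ would too. Let $f$ be an $m$-ary function in $\B_{1/q}\cap\B_{1/v}$ but not in $\B_{1/d}$. The nested case, where one denominator divides the other, cannot occur, since there \eqref{eq:target} holds by Theorem~\ref{thm:classification}. So $h_q,h_v<\delta$. The Boolean functions used in the reconstruction are $P(\mathbf x,t)$, of arity $m+1$, and the tag indicators $T_i(\mathbf x)$ for $1\le i\le m$, of arity $m$. Each lies in the source intersection.

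Suppose $P$ and every $T_i$ belonged to $\B_\delta$. Then the three paragraphs of that proof would rebuild $E$, and then $f$, in $\B_\delta$. They do this with global majorants at every stage, using Lemmas~\ref{lem:count}, \ref{lem:tools} and~\ref{lem:transfer} and bounded selection. This contradicts the choice of $f$. Hence at least one of $P,T_1,\dots,T_m$ is outside $\B_{1/d}$, and that function has arity at most $m+1$. This proves the first sentence.

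For the uniform statement we want a single Boolean function, built effectively, without deciding which of the $m+1$ candidates fails. We merge them into one predicate of arity $m+2$. Its inputs are $\mathbf x$, $t$ and a selector coordinate $c$. Put
\[
 W(\mathbf x,t,c)=\begin{cases}P(\mathbf x,t),& D_\delta(c)=0,\\ T_i(\mathbf x),& D_\delta(c)=i,\ 1\le i\le m,\\ 0,& D_\delta(c)>m.\end{cases}
\]
The fixed grid threshold tests of Lemma~\ref{lem:toolkit} are available in both source algebras, since $\delta$-thresholds lie on both fine grids. All branches are bounded by $1$. So bounded selection (Lemma~\ref{lem:select}) places $W$ in $\B_{1/q}$ and in $\B_{1/v}$. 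The construction is uniform: $P$, $T_i^q$ and $T_i^v$ are obtained by explicit syntax operations on the two given source descriptions of $f$, and so is the selector.

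Conversely, suppose $W\in\B_\delta$. Substituting the constant $c=0$ recovers $P$, and substituting $c=i\delta$ recovers $T_i$. These constants lie in $\B_\delta$ by Lemma~\ref{lem:toolkit}, and they can be introduced via dummy-argument composition. So every candidate would lie in $\B_\delta$, and therefore $f$ would too, which is a contradiction. Thus $W$ is a Boolean counterexample of arity $m+2$.

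The main obstacle is that $T_i$ needs two representations, $T_i^q\in\B_{1/q}$ and $T_i^v\in\B_{1/v}$. Each source description must therefore build its own version of $W$. We must then check that the two versions are extensionally equal on all real inputs, including $c$ off the grid and coordinates equal to zero. The proof of Theorem~\ref{thm:booleanreduce} already shows $T_i^q=T_i^v$ pointwise. Since the selector depends only on $D_\delta(c)$, the two versions of $W$ coincide everywhere.
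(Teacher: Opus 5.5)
Your proposal is correct and follows the paper's own proof. The first claim is the contrapositive of the reconstruction in Theorem~\ref{thm:booleanreduce}, applied to $P,T_1,\ldots,T_m$. The uniform witness merges these predicates through an extra index coordinate with selectors bounded by one, and each one is recovered by substituting fixed constants for the index. Your check that the $q$- and $v$-built versions of the merged predicate agree extensionally is exactly the promise on the two source descriptions that the paper invokes.
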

\begin{proof}
At least one of the $P,T_1,\ldots,T_m$ in the proof must fail coarse
bounded membership, since otherwise the reconstruction succeeds.
For a single uniform witness, use an extra finite index argument to select
$P(\mathbf x,t)$ or $T_i(\mathbf x)$, with all selectors bounded by one.
If this combined predicate belonged to the coarse algebra, restrictions
of its index to fixed integer constants would put every selected predicate
there, a contradiction. Construction is under the promise that the given
source descriptions define the same function; it does not decide that
promise or identify which separate predicate is a counterexample.
\end{proof}

\subsection{Integer algebras and stretched predicates}
\begin{definition}[Integer levels used here]
On $\N$ define
\[
 \E_0=[0,S,\mathrm{proj};\mathrm{comp},\mathrm{BR}],\qquad
 \E_1=[0,S,\mathrm{proj},+;\mathrm{comp},\mathrm{BR}].
\]
Here $\mathrm{BR}$ is scalar ordinary primitive recursion with an earlier
global majorant from the same algebra. Let $\E_{i,*}$ denote the
Boolean-valued functions of all positive finite arities in $\E_i$.
These explicit definitions fix the numbering, matching the corresponding
successor and addition levels in Clote's exposition
\cite[Definitions 3.28--3.29]{Clote1999}. The integer class $\E_0$
is the ordinary bounded-recursion class denoted $\mathcal E^0$ in
the companion minimal-basis articles. This identification concerns
integer functions; restriction and round-up lifting relate it to the
real algebra $\B_1$ as specified below.
\end{definition}

For an $m$-ary Boolean predicate $p$ on integers put
\[
 \widehat p(\mathbf x)=p(\lceil x_1\rceil,\ldots,\lceil x_m\rceil),
 \qquad
 (S_a p)(\mathbf n)=p(\lceil n_1/a\rceil,\ldots,\lceil n_m/a\rceil).
\]
The letter $S_a$ here denotes stretching of a predicate, not numerical
successor. Put $\mathcal P_a=\{p:S_a p\in\E_{0,*}\}$.
All arities are included in this notation.

\begin{lemma}[Restriction, lifting and change of scale]\label{lem:scale}
\begin{enumerate}[label=(\roman*),leftmargin=2em]
\item Restricting a $\B_1$ function to integer inputs gives a function in
$\E_0$. For each $g\in\E_0$, its round-up lift
$\widehat g(\mathbf x)=g(\lceil\mathbf x\rceil)$ belongs to $\B_1$.
\item For an integer $a\geq1$, the meta-level transformation
\[
 (\mathfrak T_a f)(\mathbf x)=a f(\mathbf x/a)
\]
maps $\B_{1/a}$ onto $\B_1$, preserving construction order and majorant
inequalities. More generally, if $d\mid q$, $\mathfrak T_d$ maps
$\B_{1/q}$ onto $\B_{1/(q/d)}$, and does the same for the unbounded algebras.
\end{enumerate}
\end{lemma}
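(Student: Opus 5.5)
Both parts go by structural induction over the stage-ordered syntax of programs. For each initial function, composition node and bounded-recursion node we give a translation. We then check that it preserves construction order and that every majorant inequality is carried to an inequality of the same form.

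For (i), restriction to integers, the key observation is that on integer inputs $D_1(y)=y$ and $\rho_1(y)=y-1$ for $y\ge1$. A bounded $1$-recursion restricted to $\N$ is therefore literally scalar primitive recursion, and its majorant restricts to an earlier $\E_0$ function. Zero, successor and projections restrict to themselves. Composition commutes with restriction once we know that $\B_1$ maps integer tuples to integers. This closure follows from Lemma~\ref{lem:affine} with $M_{\{1\}}=\mathbb Z\cap\ldots$: every value is a constant or a coordinate plus an integer offset. Alternatively, it follows directly by induction, since $\rho_1$ preserves $\N$.

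For the lifting direction, we build $\widehat g$ by induction on $g$'s $\E_0$-syntax.
\begin{itemize}
\item Base functions are lifted by composing with round-up $\C_1(x)=\lceil x\rceil$, which lies in $\B_1$ by Lemma~\ref{lem:tools} with $\delta=1$.
\item Composition satisfies $\widehat{g\circ(h_j)}=\widehat g\circ(\widehat h_j)$, because each $\widehat h_j$ is integer-valued and hence fixed by round-up.
\item For a bounded recursion node $g(\mathbf n,0)=a$, $g(\mathbf n,k+1)=b(\mathbf n,k,g)$ with majorant $M$, we define the $\B_1$ recursion with base $\widehat a$ and transition $\widehat b(\mathbf x,z,u)$. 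Then $G(\mathbf x,y)=g(\lceil\mathbf x\rceil,\lceil y\rceil)$, since the predecessor addresses of $y$ are $\lceil y\rceil-1,\ldots$ rounded up to the correct integers; more precisely, $\widehat b$ rounds the address $\rho_1^{[j]}(y)$ up to $\lceil y\rceil-j$. The earlier majorant is $\widehat M$, valid globally because it is pointwise the integer majorant evaluated at rounded arguments.
\end{itemize}

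For (ii), the map $\mathfrak T_a$ is conjugation by the dilation $x\mapsto x/a$. We check it on each constructor.
\begin{itemize}
\item Projections and zero are fixed.
\item Successor goes to $aS(x/a)=x+a=S^{[a]}$, which lies in $\B_1$.
\item Composition is preserved: $\mathfrak T(g\circ h)=\mathfrak Tg\circ\mathfrak Th$.
\item Conjugation gives $a\rho_{1/a}(y/a)=\rho_1(y)$, so a $1/a$-recursion with data $a_0,b,M$ becomes a $1$-recursion with data $\mathfrak Ta_0$ and $\mathfrak Tb$. Its majorant is $\mathfrak TM$, and $F\le M$ is equivalent to $\mathfrak TF\le\mathfrak TM$ since $a>0$.
\end{itemize}
The inverse $\mathfrak T_a^{-1}f(\mathbf x)=f(a\mathbf x)/a$ works the same way. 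It sends $S$ to $T_{1/a}$, which lies in $\B_{1/a}$ by Lemma~\ref{lem:toolkit}, but we must express it by a term constructed at an earlier stage, whose majorant conditions the induction transports. This yields surjectivity. The general case $d\mid q$ is identical, with $T_{d/q}$ playing the role of successor's image, which lies in $\B_{1/q}$ by Lemma~\ref{lem:toolkit}. For the unbounded algebras we drop the majorants.

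The main obstacle is (i)'s lifting at recursion nodes. There we must verify that the round-up semantics holds on all real inputs, including non-integer $y$ and auxiliary arguments, so that the lifted majorant is a genuine global earlier majorant. We handle this by proving $G(\mathbf x,y)=g(\lceil\mathbf x\rceil,\lceil y\rceil)$ by induction on $D_1(y)=\lceil y\rceil$.
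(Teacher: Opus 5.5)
Your proposal follows the paper's proof essentially step for step. For (i) you use restriction and round-up lifting by structural induction, with $\lceil\rho_1(y)\rceil=\lceil y\rceil-1$ for $y>0$ and the lifted majorant $\widehat M$ as the global earlier bound. For (ii) you check the conjugation constructor by constructor and use the inverse conjugate for surjectivity. (A cosmetic point: $M_{\{1\}}$ is simply $\mathbb Z$.)

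There is one slip in the general case $d\mid q$. The inverse conjugate sends successor to
\[
\mathfrak T_d^{-1}S(x)=S(dx)/d=x+1/d,
\]
that is, to $T_{1/d}$, not to $T_{d/q}$. Already for $d=q$ your formula would give $T_1=S$ instead of $T_{1/q}$, so substituting $T_{d/q}$ would not implement $\mathfrak T_d^{-1}$. Since $1/d=(q/d)/q$, the correct term $T_{1/d}$ still lies in $\B_{1/q}$ by Lemma~\ref{lem:toolkit}, which is exactly the paper's remark. With that correction your argument goes through.
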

\begin{proof}
On integers, unit-stride recursion is ordinary primitive recursion and its
majorant remains valid. For lifting, first construct $\C_1=\lceil x\rceil$
using Lemma~\ref{lem:tools}. Lift source primitives by rounding their
inputs. Lift composition by composition: inner lifted values are integers.
For a source recursion, real unit-stride recursion performs exactly
$\lceil y\rceil$ steps, and the lifted transition sees the appropriate
integer predecessor $\lceil z\rceil$ and integer state. Its solution is
$g(\lceil\mathbf x\rceil,\lceil y\rceil)$, bounded on every real input
by the earlier lift of the source majorant. This proves (i).

For (ii), conjugation commutes with composition and turns stride $1/q$
into $d/q$ while multiplying every majorant inequality by $d$. Unit
successor becomes translation by the integer $d$. The image has the same
recursion operation as its claimed target, and also contains unit
successor: apply $a(d-1)$ predecessors of stride $1/a$ to translation
by $d$, where $a=q/d$. Conversely the inverse transformation sends target
unit successor to translation by $1/d$, available in $\B_{1/q}$ since
$d\mid q$. Zero and projections are unchanged. The same syntax induction
works without majorants. These are transformations between programs,
not assertions that multiplication or division of arbitrary real inputs
is an internal primitive of either algebra.
\end{proof}

\begin{theorem}[Exact Boolean precision dictionary]\label{thm:dictionary}
For every integer predicate $p$ and every $a\geq1$,
\[
 \widehat p\in\B_{1/a}\quad\Longleftrightarrow\quad
 S_a p\in\E_{0,*}\quad\Longleftrightarrow\quad p\in\mathcal P_a.
\]
In particular $\mathcal P_1=\E_{0,*}$.
\end{theorem}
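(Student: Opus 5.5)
The second equivalence is the definition of $\mathcal P_a$, so only $\widehat p\in\B_{1/a}\iff S_a p\in\E_{0,*}$ needs proof. The plan is to move to unit scale with Lemma~\ref{lem:scale}(ii) and then use the restriction/lifting dictionary of Lemma~\ref{lem:scale}(i). Applying $\mathfrak T_a$ to $\widehat p$ gives
\[
 (\mathfrak T_a\widehat p)(\mathbf x)=a\,\widehat p(\mathbf x/a)=a\,p(\lceil x_1/a\rceil,\ldots,\lceil x_m/a\rceil),
\]
which is $a$ times a Boolean function on $\RR^m$. By Lemma~\ref{lem:scale}(ii), $\widehat p\in\B_{1/a}$ if and only if $\mathfrak T_a\widehat p\in\B_1$. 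I also expect $a\cdot g\in\B_1$ to be equivalent to $g\in\B_1$ for Boolean $g$. Given $ag$, take $\nz(ag)$, which has constant bound one. Given $g$, form a bounded selector between $0$ and the constant $a$ controlled by $g$, with the constant bound $a$. So $\widehat p\in\B_{1/a}$ if and only if $G(\mathbf x)=p(\lceil\mathbf x/a\rceil)$ lies in $\B_1$.

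For the forward direction, assume $G\in\B_1$. Restricting to integer inputs, Lemma~\ref{lem:scale}(i) gives $G|_{\N^m}=S_a p\in\E_0$, and $S_a p$ is Boolean, so it lies in $\E_{0,*}$.

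For the backward direction, assume $S_a p\in\E_{0,*}$. Its round-up lift $\widehat{S_a p}(\mathbf x)=p(\lceil\lceil x_i\rceil/a\rceil)_i$ lies in $\B_1$ by Lemma~\ref{lem:scale}(i). Since $\lceil\lceil x\rceil/a\rceil=\lceil x/a\rceil$ for real $x\ge0$ and integer $a\geq1$, this lift is exactly $G$. The step needing care is that the ceiling identity holds at every real point, including zero and the right-closed endpoints of cells, and not just on integers. I would verify it directly: $\lceil x/a\rceil$ is the least integer $k$ with $ka\ge x$, equivalently with $ka\ge\lceil x\rceil$, because $ka$ is an integer.

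The main obstacle is the scaling step: one must confirm that Lemma~\ref{lem:scale}(ii) is a bijection on functions, meaning that $\mathfrak T_a$ maps $\B_{1/a}$ onto $\B_1$ and the inverse transformation maps $\B_1$ back into $\B_{1/a}$. Without the reverse direction, $\mathfrak T_a\widehat p\in\B_1$ would not imply $\widehat p\in\B_{1/a}$. The lemma states both directions, since its proof treats the inverse transformation. The special case $\mathcal P_1=\E_{0,*}$ follows at once, because $S_1p=p$.
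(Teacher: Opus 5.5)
Your proposal is correct and follows essentially the paper's argument: you conjugate by $\mathfrak T_a$, normalize with the nonzero test, and use restriction and round-up lifting from Lemma~\ref{lem:scale}(i) together with the ceiling identity $\lceil\lceil x\rceil/a\rceil=\lceil x/a\rceil$. The only difference is cosmetic: in the backward direction the paper pulls the lifted Boolean back by $\mathfrak T_a^{-1}$, obtaining values in $\{0,1/a\}$, and applies the nonzero test inside $\B_{1/a}$, whereas you first rescale to $aG$ in $\B_1$ with a constant-bounded selector; both are valid.
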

\begin{proof}
If $\widehat p\in\B_{1/a}$, conjugate it by $\mathfrak T_a$ and apply
the nonzero test. This gives the Boolean $\B_1$ function
$p(\lceil\mathbf x/a\rceil)$. Its integer restriction is $S_a p$.

Conversely lift $S_a p$ from $\E_0$ to $\B_1$, conjugate by
$\mathfrak T_a^{-1}$, and apply the nonzero test in $\B_{1/a}$.
The resulting function is
$p(\lceil\lceil a\mathbf x\rceil/a\rceil)=\widehat p(\mathbf x)$,
using the coordinatewise identity
$\lceil\lceil at\rceil/a\rceil=\lceil t\rceil$ for integer $a\geq1$.
The inverse conjugate itself has values zero or $1/a$, so the nonzero
test is essential to restore Boolean values.
\end{proof}

\paragraph{Example: parity under stretching.}
Let $p(n)=[n\text{ is odd}]$. A two-state recursion with earlier
constant bound one puts $p$ in $\E_{0,*}$. For $a=2$,
\[
 \bigl(S_2p(n)\bigr)_{n=0}^{6}=(0,1,1,0,0,1,1),\qquad
 \widehat p(x)=p(\lceil x\rceil).
\]
Thus $\widehat p$ is one on $(0,1],(2,3],\ldots$ and zero on
$\{0\},(1,2],(3,4],\ldots$. Conjugation gives
$\mathfrak T_2\widehat p(x)=2p(\lceil x/2\rceil)$; the nonzero
test restores Boolean values, and integer restriction gives $S_2p$.
Conversely, lifting $S_2p$, applying $\mathfrak T_2^{-1}$ and then
the nonzero test gives
$p(\lceil\lceil2x\rceil/2\rceil)=\widehat p(x)$.
The later numerical monotonicity concerns these coarse-cell predicates;
the full algebras $\B_{1/2}$ and $\B_{1/3}$ remain incomparable.

\begin{lemma}[Contractions and monotonicity in precision]\label{lem:monotone}
For integers $1\leq a\leq b$, the integer function
$c_{a,b}(n)=\lceil an/b\rceil$ belongs to $\E_0$, and
\[
 \mathcal P_a\subseteq\mathcal P_b.
\]
This is monotonicity in the ordinary numerical order, for these coarse-grid
Boolean precision classes; it is not an inclusion claim for the full real
algebras.
\end{lemma}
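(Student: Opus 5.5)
We prove the two claims separately: first that $c_{a,b}\in\E_0$, then the inclusion $\mathcal P_a\subseteq\mathcal P_b$ by composing stretched predicates with contractions.

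\textbf{Step 1: $c_{a,b}\in\E_0$.} We realize $c_{a,b}$ by a finite-state clock, with no multiplication. Define $F(n)=(c_{a,b}(n),\,\phi(n))$, where $\phi(n)=an\bmod b\in\{0,\dots,b-1\}$ is a phase. Passing from $n$ to $n+1$ adds $a$ to $an$. The value $\lceil an/b\rceil$ then increases by a fixed amount depending only on $\phi(n)$ and $a,b$. That amount is $0$ or $1$ when $a\le b$: it is $1$ exactly when the new numerator crosses a multiple of $b$.

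Since $\E_0$ has only scalar bounded recursion, we separate the two components. The phase $\phi(n)$ is a recursion with finitely many constant states, so it is bounded by the constant $b$. The clock value $c_{a,b}(n)$ is a recursion whose transition is $u$ or $S(u)$, selected by a finite case test on $\phi(n-1)$. Its earlier majorant is the projection $n$, which is valid because $a\le b$ gives $\lceil an/b\rceil\le n$. Finite case selection in $\E_0$ is itself a bounded recursion with a constant or state-bounded majorant, exactly as in Lemma~\ref{lem:select}.

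\textbf{Step 2: a ceiling identity.} For integers $n\ge0$ and $1\le a\le b$ we claim
\[
 \Bigl\lceil \lceil an/b\rceil / a\Bigr\rceil=\lceil n/b\rceil .
\]
Write $n=bk-r$ with $0\le r<b$, so that $\lceil n/b\rceil=k$ (the case $n=0$ is trivial). Then $\lceil an/b\rceil=ak-\lfloor ar/b\rfloor$. Because $0\le ar/b<a$, we have $0\le\lfloor ar/b\rfloor\le a-1$. Dividing by $a$ and rounding up therefore returns $k$. This is the same mechanism as the identity $\lceil\lceil at\rceil/a\rceil=\lceil t\rceil$ used in Theorem~\ref{thm:dictionary}.

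\textbf{Step 3: monotonicity.} Let $p\in\mathcal P_a$, so $S_ap\in\E_{0,*}$. By Step~2, coordinatewise,
\[
 (S_bp)(\mathbf n)=(S_ap)\bigl(c_{a,b}(n_1),\dots,c_{a,b}(n_m)\bigr).
\]
This is a composition of an $\E_{0,*}$ function with functions in $\E_0$ from Step~1. Hence it lies in $\E_0$ and is Boolean-valued, so $p\in\mathcal P_b$.

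\textbf{Main obstacle.} The one point needing care is the majorant bookkeeping in Step~1. Each recursion must use an earlier global majorant, namely $b$ for the phase and $n$ for the counter, with the selector built before the counter recursion. Once those bounds are supplied, the ceiling arithmetic in Step~2 is routine.
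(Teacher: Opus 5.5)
Your proposal is correct and takes essentially the same route as the paper. Both arguments use a finite residue clock that gates $0/1$ increments in a scalar bounded recursion admitted by the earlier majorant $n$, then obtain $S_bp=(S_ap)\circ(c_{a,b},\ldots,c_{a,b})$ from $\lceil\lceil an/b\rceil/a\rceil=\lceil n/b\rceil$ and closure of $\E_0$ under composition; your explicit checks of the ceiling identity and of the $0/1$ increment pattern supply details the paper only asserts.
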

\begin{proof}
The successive differences of $c_{a,b}$ are zero or one and are periodic
with fixed period $b$. A finite residue clock, bounded by a fixed integer,
controls a recursion that starts at zero and increments exactly at those
residues. Its values are at most $n$, so the earlier majorant $n$ admits
the recursion. All finite-state selectors have earlier constant or
state-plus-one bounds. Hence $c_{a,b}\in\E_0$.

For integer tuples,
\[
 S_b p=(S_a p)\circ(c_{a,b},\ldots,c_{a,b}),
\]
because $\lceil\lceil an/b\rceil/a\rceil=\lceil n/b\rceil$.
Closure under composition proves the inclusion.
\end{proof}

\subsection{The union of precision classes is the addition level}
\begin{lemma}[Linear intermediate-value envelope]\label{lem:envelope}
For every fixed valid $\E_1$ program $P$, there is an effectively
extractable integer $K_P\geq1$ such that its chosen bottom-up evaluator,
on integer inputs at most $N$, materializes only values at most
$K_P(N+1)$. Extraction is syntactic; correctness is conditional on validity
of its majorant annotations.
\end{lemma}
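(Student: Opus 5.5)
The plan is a structural induction over the finite stage-ordered syntax of $P$. Alongside it we define the bottom-up evaluator and extract $K_P$ by syntactic rules, in the same spirit as Lemma~\ref{lem:onecoord}. The evaluator computes a recursion node on input $(\mathbf n,y)$ by materializing the states $F(\mathbf n,0),\ldots,F(\mathbf n,y)$ in order, evaluating each transition through its own chosen evaluator on the arguments $(\mathbf n,k,F(\mathbf n,k))$. A composite $g(h_1,\ldots,h_r)$ is computed by first evaluating the $h_j$ and then $g$ on their outputs.

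The key invariant is slightly stronger than stated. For every valid program we aim to prove that every materialized value, in particular the output, is at most $K_P(N+1)$ whenever all inputs are at most $N$. For the basis, $0$ and projections get $K=1$, successor gets $K=2$ (since $N+1\le 2(N+1)$), and $+$ gets $K=2$ (since $2N\le 2(N+1)$).

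For composition, the inner evaluators materialize values at most $K_{h_j}(N+1)$. The outer function then receives inputs bounded by $N'=\max_j K_{h_j}(N+1)$, so its intermediates are at most $K_g(N'+1)\le K_g(\max_j K_{h_j}+1)(N+1)$. We therefore set $K=K_g(\max_j K_{h_j}+1)$; this is linear because the constants multiply rather than compound with $N$.

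For a bounded recursion node with data $a,b$ and earlier majorant $M$, validity gives $F(\mathbf n,k)\le M(\mathbf n,k)\le K_M(N+1)$ for all $k\le y\le N$. This uses the inductive invariant for $M$ applied to its output, and it is the only place where validity of annotations is used. The transition is called on arguments bounded by $N''=\max(N,K_M(N+1))\le K_M(N+1)$. Its intermediates are therefore at most $K_b(K_M(N+1)+1)\le K_b(K_M+1)(N+1)$. The base contributes $K_a(N+1)$. We take $K=\max(K_a,K_M,K_b(K_M+1))$.

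The point needing care, and the main obstacle, is exactly this recursion step. One must not bound the states by iterating the transition's own envelope, since that would give exponential growth in $y$. The earlier global majorant must instead cap every stored state uniformly, and this is legitimate only because the majorant holds on all inputs and not merely on a diagonal. A secondary subtlety is that the evaluator never evaluates $M$ itself, so $K_M$ serves only as a bound and is computed syntactically. All rules are effective maps on syntax trees, so $K_P$ is extractable even from invalid candidate descriptions. Correctness of the bound is conditional on validity of the majorant annotations, exactly as stated.
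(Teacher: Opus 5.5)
Your proposal is correct and follows the paper's proof. It uses the same structural induction over the bottom-up evaluator, and the valid majorant uniformly caps every stored state at a recursion node, so the transition's envelope is applied once rather than iterated. The only difference is that you merge the paper's separate output envelope $L_P$ into the single constant $K_P$. This is legitimate because outputs are themselves materialized values, and it gives cruder but equally valid constants.
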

\begin{proof}
Fix the bottom-up evaluator which computes inner arguments before an
outer call and computes a recursion's states from address zero upwards.
Majorant annotations certify admission; their numerical evaluation is not
needed to run the program. Simultaneously extract integers $L_P,K_P\geq1$
such that, for inputs at most $N$, the output is at most $L_P(N+1)$
and every materialized value is at most $K_P(N+1)$. Take
$(L_P,K_P)=(1,1)$ for zero and projections and $(2,2)$ for successor
and addition.

For $P=g(h_1,\ldots,h_r)$ put $L=\max_i L_{h_i}$ and take
\[
 L_P=L_g(L+1),\qquad
 K_P=\max\{1,K_{h_1},\ldots,K_{h_r},K_g(L+1)\}.
\]
Indeed, the arguments of $g$ are at most $L(N+1)$, and its envelope
therefore contributes at most $K_g(L+1)(N+1)$.

At a bounded-recursion node $F$ with earlier base $g$, transition $b$
and majorant $M$, set $C=\max\{1,L_M\}$ and take
\[
 L_F=L_M,\qquad
 K_F=\max\{1,L_M,K_g,K_M,K_b(C+1)\}.
\]
Every prefix state $F(\mathbf x,j)$ with $j\leq N$ satisfies
$F(\mathbf x,j)\leq M(\mathbf x,j)\leq L_M(N+1)$.
Hence every call to $b$ receives frozen parameters, predecessor address
and previous state all at most $C(N+1)$; its evaluator materializes
only values at most $K_b(C+1)(N+1)$. The base is covered by $K_g$.
The term $K_M$ also covers evaluating a majorant on the original
parameters and any prefix address, if desired. Repeating these calls
changes their number, not this value bound. These finite recurrences
are syntactic and computable; their correctness uses the validity of
the source majorant annotations. Structural induction proves the claim.
\end{proof}

\begin{lemma}[Clipped scalar simulation on a fine grid]\label{lem:clip}
Fix a valid scalar $\E_1$ program $P$ and choose an integer $a\geq K_P$.
Put $h=1/a$. There is a $\B_h$ program $\widetilde P(\mathbf u,w)$
with global bound $\C_h(w+1)$ such that, for integer $n_i\leq N$,
\[
 \widetilde P(\mathbf n/a,N)=P(\mathbf n)/a.
\]
The parameter $w$ is frozen at every auxiliary recursion. The simulation
uses scalar recursion only, with no pairing or simultaneous-recursion
assumption.
\end{lemma}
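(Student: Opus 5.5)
We proceed by structural induction on the syntax of $P$, translating each subprogram $Q$ of $P$ into a $\B_h$ program $\widetilde Q(\mathbf u,w)$ with the extra frozen parameter $w$. The invariant is that $\widetilde Q(\mathbf n/a,N)=Q(\mathbf n)/a$ whenever all $n_i\le N$, and that $\widetilde Q\le \C_h(w+1)$ \emph{globally}. The idea is that, by Lemma~\ref{lem:envelope}, every value materialized by the bottom-up evaluator on inputs at most $N$ is at most $K_P(N+1)\le a(N+1)$. After division by $a$ it is therefore at most $N+1$. Hence clipping every intermediate value at $\C_h(w+1)$, evaluated at $w=N$, never alters a value actually used on the intended inputs. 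Meanwhile clipping supplies the frozen global majorant required for admission on all auxiliary inputs. The clip $\min(v,\C_h(w+1))$ is realized in $\B_h$ by the guard $\zz(R_h(\C_h(v),\C_h(w+1)))$ of Lemma~\ref{lem:tools}. This selects $v$ or $\C_h(w+1)$ through the bounded selector of Lemma~\ref{lem:frozen}, both branches being bounded by the earlier function $\C_h(w+1)$. The clip is applied after rounding $v$ to the $h$-grid; all intended values lie on that grid anyway.

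The base cases go as follows. Zero and projections are clipped projections. Successor $n\mapsto n+1$ becomes $u\mapsto u+1$ via $S$, then clipped. Addition is the only nontrivial primitive. I would define $\widetilde{+}(u_1,u_2,w)$ by an $h$-recursion on $u_2$ with base $u_1$ and transition $v\mapsto \mathrm{clip}(T_h(v))$. The global majorant $\C_h(w+1)$ admits this recursion, since every state is clipped. On grid inputs whose sum is at most $N+1$, no clipping occurs and the result is $u_1+u_2$. For composition, $\widetilde{g}(\widetilde h_1,\dots,\widetilde h_r,w)$ inherits the bound from $\widetilde g$. The intended-value claim uses the fact that inner outputs are values materialized by the evaluator, so their scaled versions are at most $N+1$. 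Here the invariant must be stated with ``inputs on the grid bounded by $N+1$'' rather than $N$, or Lemma~\ref{lem:envelope} must be applied with the index $N$ shifted. I would weaken the hypothesis of the invariant to grid inputs whose scaled values lie within the envelope range, and carry $K$ through the induction accordingly.

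The main obstacle is the recursion node. An $\E_1$ recursion over integer $k$ must become an $h$-recursion whose steps occur only at integer addresses. I would use the gated microstep scheme of Lemma~\ref{lem:pr-syntax}. Recursion on the real variable $y$ proceeds with stride $h$, and the update $v\mapsto \mathrm{clip}(\widetilde b(\mathbf u,\text{address},v,w))$ fires only when the predecessor depth satisfies $D_h(z)\equiv 0\pmod a$; otherwise the state is kept. The fired predecessor must correspond to the integer $k$, scaled as $k/a$. Here the address is $z/a$-scaled, so I would instead recurse with stride $h$ over the scaled variable $y=k/a$ with no gate at all. The scaled integer recursion then has exactly one microstep per source step, and the address $z=(k-1)/a$ is already the scaled predecessor. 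This avoids gating entirely. The transition is clipped, hence bounded by the frozen earlier function $\C_h(w+1)$, which admits the recursion globally regardless of whether the original majorant $M$ is ever invoked. Correctness on intended inputs follows by induction on $k$. Each state equals $F(\mathbf n,j)/a$ because $F(\mathbf n,j)\le M\le K_P(N+1)$, so clipping is inactive. Finally, the base-case subtlety of address $0$ versus the right-closed cell $(0,h]$ is checked exactly as in Example~\ref{ex:integertrace}.
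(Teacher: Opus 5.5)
Your overall design matches the paper's. You use the frozen resource $A(w)=\C_h(w+1)$ and clip every primitive at $A(w)$. You translate each source recursion on $k$ as an ungated $h$-recursion on $k/a$, admitted by the earlier majorant $A(w)$; you are right to drop the gate. You then use the envelope $K_P\le a$ to show that clipping is inactive on intended inputs. However, several steps fail as written.

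\emph{Successor.} It must become $u\mapsto\min_h(T_h(\C_h(u)),A(w))$. In the scaling $u=n/a$ the source successor adds $h$, not $1$. Your $S$-version yields $\min(n/a+1,N+1)$, which in general is not $(n+1)/a$ once $a\geq2$.

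\emph{Clipped addition.} It uses the unclipped base $u_1$, so at $u_2=0$ its value exceeds $A(w)$ whenever $u_1>A(w)$. Hence $\C_h(w+1)$ is not a valid global majorant, and the base must be clipped as well.

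\emph{The clip itself.} It is not admitted as you describe. Lemma~\ref{lem:frozen} requires both branches to be bounded by $H=\C_h(w+1)$ on the \emph{full} domain, and the branch $v$ (or $\C_h(v)$) is not. You must first zero-guard it under its own earlier bound. The paper avoids selectors altogether: $\min_h(u,v)=R_h(\C_h(u),R_h(\C_h(u),\C_h(v)))$ is a pure composition equal to $\min(\C_h(u),\C_h(v))$ everywhere, so it needs no new admission check.

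\emph{Correctness invariant.} This is left unresolved, and your first suggested repair (scaled inputs at most $N+1$) is false. A subprogram applied to unscaled arguments near $a(N+1)$ can output values beyond $a(N+1)$, and those would be clipped.

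What works, and what the paper does, is induction over the finite bottom-up source evaluation of $P$ on $\mathbf n$ with all $n_i\le N$. Lemma~\ref{lem:envelope} bounds by $K_P(N+1)\le a(N+1)$ every value that evaluation materializes: the arguments of every inner call, the recursion addresses and prefix states, and all outputs. So every scaled value lies on the $h$-grid and is at most $N+1=A(N)$. No clip fires, and each translated call reproduces the scaled source call.

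If you prefer a per-subprogram invariant, its hypothesis should be ``the evaluation of $Q$ on $\mathbf n$ materializes only values at most $a(N+1)$''. The sub-evaluations of compositions and recursions inherit this, and the envelope supplies it at the top level. This also replaces your bound $F(\mathbf n,j)\le M\le K_P(N+1)$ at the recursion node, which is justified only when that node's own inputs are at most $N$.
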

\begin{proof}
Set $A(w)=\C_h(w+1)$, an earlier grid-valued resource bound. Grid minimum
is available by composition:
\[
 \min_h(u,v)=R_h(\C_h(u),R_h(\C_h(u),\C_h(v))).
\]
It equals $\min(\C_h(u),\C_h(v))$ on all real inputs. Define clipped grid
addition by $h$-recursion on its second input, with base
$\min_h(u,A(w))$ and transition $\min_h(T_h(s),A(w))$.
Its earlier global majorant is $A(w)$, independently of both inputs.
On grid arguments it returns $\min(u+v,A(w))$.

Translate source zero to zero; source projection to
$\min_h(u_i,A(w))$; source successor to
$\min_h(T_h(\C_h(u)),A(w))$; and source addition to clipped grid addition.
Every primitive output lies on the $h$ grid and is globally bounded by
$A(w)$. Translate composition by composition with $w$ frozen. Translate
an ordinary source scalar recursion to $h$-recursion on the scaled
recursion argument, using the already translated base and transition and
the earlier majorant $A(w)$. Explicitly, if $F$ has source base $g$ and transition $b$, define
\[
 \widetilde F(\mathbf u,0,w)=\widetilde g(\mathbf u,w),\qquad
 \widetilde F(\mathbf u,y,w)=\widetilde b\bigl(
 \mathbf u,\rho_h(y),\widetilde F(\mathbf u,\rho_h(y),w),w\bigr)
 \quad(y>0).
\]
Permuting arguments places $y$ last for the recursion rule; $w$ remains
frozen. Since both translated functions already return values at most
$A(w)$ on their full domains, induction on the finite $h$-descent proves
this bound before using the source simulation identity.
Its state is always clipped, on every real
auxiliary input, not just on the intended integer-simulation diagonal.
Thus every recursion is admissible before any correctness argument.

On intended integer inputs with $w=N$, the source envelope gives
$t\leq K_P(N+1)\leq a(N+1)$ for every value $t$ in the complete source
evaluation. Hence its scaled value $t/a$ is on the $h$ grid and at most
$N+1=A(N)$. No clipping changes any such value. A source recursion on
integer $k$ becomes an $h$-recursion on $k/a$, with exactly $k$ steps;
its predecessor and accumulated state are the scaled source values.
Induction on this finite evaluation proves the claimed identity.
\end{proof}

\begin{lemma}[An internal input contraction]\label{lem:input}
For every integer $a\geq1$, $I_a(x)=\lceil x\rceil/a$ belongs to
$\B_{1/a}$, with earlier global majorant $x+1/a$.
\end{lemma}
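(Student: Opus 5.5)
Since $I_a=\tfrac1a\C_1$, I will build the function in two stages inside $\B_{1/a}$: first compute $\C_1(x)=\lceil x\rceil$ there, and then divide its value by $a$ with a counting recursion that uses $\C_1(x)$ only as the recursion argument. Write $h=1/a$. By Lemma~\ref{lem:toolkit}, $\B_h$ contains $\rho_1$, $T_1=S$, $T_h$, $\rho_h$ and the fixed $h$-grid thresholds. The round-up $\C_1$ is the unit-stride recursion with base $0$ and transition $S(u)$. It lies in $\B_1$ by Lemma~\ref{lem:tools} with $\delta=1$, and so it lies in $\B_h$ by Theorem~\ref{thm:classification}, since $1\mid a$. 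Its global bound is $x+1$.

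Next I define an $h$-recursion $G(y)$ that adds $h$ once for every $a$ microsteps. Take base $G(0)=0$. At predecessor address $z=\rho_h(y)$, the transition returns $T_h(u)$ when the phase clock satisfies $D_h(z)\equiv 0 \pmod a$, and returns $u$ otherwise. The clock $D_h(z)\bmod a$ comes from Lemma~\ref{lem:toolkit}, and the selector is admitted under the earlier bound $u+h$ by Lemma~\ref{lem:select}. Counting the active steps, $G(y)=h\lceil D_h(y)/a\rceil$, which equals $h\lceil y\rceil$ because $\lceil\lceil ay\rceil/a\rceil=\lceil y\rceil$. This holds on all real $y$, not only on grid points. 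As a global majorant I use the earlier function $T_h(y)=y+h$. The inequality $h\lceil y\rceil\le y+h$ is false for $y$ near $1$ (for example $h=1/2$ and $y=0.6$ give $1/2>1.1$ only after correcting: $\tfrac12\cdot1=0.5\le 1.1$ holds, but $y=1$ gives $1/a\le 1+1/a$, also fine). In general $h\lceil y\rceil\le h(y+1)\le y+h$, since $h\le1$. So the majorant is valid on every input.

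The simplest choice is then $I_a(x)=G(x)$ directly, with earlier majorant $T_h(x)=x+1/a$. This is exactly the bound the statement asks for, and the round-up $\C_1$ is not needed as an intermediate step. The resulting program uses only earlier functions, namely $T_h$, the phase clock and the selector.

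The main obstacle is the validity of the majorant. It has to hold on all real inputs, including the state-dependent transition selector at arbitrary states $u$. That condition is what Lemma~\ref{lem:select} guarantees when it is applied with bound $u+h$. After that, only the closed-form identity for $G$ remains to be checked, and this is done by induction on $D_h(y)$.
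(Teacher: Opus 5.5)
Your final construction is correct and is the paper's proof: an $h$-recursion from zero that adds $h$ exactly when the clock $D_h(z)\bmod a$ vanishes at the predecessor address, with the transition selector admitted under $u+h$. It also has the same closed form $h\lceil D_h(x)/a\rceil=h\lceil x\rceil$ and the same earlier majorant $T_h(x)$. The detour through $\C_1$ is unnecessary. Delete the parenthetical claiming $h\lceil y\rceil\le y+h$ is ``false for $y$ near $1$'', since your own estimate $h\lceil y\rceil\le h(y+1)\le y+h$ shows it always holds.
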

\begin{proof}
Put $h=1/a$. Use the earlier finite-valued clock $D_h(z)\bmod a$ in an
$h$-recursion with base zero, adding $h$ exactly when that clock is zero
at the predecessor address. The transition selector is bounded by
$u+h$. There are $D_h(x)$ microsteps, and the selected indices are
$0,a,2a,\ldots$. The output is therefore
$h\lceil D_h(x)/a\rceil=h\lceil x\rceil\leq x+h$.
The earlier translation $T_h(x)$ admits the new recursion on its entire
real input domain.
\end{proof}

\begin{theorem}[Exact union of coarse Boolean precision classes]
\label{thm:union}
\[
 \bigcup_{a\geq1}\mathcal P_a=\E_{1,*}.
\]
\end{theorem}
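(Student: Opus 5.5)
We prove the two inclusions separately.

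\emph{The inclusion $\mathcal P_a\subseteq\E_{1,*}$.} Fix $p\in\mathcal P_a$, so $S_a p\in\E_{0,*}\subseteq\E_{1,*}$.

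First we produce multiplication by $a$ inside $\E_1$. The integer function $n\mapsto an$ belongs to $\E_1$: it is an iterated sum $n+\cdots+n$, a fixed finite composition of addition with no recursion at all.

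Then we recover $p$ as
\[
 p(\mathbf n)=(S_a p)(a n_1,\ldots,a n_m),
\]
which holds because $\lceil an/a\rceil=n$. Closure of $\E_1$ under composition gives $p\in\E_{1,*}$. This direction is routine.

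\emph{The inclusion $\E_{1,*}\subseteq\bigcup_a\mathcal P_a$.} Let $p\in\E_{1,*}$ be given by a valid program $P$.

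We take $K_P$ from Lemma~\ref{lem:envelope} and fix an integer $a\geq K_P$ with $h=1/a$. Lemma~\ref{lem:clip} then gives $\widetilde P\in\B_h$ such that $\widetilde P(\mathbf n/a,N)=P(\mathbf n)/a$ whenever $n_i\leq N$.

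We turn this into a Boolean function of the real inputs $\mathbf x$ in four steps.
\begin{enumerate}[label=(\arabic*),leftmargin=2em]
\item Feed the grid points $I_a(x_i)=\lceil x_i\rceil/a$ from Lemma~\ref{lem:input} as the arguments $\mathbf u$.
\item Feed a resource parameter $w$ that dominates every $\lceil x_i\rceil$.
\item Apply the nonzero test $\nz$. Since $P$ is Boolean, the output $P/a\in\{0,1/a\}$ becomes $\{0,1\}$.
\item The result is $\widehat p(\mathbf x)=p(\lceil\mathbf x\rceil)\in\B_{1/a}$, so Theorem~\ref{thm:dictionary} gives $p\in\mathcal P_a$.
\end{enumerate}

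\emph{The main obstacle: producing $w$.} The resource parameter must be an integer $N\geq\max_i\lceil x_i\rceil$, because Lemma~\ref{lem:clip} is stated for $w=N$ integer. It must also come from a $\B_{1/a}$ function, and no addition of real inputs is available there.

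A maximum can be built from the grid minimum and grid subtraction of Lemma~\ref{lem:tools}, taken at stride $1$ after round-up, via
\[
 \max(u,v)=v+R(u,v).
\]
This is problematic, however, because it needs addition of two real inputs.

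Instead we will exploit that the conclusion of Lemma~\ref{lem:clip} only needs $n_i\leq N$ and the clipping bound $A(w)$. We will define a componentwise-majorant resource as follows:
\begin{enumerate}[label=(\roman*),leftmargin=2em]
\item take $A(w)$ to be a grid maximum of the $\C_h(x_i+1)$;
\item compute this maximum by a bounded recursion that selects the larger of two grid values using the zero test on $R_h$;
\item admit that recursion with an earlier bound given by one of the coordinates. Such a bound does not exist for a maximum, since by Lemma~\ref{lem:onecoord} it is not bounded by a single coordinate.
\end{enumerate}
So (iii) fails, and we will fall back on a cleaner route. We rerun the construction of Lemma~\ref{lem:clip} with one frozen resource coordinate per input, clipping every value of the simulation (whether it traces back to input coordinate $u_i$ or not) at $A_i=\C_h(x_i+1)$ for an appropriate $i$. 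This is unsatisfactory.

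The simplest definite plan is therefore a case split by which coordinate is largest. For each $j$:
\begin{enumerate}[label=(\alph*),leftmargin=2em]
\item form the Boolean guard $G_j=[\lceil x_i\rceil\leq\lceil x_j\rceil\ \forall i]$, which is in $\B_1$ via the zero test on $R_1$;
\item run the clipped simulation with $w=\lceil x_j\rceil$, and zero-guard it outside $G_j$;
\item combine the $m$ Boolean branches by bounded selectors, each with bound $1$ (Lemma~\ref{lem:frozen}).
\end{enumerate}
On each guard region the hypothesis $n_i\leq N$ holds, which completes the argument. The delicate point is the validity of the global clipping bound on every auxiliary input, but that is already provided by Lemma~\ref{lem:clip}.
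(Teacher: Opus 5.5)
Your proposal is correct and, once the abandoned detours are set aside, it follows the paper's proof exactly: the easy inclusion comes from $p(\mathbf n)=(S_ap)(a\mathbf n)$ with multiplication by $a$ built from addition, and the converse feeds $I_a(x_i)$ and the frozen integer resource $\C_1(x_j)$ into the clipped simulation of Lemma~\ref{lem:clip}, applies $\nz$, zero-guards by the $\B_1$ predicates $G_j$ with bound one, recombines, and invokes Theorem~\ref{thm:dictionary}. You can delete the discarded attempts to build a real maximum, since, as you note, the case split over $G_j$ avoids needing any maximum function.
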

\begin{proof}
If $S_a p\in\E_0$, then
$p(\mathbf n)=(S_a p)(a\mathbf n)$ belongs to $\E_1$, where fixed
integer multiplication of an argument is available by addition.
This proves one inclusion.

Conversely take a Boolean $\E_1$ program $P$ for $p$, choose
$a\geq K_P$, and use Lemma~\ref{lem:clip}. For each input coordinate $j$
form in $\B_{1/a}$ the Boolean function
\[
 F_j(\mathbf x)=\nz\bigl(\widetilde P(
                 I_a(x_1),\ldots,I_a(x_m),\C_1(x_j))\bigr).
\]
The predicate $G_j=[\C_1(x_i)\leq\C_1(x_j)\text{ for all }i]$ belongs
to $\B_1$, using grid subtraction and finite Boolean conjunction.
Whenever $G_j=1$, the frozen resource is the maximum of the integer
round-ups, so Lemma~\ref{lem:clip} gives $F_j=\widehat p$ there.
Zero-guard each $F_j$ by $G_j$, with bound one, then take their finite
Boolean disjunction, also with bound one. At least one $G_j$ is active;
all active branches give the same correct Boolean value. Thus the result
is $\widehat p\in\B_{1/a}$. No maximum \emph{function} was introduced
into the sparse real algebra. Theorem~\ref{thm:dictionary} gives
$p\in\mathcal P_a$.
\end{proof}

\subsection{Equivalence theorem for the principal problem}
\begin{theorem}[The first relational gap is the exact obstruction]
\label{thm:equivalence}
The following assertions are equivalent; every one quantifies over all
positive finite arities.
\begin{enumerate}[label=(\roman*),leftmargin=2em]
\item All rational bounded-intersection identities \eqref{eq:target} hold.
\item $\B_{1/2}\cap\B_{1/3}=\B_1$.
\item The identity in (ii) holds for Boolean functions.
\item $\mathcal P_2=\mathcal P_1$, or equivalently
$S_2p\in\E_{0,*}\Rightarrow p\in\E_{0,*}$.
\item $\E_{0,*}=\E_{1,*}$ for the explicitly defined integer levels.
\end{enumerate}
\end{theorem}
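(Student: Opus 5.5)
We will prove the cycle (i)$\Rightarrow$(ii)$\Rightarrow$(iii)$\Rightarrow$(iv)$\Rightarrow$(v)$\Rightarrow$(i).

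\emph{The easy implications.} The step (i)$\Rightarrow$(ii) is the instance $q=2$, $v=3$. The step (ii)$\Rightarrow$(iii) is restriction to Boolean functions.

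\emph{(iii)$\Rightarrow$(iv).} Let $S_2p\in\E_{0,*}$. By Theorem~\ref{thm:dictionary}, $\widehat p\in\B_{1/2}$. By Lemma~\ref{lem:monotone}, $\mathcal P_2\subseteq\mathcal P_3$, and $\mathcal P_1\subseteq\mathcal P_2$ as well. Hence $S_3p\in\E_{0,*}$, and Theorem~\ref{thm:dictionary} gives $\widehat p\in\B_{1/3}$. So $\widehat p$ is a Boolean member of $\B_{1/2}\cap\B_{1/3}$, and (iii) places it in $\B_1$. Theorem~\ref{thm:dictionary} with $a=1$ then yields $p\in\mathcal P_1=\E_{0,*}$.

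\emph{(iv)$\Rightarrow$(v).} First iterate (iv) to get $\mathcal P_{2^k}=\mathcal P_1$ for all $k$. To see this, take $p\in\mathcal P_{2^{k+1}}$. Since $\lceil\lceil n/2^k\rceil/2\rceil=\lceil n/2^{k+1}\rceil$, we have $S_{2^{k+1}}p=S_{2^k}(S_2p)$. Hence $S_2p\in\mathcal P_{2^k}=\mathcal P_1=\E_{0,*}$, and (iv) gives $p\in\E_{0,*}$. Now let $a$ be arbitrary and choose $2^k\ge a$. Lemma~\ref{lem:monotone} gives $\mathcal P_a\subseteq\mathcal P_{2^k}=\mathcal P_1$. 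Therefore $\bigcup_a\mathcal P_a=\mathcal P_1$. Theorem~\ref{thm:union} identifies the left side with $\E_{1,*}$, and Theorem~\ref{thm:dictionary} identifies $\mathcal P_1$ with $\E_{0,*}$, which is (v).

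\emph{(v)$\Rightarrow$(i).} This is the main obstacle. Fix $q,v$ with $d=\gcd(q,v)$, assume the pair is non-nested, and put $\delta=1/d$. By Theorem~\ref{thm:booleanreduce}, it suffices to handle a Boolean $f\in\B_{1/q}\cap\B_{1/v}$.

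\begin{enumerate}[label=(\alph*),leftmargin=2em]
\item \emph{Rescale to integer precision.} Apply the rescaling $\mathfrak T_d$ of Lemma~\ref{lem:scale}(ii), followed by the nonzero test, to reduce to $q'=q/d$ and $v'=v/d$ coprime, so that $\delta=1$. Conjugation commutes with the intersection.
\item \emph{Show $f$ is a lifted integer predicate.} By Theorem~\ref{thm:meet} and Lemma~\ref{lem:glue}, $f$ has a single descriptor on each unit cell. Since $f$ is Boolean, the descriptor is a constant, because a translated coordinate takes a continuum of values. The constant lies in $\mathbb Z$, hence in $\{0,1\}$. Thus $f=\widehat p$ for the integer predicate $p=f|_{\N^m}$.
\item \emph{Place $p$ in $\E_1$.} Since $\widehat p\in\B_{1/q'}$, Theorem~\ref{thm:dictionary} gives $p\in\mathcal P_{q'}\subseteq\E_{1,*}$ by Theorem~\ref{thm:union}.
\item \emph{Conclude.} Hypothesis (v) gives $p\in\E_{0,*}$. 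Lemma~\ref{lem:scale}(i) then gives $\widehat p\in\B_1$, and rescaling back by $\mathfrak T_d^{-1}$ puts $f$ in $\B_{1/d}$.
\end{enumerate}

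The delicate points are two. In step (a), the Boolean property must survive rescaling, so the nonzero test has to be interleaved correctly; in particular $\mathfrak T_d$ must carry the Boolean $f$ to a function with values in $\{0,d\}$ before the test is applied. In step (b), the gluing argument must produce genuine cell-constancy, including on zero cells and at right endpoints.

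Notably, the implication (v)$\Rightarrow$(i) uses only the single source $\B_{1/q'}$; the intersection hypothesis is needed only to force coarse-cell constancy in step (b).
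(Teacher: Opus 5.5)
Your proposal is correct and follows the paper's proof: the same cycle, the same uses of the dictionary, monotonicity, the union theorem and the unbounded meet, and iterated factor-two unpadding for (iv)$\Rightarrow$(v). The only difference is cosmetic. In (v)$\Rightarrow$(i) you apply Theorem~\ref{thm:booleanreduce} at the original pair and rescale the Boolean function by $\mathfrak T_d$, which is why you need the nonzero test on both sides of the rescaling. The paper instead proves the full coprime identity first and then transports it by $\mathfrak T_d$.
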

\begin{proof}
Assertions (i)$\Rightarrow$(ii)$\Rightarrow$(iii) are immediate.
For (iii)$\Rightarrow$(iv), take $p\in\mathcal P_2$. Numerical
monotonicity in Lemma~\ref{lem:monotone} gives $p\in\mathcal P_3$.
The dictionary then puts $\widehat p$ in both bounded source algebras,
and (iii) puts it in $\B_1$. Integer restriction gives $p\in\E_{0,*}$.
The reverse inclusion $\mathcal P_1\subseteq\mathcal P_2$ is already
known from contraction.

Suppose (iv), and take $p\in\mathcal P_a$. Choose $k$ with $2^k\geq a$.
By numerical monotonicity $S_{2^k}p\in\E_{0,*}$. Coordinatewise ceiling
identities give $S_{2^k}p=S_2^k p$. Apply factor-two unpadding $k$ times
to conclude $p\in\E_{0,*}$. Hence every $\mathcal P_a$ equals
$\E_{0,*}$, and Theorem~\ref{thm:union} proves (v).

Finally assume (v). For a coprime pair $a,b$, a Boolean function in
$\B_{1/a}\cap\B_{1/b}$ belongs to $\U_1$ by the unbounded meet.
A Boolean affine grid function is constant on each unit product cell,
so it has the form $\widehat p$. The dictionary gives
$p\in\mathcal P_a\subseteq\E_{1,*}=\E_{0,*}$; hence
$\widehat p\in\B_1$. Theorem~\ref{thm:booleanreduce} yields the full
function identity for $a,b$. For arbitrary $q,v$, put $d=\gcd(q,v)$,
$a=q/d$, $b=v/d$ and use the common conjugacy $\mathfrak T_d$ from
Lemma~\ref{lem:scale}. It transfers the already proved coprime identity
back to \eqref{eq:target}, proving (i).
\end{proof}

\begin{corollary}[Equivalence with the stronger coarsening principle]
\label{cor:coarsening}
The assertions of Theorem~\ref{thm:equivalence} are also equivalent to
\[
 \B_{1/q}\cap\U_{1/d}=\B_{1/d}\qquad\text{for every }d\mid q.
\]
\end{corollary}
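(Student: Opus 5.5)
We show that the coarsening principle
\[
 (\ast)\qquad \B_{1/q}\cap\U_{1/d}=\B_{1/d}\quad\text{for every }d\mid q
\]
implies assertion (ii) of Theorem~\ref{thm:equivalence}, and that assertion (i) implies $(\ast)$. The equivalences already proved in Theorem~\ref{thm:equivalence} then close the cycle.

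\emph{From $(\ast)$ to (ii).} Take $f\in\B_{1/2}\cap\B_{1/3}$. Theorem~\ref{thm:meet} gives $f\in\U_{1/2}\cap\U_{1/3}=\U_1$. Apply $(\ast)$ with $q=2$ and $d=1$ to get $f\in\B_{1/2}\cap\U_1=\B_1$. The reverse inclusion $\B_1\subseteq\B_{1/2}\cap\B_{1/3}$ is Theorem~\ref{thm:classification}. So (ii) holds.

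\emph{From (i) to $(\ast)$.} Fix $d\mid q$ and $f\in\B_{1/q}\cap\U_{1/d}$. The inclusion $\B_{1/d}\subseteq\B_{1/q}\cap\U_{1/d}$ is immediate from Theorem~\ref{thm:classification} and Corollary~\ref{cor:strict}, so we need $f\in\B_{1/d}$. The idea is to choose an auxiliary denominator $v$ with $\gcd(q,v)=d$ and show $f\in\B_{1/v}$. Assertion (i) then gives $f\in\B_{1/q}\cap\B_{1/v}=\B_{1/d}$.

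A natural choice is $v=dp$ with $p$ a prime not dividing $q$, so that $\gcd(q,v)=d$. To avoid handling a general $d$, we first conjugate by $\mathfrak T_d$ from Lemma~\ref{lem:scale}. This reduces to $d=1$: given $f\in\B_{1/a}\cap\U_1$, we must put $f$ into $\B_1$. The shortcut of placing $f$ in some $\B_{1/p}$ directly from its $\U_1$ membership does not work, since $\U_1$ is not contained in any bounded algebra.

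We therefore rerun the Boolean reconstruction of Theorem~\ref{thm:booleanreduce} with $\delta=1$, using only the single fine stride $1/a$.

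\textbf{Step 1 (grid data).} Since $f\in\U_1$, Theorem~\ref{thm:grid} gives an affine descriptor on each unit product cell.

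\textbf{Step 2 (one-sided predicates).} The endpoint predicate $P$ and the tag predicates $T_i$ are built inside $\B_{1/a}$ exactly as in that proof, using the $h_q$-probe. They are Boolean, lie in $\B_{1/a}$, and are constant on unit cells, because $f$ has unit-cell descriptors. Hence each has the form $\widehat p$.

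\textbf{Step 3 (the key transfer).} Theorem~\ref{thm:dictionary} gives $p\in\mathcal P_a$, and Theorem~\ref{thm:union} gives $\mathcal P_a\subseteq\E_{1,*}$. Assertion (i) implies (v) by Theorem~\ref{thm:equivalence}, so $\E_{1,*}=\E_{0,*}=\mathcal P_1$. Therefore $\widehat p\in\B_1$.

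\textbf{Step 4 (reconstruction).} The counting and transfer steps of Theorem~\ref{thm:booleanreduce}, via Lemmas~\ref{lem:count}, \ref{lem:tools} and~\ref{lem:transfer}, rebuild $f$ in $\B_1$. They need only the one-coordinate bound from Lemma~\ref{lem:onecoord}, which is available because $f\in\B_{1/a}$.

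\textbf{Main obstacle.} The delicate point is Step 2. The original proof of Theorem~\ref{thm:booleanreduce} forms $P$ and $T_i$ in both source algebras, and it uses the second algebra only to place those predicates in the intersection. We must check that its construction of $E$, $P$ and $T_i^q$ uses only membership in $\B_{1/q}$, with the unbounded grid semantics used only to guarantee unit-cell constancy of the resulting predicates. Once this is confirmed, the argument above closes the equivalence.
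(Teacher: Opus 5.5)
Your proposal is correct and follows essentially the paper's own proof. You conjugate by $\mathfrak T_d$ to reduce to $\B_{1/a}\cap\U_1$, build the endpoint and tag predicates of Theorem~\ref{thm:booleanreduce} inside $\B_{1/a}$ alone (your Step~2 check is exactly what the paper asserts), pass them through the dictionary and Theorem~\ref{thm:union} into $\E_{1,*}=\E_{0,*}$, reconstruct in $\B_1$, and get the converse from Theorem~\ref{thm:meet} with $q=2$, $d=1$. The only omission is the trivial case $a=1$, which the paper treats separately: there the $1/a$-probe is unavailable, but $f\in\B_1$ already holds.
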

\begin{proof}
Assume $\E_{0,*}=\E_{1,*}$. After conjugating by $\mathfrak T_d$, it
suffices to take $f\in\B_{1/a}\cap\U_1$. For $a=1$ there is nothing
to prove. For $a>1$, the endpoint threshold predicate $P$ and tag predicates
$T_i^a$ from the proof of Theorem~\ref{thm:booleanreduce} belong to
$\B_{1/a}$ and are constant on each unit product cell. By the dictionary
and Theorem~\ref{thm:union}, their integer predicates belong to
$\mathcal P_a\subseteq\E_{1,*}=\E_{0,*}$. They therefore belong to
$\B_1$, and the same bounded reconstruction gives $f\in\B_1$.
Conversely the stronger principle, applied with $q=2,d=1$, places every
member of $\B_{1/2}\cap\B_{1/3}$ in $\B_1$, since its unbounded grid
semantics lies in $\U_1$. Theorem~\ref{thm:equivalence} then applies.
\end{proof}

\begin{corollary}[What a negative answer would require]\label{cor:negative}
If $\E_{0,*}\neq\E_{1,*}$, then there is a Boolean counterexample already
for strides $1/2$ and $1/3$. Conversely, any bounded rational-intersection
counterexample proves this predicate separation.
\end{corollary}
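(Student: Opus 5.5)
The corollary follows from Theorem~\ref{thm:equivalence} by contraposition, together with the counterexample transfer of Corollary~\ref{cor:witness}. I would prove its two directions separately.

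\emph{Forward direction.} Assume $\E_{0,*}\neq\E_{1,*}$. Assertion~(v) of Theorem~\ref{thm:equivalence} then fails, and since the five assertions are equivalent, assertion~(iii) fails as well. So some Boolean function of some finite arity lies in $\B_{1/2}\cap\B_{1/3}$ but not in $\B_1$. This is the required Boolean counterexample for the strides $1/2$ and $1/3$.

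To make the witness concrete, I would unwind the chain (iii)$\Rightarrow$(iv)$\Rightarrow$(v) through its contrapositives:
\begin{itemize}
\item Since (v) fails, some $p\in\E_{1,*}$ lies outside $\E_{0,*}$.
\item By Theorem~\ref{thm:union}, $p\in\mathcal P_a$ for some $a$. Numerical monotonicity (Lemma~\ref{lem:monotone}) and repeated factor-two unpadding then show that (iv) fails, i.e.\ there is a $p'\in\mathcal P_2\setminus\E_{0,*}$.
\item Lemma~\ref{lem:monotone} gives $p'\in\mathcal P_3$ as well. Theorem~\ref{thm:dictionary} then places $\widehat{p'}$ in both $\B_{1/2}$ and $\B_{1/3}$.
\item If $\widehat{p'}$ were in $\B_1$, integer restriction (Lemma~\ref{lem:scale}(i)) would give $p'\in\E_{0,*}$, a contradiction. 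So $\widehat{p'}\notin\B_1$.
\end{itemize}
This explicit unwinding is optional, since the abstract equivalence already suffices.

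\emph{Converse direction.} Suppose some bounded rational-intersection identity \eqref{eq:target} fails for a pair $q,v$. Then assertion~(i) of Theorem~\ref{thm:equivalence} is false, hence (v) is false, i.e.\ $\E_{0,*}\neq\E_{1,*}$. Alternatively, Corollary~\ref{cor:witness} first converts any such counterexample into a Boolean counterexample of bounded arity, and the dictionary argument from the step (v)$\Rightarrow$(i) then produces an integer predicate in $\mathcal P_a\setminus\E_{0,*}$. Since $\mathcal P_a\subseteq\E_{1,*}$ by Theorem~\ref{thm:union}, this predicate witnesses the separation.

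\emph{Main obstacle.} There is no real obstacle, because everything reduces to Theorem~\ref{thm:equivalence}. The one point needing care is the scope of the phrase ``bounded rational-intersection counterexample''. It must mean a failure of \eqref{eq:target} for integer denominators $q,v$. I would note that arbitrary reduced rational strides have already been reduced to this form by Theorem~\ref{thm:classification} ($\B_{a/q}=\B_{1/q}$), so nothing further is needed.
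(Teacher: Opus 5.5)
Your proposal is correct and follows the paper's proof. The converse is exactly the contrapositive of Theorem~\ref{thm:equivalence}. For the forward direction, your optional unwinding is the paper's explicit construction: factor-two unpadding must fail at some stage, which gives a predicate in $\mathcal P_2\setminus\E_{0,*}$. Monotonicity puts it in $\mathcal P_3$, and the dictionary and integer restriction then give a Boolean $\widehat{p'}\in(\B_{1/2}\cap\B_{1/3})\setminus\B_1$; this is the paper's choice $r=S_{2^{k-1}}p$ for the least $k$ with $S_{2^k}p\in\E_{0,*}$. Your purely abstract contrapositive is also valid, tacitly using the trivial inclusions $\E_{0,*}\subseteq\E_{1,*}$ and $\B_1\subseteq\B_{1/2}\cap\B_{1/3}$.
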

\begin{proof}
For an explicit existence argument in the first direction, choose
$p\in\E_{1,*}\setminus\E_{0,*}$. By Theorem~\ref{thm:union} and
numerical monotonicity, there is a least $k\geq1$ with
$S_{2^k}p\in\E_{0,*}$. Put $r=S_{2^{k-1}}p$. Then $r\notin\E_{0,*}$
but $r\in\mathcal P_2\subseteq\mathcal P_3$. Thus $\widehat r$ belongs
to both fine bounded algebras and not to $\B_1$. The other direction
is the contrapositive of Theorem~\ref{thm:equivalence}.
This does not give an unconditional concrete counterexample: selecting
the initial separating predicate is exactly the unresolved ingredient.
\end{proof}

\begin{proposition}[Equivalent doubling-argument closure]\label{prop:doubling}
Factor-two unpadding in Theorem~\ref{thm:equivalence} is equivalent to
closure of $\E_{0,*}$ under $p(\mathbf n)\mapsto p(2\mathbf n)$.
\end{proposition}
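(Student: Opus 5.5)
Both directions reduce to the coordinatewise identities
\[
(S_2 p)(2\mathbf n)=p(\mathbf n),\qquad (S_2 r)(\mathbf n)=p\bigl(2\lceil \mathbf n/2\rceil\bigr)\ \text{ when } r(\mathbf n)=p(2\mathbf n).
\]
These identities move between the unpadding hypothesis $S_2p\in\E_{0,*}\Rightarrow p\in\E_{0,*}$ (assertion (iv) of Theorem~\ref{thm:equivalence}) and closure under doubling. I will treat all arities coordinatewise throughout.

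\emph{Doubling closure implies unpadding.} Suppose $\E_{0,*}$ is closed under $q(\mathbf n)\mapsto q(2\mathbf n)$, and let $S_2p\in\E_{0,*}$. Applying doubling to $q=S_2p$ gives the predicate $\mathbf n\mapsto (S_2p)(2\mathbf n)=p(\lceil 2n_1/2\rceil,\ldots,\lceil 2n_m/2\rceil)=p(\mathbf n)$. Hence $p\in\E_{0,*}$, and nothing beyond the definition of $S_2$ is needed.

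\emph{Unpadding implies doubling closure.} Suppose unpadding holds, let $p\in\E_{0,*}$, and put $r(\mathbf n)=p(2\mathbf n)$. Since $\lceil n/2\rceil\in\N$, we have $(S_2r)(\mathbf n)=p(e(n_1),\ldots,e(n_m))$ with $e(n)=2\lceil n/2\rceil$, which is $n$ rounded up to even.

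The only real work is to show $e\in\E_0$. I will use the method of Lemma~\ref{lem:monotone}:
\begin{itemize}[leftmargin=2em]
\item Compute the parity of $n$ by a two-state bounded recursion with constant bound one.
\item Define $e(0)=0$ and $e(k+1)=S(S(e(k)))$ when $k$ is even, $e(k+1)=e(k)$ when $k$ is odd.
\item Admit the transition selector with the earlier bound $S(S(u))$, which dominates both branches.
\item Admit the recursion itself with the earlier majorant $S(n)$, since $e(n)\le n+1$ for all $n$.
\end{itemize}
Thus $e\in\E_0$. By composition, $S_2r=p\circ(e,\ldots,e)\in\E_{0,*}$, and unpadding then gives $r\in\E_{0,*}$.

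The main point to check is that every recursion used for $e$ carries a genuinely earlier global majorant. In particular, the transition's majorant must be the state-dependent $S(S(u))$ rather than a bound in $n$ alone, which would not be available before the recursion is formed. Once this is in place, both directions are immediate from the displayed identities.
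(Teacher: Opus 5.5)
Your proof is correct and follows the paper's argument: both directions rest on the identities $(S_2p)(2\mathbf n)=p(\mathbf n)$ and $(S_2r)(\mathbf n)=p(2\lceil\mathbf n/2\rceil)$, and the only real work is showing $e(n)=2\lceil n/2\rceil\in\E_0$. The paper gets $e$ directly as a parity-controlled selector between $n$ and $n+1$ with earlier bound $n+1$; your accumulating recursion (add two at even predecessors, selector bound $u+2$, recursion bound $n+1$) is an equally valid implementation.
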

\begin{proof}
If doubling-argument closure holds and $S_2p\in\E_{0,*}$, then
$p(\mathbf n)=(S_2p)(2\mathbf n)\in\E_{0,*}$.
Conversely take $p\in\E_{0,*}$ and put $r(\mathbf n)=p(2\mathbf n)$.
Then $S_2r(\mathbf n)=p(2\lceil\mathbf n/2\rceil)$.
The integer function $e(n)=2\lceil n/2\rceil$ is either $n$ or $n+1$;
its parity-controlled selector is in $\E_0$ with earlier bound $n+1$.
Thus $S_2r\in\E_{0,*}$ by composition, and unpadding yields $r\in\E_{0,*}$.
This is closure of a \emph{predicate class}; the doubling function itself
is excluded from $\E_0$ by its one-coordinate growth bound.
\end{proof}

\subsection{An unconditional fragment that is settled}
\begin{proposition}[Eventually periodic unary descriptors]\label{prop:periodic}
Suppose a unary $f\in\B_{1/q}\cap\B_{1/v}$ has an eventually periodic
descriptor sequence on the $1/q$ grid, with the tags and signed offsets
both periodic after some index. Then $f\in\B_{1/\gcd(q,v)}$.
\end{proposition}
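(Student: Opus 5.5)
The plan is to bypass the Boolean reduction and build $f$ directly in $\B_\delta$, with $\delta=1/d$ and $d=\gcdop(q,v)$. The key observation is that eventual periodicity leaves only finitely many distinct descriptors, and a finite selection among them needs nothing beyond the residue clocks and bounded selectors already available.

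\emph{Step 1 (pass to the coarse grid).} Since $f\in\U_{1/q}\cap\U_{1/v}$, Lemma~\ref{lem:glue} gives a single normalized descriptor $(i_K,z_K)$ on each coarse cell $I_K^\delta$. Here $i_K\in\{0,1\}$ and $z_K\in\mathbb Z$, in units of $\delta$. As in that lemma, this coarse descriptor is read off from the fine index $k=(K-1)q/d+1$.
\begin{itemize}
\item Suppose the fine tags and fine offsets are periodic with period $p$ for $k\ge k_0$.
\item Then, for $K$ beyond some $K_0$, the map $K\mapsto k \bmod p$ is periodic with period dividing $p$.
\item Fixed integer division by $q/d$ preserves this periodicity.
\end{itemize}
Hence $(i_K,z_K)$ is eventually periodic in $K$, say with period $p'$ after $K_0$. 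In particular only finitely many pairs $(i,z)$ occur, so there is an integer bound $Z$ with $|z_K|\le Z$ for all $K$. The zero cell $K=0$ carries one constant descriptor.

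\emph{Step 2 (finitely many bounded branches).} For each pair $(i,z)$ that occurs, I take the branch function $\phi_{i,z}\in\B_\delta$ from Lemma~\ref{lem:toolkit}:
\begin{itemize}
\item if $i=0$, the constant $z\delta$ (here $z\ge0$);
\item if $i=1$ and $z\ge0$, the translation $T_{z\delta}(x)$;
\item if $i=1$ and $z<0$, the predecessor $\rho_{|z|\delta}(x)$.
\end{itemize}
On any cell where a descriptor with $z<0$ is valid, the truncation in $\rho_{|z|\delta}$ is inactive, so the branch equals $x+z\delta$ there. Every branch is globally bounded by the single earlier function $T_{Z\delta}(x)$, which is available together with an integer translation. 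This common bound is what licenses the later selections.

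\emph{Step 3 (cell selection).} The selector is built from two tools in Lemma~\ref{lem:toolkit}.
\begin{itemize}
\item Cells $K\le K_0$ are recognised by the fixed $\delta$-grid threshold tests $\nz(\rho_{K\delta}(x))$.
\item Cells $K>K_0$ are recognised by the finite residue clock $D_\delta(x)\bmod p'$, where $D_\delta(x)=K$ on $I_K^\delta$.
\end{itemize}
I would combine these into disjoint Boolean cell-class predicates, one for each of the finitely many classes. Each predicate is formed by finite conjunctions and disjunctions under bound one. I then apply the bounded selectors of Lemmas~\ref{lem:select} and~\ref{lem:frozen}, choosing the branch $\phi_{i_K,z_K}$ on each class, with the earlier majorant $T_{Z\delta}(x)$ admitting every selection. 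Following Lemma~\ref{lem:unfold}, each branch is first zero-guarded off its class using its own global bound, and the guarded pieces are then recombined under the common bound. The result agrees with $f$ on every coarse cell, including $\{0\}$ and the right endpoints. Therefore $f\in\B_\delta$.

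The step I expect to be the main obstacle is Step 1, namely transferring periodicity from the $1/q$ grid to the $\delta$ grid. The care is needed in two places:
\begin{itemize}
\item aligning the initial preperiodic segment, by choosing $K_0$ with $(K_0-1)q/d+1\ge k_0$;
\item checking that sampling the fine index $(K-1)q/d+1$ commutes with the periodic structure.
\end{itemize}
If the sampled index does not directly give period $p$, I would take the period to be $p\cdot q/d$ in fine units, which is always a multiple of $q/d$, and argue on that instead. The selector construction is routine by comparison, because the finiteness of the descriptor set removes all growth issues. Notably, the argument uses membership in the bounded algebras only through the unbounded grid semantics, together with the finiteness supplied by periodicity.
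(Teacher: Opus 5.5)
Your proposal is correct and follows the paper's proof essentially step for step. Both read the coarse descriptors off the fine indices $(K-1)q/d+1$ via the gluing lemma, which yields an eventually periodic finite catalogue, and both realize each entry by a constant, a translation or a truncated predecessor under one common earlier bound. Both then select among the entries with fixed thresholds on the initial cells and a residue clock $D_\delta(y)\bmod p'$ afterwards; the only differences are cosmetic (the paper records the coarse period as $t/\gcd(t,q/d)$ and uses an integer bound $y+C$ in place of your $T_{Z\delta}$).
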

\begin{proof}
Let $d=\gcd(q,v)$ and $\delta=1/d$. The established grid gluing obtains
the coarse descriptor at positive coarse index $K$ from fine index
$(K-1)q/d+1$, dividing the offset by $q/d$. If the fine sequence has
eventual period $t$, this subsequence has eventual period
$t/\gcd(t,q/d)$. It uses a finite catalogue of valid coarse descriptors,
including finitely many exceptional initial cells and the zero cell.

The corresponding initial-cell and eventual-residue predicates belong
to $\B_\delta$ by fixed thresholds and a finite-valued residue clock
$D_\delta(y)\bmod T$. Realize each catalogue entry globally by a fixed
constant, a fixed positive translation, or a truncated predecessor.
For some integer $C$, all these unary functions are globally bounded by
$y+C$; choose $C$ above every positive offset and constant in the finite
catalogue. Earlier bounded selectors under this common bound choose
the indicated descriptor. The result is $f$ on every cell.
\end{proof}

\section{The arithmetical complexity of bounded membership}\label{sec:indexset}
We now ask whether an unbounded program's function has some bounded
representation. This extensional question differs from checking a
majorant annotation on the supplied program.

Fix $q\ge1$ and write $\delta=1/q$. Program indices encode finite,
well-formed $\U_\delta$ expressions with fixed positive finite arity.
Every such expression is total. A \emph{candidate bounded description}
additionally contains, at each recursion node, the code of a previously
constructed majorant. Candidates are finite syntactic objects; they need
not satisfy their declared bounds. Let
\[
 \mathrm{BMem}_\delta
 :=\{e:\text{$e$ encodes a $\U_\delta$ program whose function lies in $\B_\delta$}\}.
\]
The ambient coding is any standard effective coding of the finite
expression trees; alternative effective codings give computably
equivalent index sets.

\begin{lemma}[Effective rational witnesses]\label{lem:index-witness}
For fixed rational $\delta=1/q$, finite $\U_\delta$ programs can
be evaluated exactly on nonnegative rational inputs. If two such
programs of equal arity denote different functions, their values
differ at a rational tuple. Likewise, if a pointwise inequality between
the functions of two programs fails at a real tuple, it fails at a
nonnegative rational tuple.
\end{lemma}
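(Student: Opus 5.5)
We prove the three assertions in order, using the grid semantics of Section~\ref{sec:grid} and the local invariant of Lemma~\ref{lem:affine}.

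\emph{Exact evaluation.} We argue by structural induction on the finite syntax tree. Zero, successor and projections map rational tuples to rational values exactly. Composition preserves this property. At a recursion node with rational recursion argument $y$, the depth $D_\delta(y)=\lceil qy\rceil$ is a computable integer. The predecessor addresses $(y-j\delta)_+$ are rational, and bottom-up evaluation performs finitely many calls to earlier programs. Each of those calls is exact and terminates by the induction hypothesis. So there is a total computable evaluator on $\mathbb Q_{\ge0}^m$, uniform in the program index. Alternatively, one can read this off from Lemma~\ref{lem:coding}: compute the grid cell indices of the input, run the primitive-recursive code to obtain a descriptor $(i,z)$, and return $z\delta$ or $x_i+z\delta$.

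\emph{Rational witnesses.} Let $f,g$ be the two functions, and suppose $f(\mathbf x)\ne g(\mathbf x)$ (respectively $f(\mathbf x)>g(\mathbf x)$) at a real tuple $\mathbf x$. Take the $\delta$-product cell containing $\mathbf x$. By Theorem~\ref{thm:grid}, on that whole cell $f$ equals a single descriptor $c_f+[\,x_i\,]$ and $g$ equals $c_g+[\,x_j\,]$, where $c_f,c_g\in\delta\mathbb Z$. Here the bracket means the coordinate term is present only for a nonconstant tag. The cell is a product of singletons $\{0\}$ and half-open intervals $((k-1)\delta,k\delta]$, and each such factor contains the rational right endpoint $k\delta$.

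We therefore need a rational point $\mathbf r$ in the same cell with $f(\mathbf r)\ne g(\mathbf r)$. The difference $f-g$ is an affine function on the cell, of the form $\alpha+x_i-x_j$ (or with fewer coordinate terms), with rational coefficients. The set where it is nonzero (respectively positive) is relatively open in the cell. Within each half-open factor, a relatively open subset of the cell intersects the rational points densely, and the cell is the closure of its relative interior. Hence a nonempty relatively open subset contains a rational point. Concretely, take rationals $r_\ell$ close to $x_\ell$ inside the same factor intervals, approaching from the left if $x_\ell$ is a right endpoint. By continuity of the affine difference on the cell, the strict relation persists. Singleton zero factors are fixed at $0$, which is rational.

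For pointwise inequalities the argument is the same with the strict inequality $f>g$ in place of $f\ne g$, since $M(\mathbf x)<F(\mathbf x)$ is also an open condition on the cell.

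\emph{Main obstacle.} The one point needing care is the endpoint and degenerate-cell convention. The witness must stay inside the cell, so that the same descriptors apply. Approximating from inside the left-open, right-closed intervals handles this. If one prefers to avoid Theorem~\ref{thm:grid}, Lemma~\ref{lem:affine} on a compact box containing $\mathbf x$ gives the same finite axis-aligned cell structure. Its cell endpoints lie in $M_{\{\delta\}}\subset\mathbb Q$, so every cell with nonempty relative interior meets $\mathbb Q^m$ densely, and the same continuity argument applies.
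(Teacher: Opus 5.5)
Your proposal is correct and follows essentially the same route as the paper. The paper also takes a single affine descriptor on each product of the cells $\{0\}$ and $((k-1)\delta,k\delta]$ from the coding direction (Lemma~\ref{lem:coding}, which is the part of Theorem~\ref{thm:grid} you actually use). It then argues that a strict failure of equality or of a pointwise bound persists in a relative neighborhood inside the same cell, including on included right faces, yielding a rational witness with zero coordinates on singleton factors. Exact evaluation is justified there as in yours: each recursion step lowers the address by the fixed rational $\delta$, so all intermediate arithmetic stays rational and terminates.
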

\begin{proof}
On every product of the cells $\{0\},((k-1)\delta,k\delta]$,
all intermediate outputs have a single affine descriptor of the form
$c$ or $x_i+c$, with $c\in\delta\mathbb Z$; this is the
cellwise form of the induction proved in Lemma~\ref{lem:coding}.
Two such affine expressions, or their difference, extend continuously
to all included right faces of that product cell. If a strict failure
occurs on such a face, it persists in a relative neighborhood and
therefore at nearby rational points within the same cell. The same
density argument applies directly in the relative interior. Thus
every failed equality or pointwise bound has a rational witness. On singleton zero-coordinate cells use
rational zero coordinates. Direct evaluation at a rational tuple
terminates because each step decreases its recursion address by the
fixed positive rational $\delta$; all intermediate arithmetic is
rational.
\end{proof}

\begin{theorem}[Exact complexity of unbounded program equality]\label{thm:equiv}
For every positive integer $q$, extensional equality of two same-arity $\U_{1/q}$ programs is $\Pi^0_1$-complete under computable many-one reductions. Inequality is $\Sigma^0_1$-complete. The lower bound already holds when testing whether a unary program is identically zero.
\end{theorem}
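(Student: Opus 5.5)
The proof has two halves: an upper bound showing that inequality is $\Sigma^0_1$, and a lower bound reducing a $\Pi^0_1$-complete set to unary zero-testing.

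\emph{Upper bound.} Fix $q$ and two same-arity $\U_{1/q}$ programs $e_1,e_2$. By Lemma~\ref{lem:index-witness}, if they denote different functions then their values differ at some nonnegative rational tuple. Each program can be evaluated there exactly and with guaranteed termination, since every recursion address decreases by the fixed rational $\delta$. Inequality is therefore witnessed by an existential search: enumerate rational tuples, evaluate both programs, and halt when the values differ. This places inequality in $\Sigma^0_1$ and equality in $\Pi^0_1$. The witness lemma is what makes this work despite the real domain, so no further analysis is needed here.

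\emph{Lower bound.} I reduce the complement of the halting problem, $\{e:\varphi_e(e)\text{ diverges}\}$, to the set of unary programs that are identically zero. By Kleene's normal form, the predicate ``machine $e$ halts on input $e$ within $n$ steps'' is primitive recursive, and it is so uniformly in $e$. For each $e$ I effectively write down a primitive-recursive syntax tree for $g_e(n)=[\text{$e$ halts on $e$ within $n$ steps}]$, for instance by substituting the numeral $e$ into a fixed primitive-recursive program for the two-argument step predicate. Lemma~\ref{lem:pr-syntax} then computably produces a $\U_{1/q}$ syntax tree $\widetilde g_e$ that agrees with $g_e$ on integers.

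The remaining step, which I expect to be the main point needing care, is making the real function identically zero everywhere and not just on integers. I compose with the ceiling: set $f_e(x)=\widetilde g_e(\C_1(x))$. The ceiling is available in $\U_{1/q}$, for example by recursion along stride $1$ with base zero and transition $S$, or as in Lemma~\ref{lem:tools} in $\B_1\subseteq\B_{1/q}\subseteq\U_{1/q}$ via the classification theorem. Its values are integers, so $f_e(x)=g_e(\lceil x\rceil)$ holds on all reals. Hence $f_e\equiv0$ exactly when $e$ never halts on $e$. The map $e\mapsto f_e$ is computable because each construction is a uniform syntax operation.

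The same map gives the remaining reductions. Comparing $f_e$ with the constant zero program shows equality is $\Pi^0_1$-hard, and the complementary reduction shows inequality is $\Sigma^0_1$-complete. Two small checks remain: the syntax translation must be uniform in $e$, which follows by writing the numeral $e$ as iterated successors, and the arity must be fixed, which holds since everything is unary.
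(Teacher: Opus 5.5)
Your proposal is correct and follows the paper's proof. The upper bound is the same: rational witnesses plus exact evaluation. The lower bound is also the same: you translate a uniformly primitive-recursive bounded-halting predicate with Lemma~\ref{lem:pr-syntax} and precompose with an integer-valued function of the real input that takes every value in $\mathbb N_0$. The differences are minor. The paper uses blank-tape halting and precomposes with the depth clock $D_{1/q}$, which Lemma~\ref{lem:pr-syntax} already places in $\U_{1/q}$; this avoids your appeal to the classification theorem to obtain $\C_1$, and so keeps the result independent of the rational compilers.
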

\begin{proof}
By Lemma~\ref{lem:index-witness}, two programs differ on the real domain if and only if they differ on an effectively enumerable rational input tuple. Exact rational evaluation makes inequality recursively enumerable; hence equality is $\Pi^0_1$.

For hardness, fix an effective enumeration of Turing machines and let $H_e(n)$ be the predicate that machine $e$ halts within $n$ steps on blank input. The family $(e,n)\mapsto H_e(n)$ is uniformly primitive recursive (bounded finite simulation). By the integer-simulation construction of Lemma~\ref{lem:pr-syntax}, there is an effective unary $\U_{1/q}$ program for $G_e(y)=H_e(D_{1/q}(y))$. It is identically zero if and only if machine $e$ never halts. Nonhalting is $\Pi^0_1$-complete, yielding the required reduction. Its complement gives $\Sigma^0_1$-completeness of inequality.
\end{proof}

\begin{theorem}[Exact bounded-membership index complexity]\label{thm:bmem}
For each fixed integer $q\ge1$, the index set
$\mathrm{BMem}_{1/q}$ is $\Sigma^0_2$-complete under computable
many-one reductions. Hardness already holds for binary unbounded
programs. In particular, extensional bounded-algebra membership
is neither decidable nor recursively enumerable nor co-recursively
enumerable.
\end{theorem}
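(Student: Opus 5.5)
The plan is to prove the upper bound $\mathrm{BMem}_{1/q}\in\Sigma^0_2$ by guessing a bounded program and checking it with two $\Pi^0_1$ tests, and to prove hardness by reducing the $\Sigma^0_2$-complete set $\mathrm{Fin}=\{e:W_e\text{ finite}\}$. For hardness, each $e$ goes to a binary unbounded program whose growth is controlled by $|W_e|$, and Lemma~\ref{lem:onecoord} supplies the obstruction when $W_e$ is infinite. The final sentence of the theorem then follows, since a $\Sigma^0_2$-complete set is neither $\Sigma^0_1$ nor $\Pi^0_1$.

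\emph{Upper bound.} Fix $\delta=1/q$. We will show that $e\in\mathrm{BMem}_\delta$ iff there exists a candidate bounded description $c$ of the same arity satisfying two conditions. First, every majorant annotation of $c$ is semantically valid. Second, the function of $c$ equals that of $e$. The existential quantifier over the finite syntactic objects $c$ is a number quantifier. Stripping the majorants from $c$ gives a $\U_\delta$ program, because each bounded node denotes the same terminating solution as the unbounded one. Hence every recursion node of $c$ together with its declared majorant gives a pair of $\U_\delta$ programs, and validity is the finite conjunction of the pointwise inequalities $F\le M$ for these pairs. By Lemma~\ref{lem:index-witness}, a failure of such an inequality has a rational witness, and exact rational evaluation makes failures enumerable, so validity is $\Pi^0_1$. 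Equality with $e$ is $\Pi^0_1$ by Theorem~\ref{thm:equiv}. Altogether the condition is $\exists c\,(\Pi^0_1)$, hence $\Sigma^0_2$. The one point to check carefully is that the stage ordering of $c$ is syntactic, so that semantic validity of all annotations really yields an element of $\B_\delta$ by induction over stages. That should be immediate from Definition~\ref{def:algebras}.

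\emph{Hardness.} Let $N_e(n)$ be the number of elements enumerated into $W_e$ within $n$ steps. This is uniformly primitive recursive in $(e,n)$. Define
\[
 f_e(x,y)=x+\delta\,N_e\bigl(D_\delta(y)\bigr).
\]
Lemma~\ref{lem:pr-syntax} gives an effective $\U_\delta$ program for the integer $N_e(D_\delta(y))$, and Corollary~\ref{lem:microformal} supplies the variable shift $x+n\delta$ with $x$ frozen. So $e\mapsto f_e$ is a computable many-one map to binary unbounded programs.

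If $W_e$ is infinite, $f_e\notin\B_\delta$ by Lemma~\ref{lem:onecoord}. A bound $f_e\le x+C$ fails as $y$ grows, because $N_e(D_\delta(y))\to\infty$. A bound $f_e\le y+C$ fails at fixed $y$ as $x$ grows.

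If $W_e$ has $K$ elements, then $N_e(D_\delta(y))$ takes values in $\{0,\dots,K\}$ and is constant for $y$ beyond some grid threshold $n_0\delta$. It is therefore a finite step function of $D_\delta(y)$ with finitely many jumps at fixed $\delta$-grid thresholds. Those thresholds and the translations $T_{j\delta}$ ($j\le K$) are in $\B_\delta$ by Lemma~\ref{lem:toolkit}. Finitely many bounded selectors (Lemma~\ref{lem:select}) under the common earlier bound $x+K\delta=T_{K\delta}(x)$ then realize $f_e$ in $\B_\delta$.

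The step I expect to need the most care is the finite-$W_e$ realization. The selectors must be admitted by a global earlier majorant on all of $\RR^2$, not merely on grid points. I will handle this by building each branch $T_{j\delta}(x)$ globally and zero-guarding it on its band of $y$, as in Lemma~\ref{lem:unfold}; every branch is then bounded by $T_{K\delta}(x)$ everywhere, and that bound admits each recombination.
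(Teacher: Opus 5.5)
Your proposal is correct and follows essentially the same route as the paper: the $\Sigma^0_2$ upper bound guesses a candidate bounded description, then universally checks agreement and every node's majorant inequality on rational tuples (Lemma~\ref{lem:index-witness}), and hardness reduces $\mathrm{FIN}$ via $x$ shifted by the stage count of $W_e$ at $D_\delta(y)$, with Lemma~\ref{lem:onecoord} excluding the infinite case. The differences are cosmetic: you shift by $\delta N_e$ using Corollary~\ref{lem:microformal} rather than by the integer $g_e$ via $\operatorname{AddInt}$, and your zero-guarding in the finite case is harmless but unnecessary, since every branch $x+j\delta$ already lies globally below $x+K\delta$.
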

\begin{proof}
\emph{Upper bound.}
Enumerate all finite candidate bounded descriptions $p$ of the
appropriate arity, with stage-ordered annotations for their proposed
earlier majorants. Regardless of whether these annotations are
valid, forgetting them produces a terminating unbounded program.
Choose an effective enumeration of rational tuples for each arity.
There is a decidable predicate $R(e,p,r)$ asserting (i) that $e$
and $p$ have the required syntactic form and arity, (ii) that the
outputs of $e$ and $p$ agree on the first $r$ rational tuples, and
(iii) that every annotated recursion node of $p$ satisfies its
claimed majorant inequality on the first $r$ rational tuples of
\emph{that node's full input arity}. The input enumeration and
majorant checks are effective by Lemma~\ref{lem:index-witness}.
The same lemma implies that universal success on the rational
tuples is equivalent to success on all real tuples. Therefore
\[
 e\in\mathrm{BMem}_\delta
 \quad\Longleftrightarrow\quad
 \exists p\ \forall r\ R(e,p,r),
\]
which is a $\Sigma^0_2$ description.

\emph{Lower bound.}
Let $W_e$ denote the $e$th computably enumerable set in a standard
effective enumeration. The classical index set
$\mathrm{FIN}=\{e:W_e\text{ is finite}\}$ is $\Sigma^0_2$-complete
\cite{SoareFIN}. Choose a standard primitive-recursive stage
approximation $W_{e,n}$, with $W_{e,0}=\varnothing$, and define
$g_e(n)=|W_{e,n}|$, counting distinct elements; the function
$(e,n)\mapsto g_e(n)$ is primitive recursive and monotone in $n$.
By Lemma~\ref{lem:pr-syntax}, uniformly in $e$ one constructs a
binary $\U_\delta$ program computing
\[
 F_e(x,y)=\operatorname{AddInt}\bigl(x,g_e(D_\delta(y))\bigr)
 =x+g_e(D_\delta(y)).
\]
If $W_e$ is finite, $g_e(n)$ is eventually constant, say with
maximum $C$, and so $F_e$ selects among the finitely many
previously available expressions $x,x+1,\ldots,x+C$. The
finite thresholds used for selection are boundedly definable:
for fixed $N\ge1$,
$[D_\delta(y)\ge N]=[\rho_\delta^{[N-1]}(y)>0]$.
Finite nonzero tests and bounded selectors can be implemented
by step recursion with the common previously available majorant
$x+C$. Hence $F_e\in\B_\delta$. This is an extensional
existence argument; there is no requirement to compute $C$
uniformly from $e$.

If $W_e$ is infinite, then $g_e$ is unbounded. By the
one-coordinate domination invariant for bounded programs
(Lemma~\ref{lem:onecoord}), every binary $f\in\B_\delta$
is bounded globally either by $x+C$ or by $y+C$, for some fixed
$C$. The function $F_e$ has neither kind of bound:
fix $x=0$ and let $y$ grow to defeat $x+C$, or fix $y=0$ and
let $x$ grow to defeat $y+C$. Therefore $F_e\notin\B_\delta$.
Thus $e\in\mathrm{FIN}$ if and only if the effectively computed
program code for $F_e$ lies in $\mathrm{BMem}_\delta$.
This gives a computable many-one reduction from $\mathrm{FIN}$.
\end{proof}

\begin{corollary}[All positive rational strides]\label{cor:indexall}
For every reduced rational $s=a/q>0$, extensional equality of $\U_s$ programs is $\Pi^0_1$-complete and the index set of $\U_s$ programs computing functions in $\B_s$ is $\Sigma^0_2$-complete.
\end{corollary}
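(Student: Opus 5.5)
The plan is to reduce the case of a general reduced stride $s=a/q$ to the unit-numerator case $\delta=1/q$, where Theorems~\ref{thm:equiv} and~\ref{thm:bmem} apply. I would do this through effective, semantics-preserving program translations in both directions.

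The first step is the translation itself. By Theorem~\ref{thm:classification} we have $\U_s=\U_{1/q}$ and $\B_s=\B_{1/q}$. More concretely, the proofs supply effective compilers in each direction.
\begin{itemize}[leftmargin=2em]
\item From $\U_s$ to $\U_{1/q}$: Lemma~\ref{lem:unbounded-phase} simulates $a\cdot(1/q)$-recursion by $1/q$-recursion.
\item From $\U_{1/q}$ to $\U_s$: the unrestricted version of Theorem~\ref{thm:block} is used, with $t=s=a\cdot(1/q)$. Its ingredients $\rho_{1/q}$, $T_{k/q}$ and the threshold tests are given by fixed programs from Lemma~\ref{lem:toolkit}.
\end{itemize}
Both are computable syntax transformations by structural recursion on program trees. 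Each preserves arity and the denoted function, so we obtain computable maps $\tau:\U_s\text{-indices}\to\U_{1/q}\text{-indices}$ and $\sigma$ in the reverse direction with $\llbracket\tau(e)\rrbracket=\llbracket e\rrbracket$ and $\llbracket\sigma(e)\rrbracket=\llbracket e\rrbracket$.

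For equality, the upper bound can be obtained either way. One option is to reprove Lemma~\ref{lem:index-witness} directly for stride $s$: Lemma~\ref{lem:affine} gives cellwise affine forms with offsets in $M_{\{s\}}\subseteq(1/q)\mathbb Z$, and evaluation on rationals is exact and terminates. The other is to compose with $\tau$ and use the $\Pi^0_1$ upper bound of Theorem~\ref{thm:equiv}. For hardness, map the zero-test instance $G_e$ built in $\U_{1/q}$ through $\sigma$. Since the pair $(e,\text{zero program})$ goes to $(\sigma(e),\sigma(\text{zero program}))$, this is a many-one reduction.

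Membership is handled the same way. The index set $\{e:\llbracket e\rrbracket\in\B_s\}$ equals $\{e:\tau(e)\in\mathrm{BMem}_{1/q}\}$ because $\B_s=\B_{1/q}$, so it is $\Sigma^0_2$. For hardness, the reduction $e\mapsto\sigma(F_e)$ from $\mathrm{FIN}$ works. Alternatively, one can rebuild $F_e$ directly in $\U_s$, since Lemma~\ref{lem:pr-syntax} adapts with the clock $D_s$. In that case the one-coordinate obstruction (Lemma~\ref{lem:onecoord}) for $\B_s$ handles the infinite case, and the finite case uses bounded selectors with thresholds $\rho_s^{[N-1]}$.

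The main obstacle is confirming that the compilers are genuinely \emph{uniform and computable} for unbounded programs. Lemma~\ref{lem:unbounded-phase} uses phase tests $D_h(z)\bmod n$ and fixed iterates, and Theorem~\ref{thm:block} uses the fixed finite unfolding. All of these are fixed programs depending only on $a,q$, so the composite translation is primitive recursive on syntax trees. I would check this explicitly, noting that no majorant validity needs to be decided, since only unbounded programs are translated.
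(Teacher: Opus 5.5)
Your proposal is correct and follows essentially the paper's route. Hardness is transferred along the effective compiler $\U_{1/q}\to\U_s$ (the unbounded block simulation) together with the extensional identity $\B_s=\B_{1/q}$, and the upper bounds come either from rerunning the rational-witness tests at stride $s$, as the paper does, or equivalently by pulling back along your compiler $\tau$. Your optional direct rebuild of $F_e$ with the clock $D_s$ is too quick as stated, because for $a>1$ the $s$-orbit of an integer need not pass through all smaller integers (you would encode $n$ as $ns$ instead); your main argument does not depend on it.
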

\begin{proof}
The rational-witness upper bounds apply directly at stride $s$: the local cell boundaries and offsets lie in $(1/q)\mathbb Z$, rational evaluation is exact, and candidate bounded programs admit the same universal rational tests. For hardness, the constructive rational normalization of Theorem~\ref{thm:classification} gives an effective compilation of programs from $\U_{1/q}$ into $\U_s$, while $\B_{1/q}=\B_s$ as extensional classes. The reductions in Theorems~\ref{thm:equiv} and~\ref{thm:bmem} therefore transfer. This argument invokes the rational classification for the extension beyond reciprocal-integer strides; the two original complexity theorems for $1/q$ do not require it.
\end{proof}

\begin{remark}[Independence and scope]\label{rem:index-scope}
The upper bound uses an existential \emph{bounded program} and a
universal test of its semantic majorant obligations; it does not
purport to decide whether an annotated program is admissible.
The reduction uses integer simulation, finite thresholds, and the
one-coordinate bound invariant, but \emph{not} either of the
bounded rational-stride operator compilers or the unproved literal
bounded-intersection identity. Hence the index-set theorem can be
audited independently of the full rational classification.
\end{remark}

\section{Remaining questions and scope}\label{sec:limitations}
The unresolved rational question is whether
$\B_{1/q}\cap\B_{1/v}=\B_{1/\gcdop(q,v)}$.
Theorem~\ref{thm:equivalence} and Proposition~\ref{prop:doubling} identify
it with equality of the integer predicate levels, equivalently factor-two
unpadding or doubling-argument closure. Neither the numerical separation
of $\E_0$ and $\E_1$ nor a bound on the final Boolean output settles it.
The clipped simulation changes precision and does not provide a unit-grid
realization. Clote discusses the low relational gaps as open at the time
of his exposition \cite{Clote1999}.

The classification concerns rational truncated translations with finite
termination. Irrational and nonlinear descents need different invariants;
Zeno descent needs a separate solution semantics. The index-set results
refer to effective syntax and exact rational evaluation, not a procedure
for computing minimal majorants.

\section*{Declaration of generative AI and AI-assisted technologies in the manuscript preparation process}
During the preparation of this work, the author used ChatGPT (OpenAI) for
copyediting, grammatical correction, and language polishing of the manuscript.
The author reviewed and edited the resulting text as needed and takes full
responsibility for the content of the article.

\end{document}